\documentclass[a4paper]{amsart}

\usepackage{amsmath,amsfonts,amssymb,bm}

\usepackage{cancel}
\usepackage{tikz-cd}
\usepackage{float}
\usepackage[all]{xy}

\usepackage[utf8]{inputenc}
\usepackage[english]{babel}

\usepackage{comment}

\usepackage[numbers,sort&compress]{natbib}
\usepackage{graphicx}
\usepackage{appendix}

\usepackage{hyperref}
\hypersetup{
   colorlinks   =  true,
    linkcolor    = cyan,
    citecolor    = red,
     urlcolor	=magenta,     
}

\usepackage{amsthm}

\newtheorem{definition}{Definition}[section]
\newtheorem{lemma}[definition]{Lemma}
\newtheorem{theorem}[definition]{Theorem}
\newtheorem{proposition}[definition]{Proposition}
\newtheorem{corollary}[definition]{Corollary}
\newtheorem{remark}[definition]{Remark}
\newtheorem{example}[definition]{Example}

\usepackage{cite}

           {\nolinebreak \hfill $\Box$ \end{trivlist}}

\def \R{\mathbb{R}}

\numberwithin{equation}{section}

\title[]{Fouling maps and polynomial first integrals from symmetric tensor fields}
\author{Rafael Azuaje$\;^1$, Juan Carlos  Marrero$\;^2$, Edith Padr\'on$\;^2$}

\date{\today}
\thanks{AMS Mathematics Subject Classification (2020): 70G45, 70H15, 53Z05}
\thanks{Keywords: Canonoid transformations, polynomial constants of motion, mechanical Hamiltonian functions, symplectic structures}

\begin{document}

\maketitle

\vspace{-20pt}
\begin{center}

{\small\it $\;^1$ Department of Physics, Faculty of Nuclear Sciences and Physical Engineering, Czech Technical University in Prague. Břehová 7, 115 19 Praha 1, Czech Republic}\\
{\small\it e-mail: azuajraf@cvut.cz}
\\[5pt]

{\small\it $\;^2$ ULL-CSIC Geometr\'{\i}a Diferencial y Mec\'anica Geom\'etrica, Departamento de Matem\'aticas, Estad\'{\i}stica e Investigaci\'on Operativa
and Instituto de Matem\'aticas y Aplicaciones (IMAULL)}\\{\small\it University of La Laguna, Spain}\\
{\small\it e-mail: jcmarrer@ull.edu.es, mepadron@ull.edu.es}
\end{center}

\begin{abstract}
Under the framework of time-independent Hamiltonian mechanics on the cotangent bundles $T^*Q$ of the configuration spaces $Q$ of mechanical systems, we introduce the concept of fouling map as a non-invertible generalization of the so-called fouling transformations --canonoid transformations preserving configuration coordinates--. We develop a tensorial method for constructing polynomial fouling maps. We show that each such map induces a $(1,1)$-tensor field invariant under the Hamiltonian flow, whose traces of its powers are polynomial constants of motion. For mechanical Hamiltonian functions --the kinetic energy plus the potential energy on a semi-Riemannian configuration space $(Q,g)$--, we completely characterize polynomial bundle maps arising from symmetric $(k+1,0)$-tensor fields and derive the conditions ensuring their fouling nature. Several explicit examples on the Euclidean plane and on the 2-sphere illustrate the method.
\end{abstract}


\tableofcontents

\section{Introduction}
\label{sec:intro}

In Classical Mechanics, the well-known canonical transformations play a fundamental role; indeed, they connect solutions of the Hamilton-Jacobi equation with Hamilton's equations of motion.  From a coordinate (local) approach, they are described as (invertible) coordinate transformations that preserve the canonical Poisson bracket \citep{landau1982mechanics, GPS2002,Calkin96}; while from a geometric (global) approach, they are defined as diffeomorphisms on phase space preserving the underlying geometric structure; in particular, for Hamiltonian systems on symplectic manifolds, they are symplectomorphisms on phase space \citep{Asorey1983generalized,carinena1985canonical,azuaje2025canonical}. The concept of canonoid transformations for Hamiltonian systems was introduced in \citep{SC72} as coordinate transformations that preserve the form of the equations of motion; i.e., a coordinate transformation is canonoid if the equations of motion for the new coordinates are Hamiltonian; of course, every canonical transformation is canonoid; however, the notion of a canonoid transformation is more general than that of a canonical transformation since the underlying geometric structure need not be preserved. 

Canonoid transformations have been studied from a geometric perspective in \citep{CR88,CR89,CFR2013}; they are presented as diffeomorphisms on phase space preserving the Hamiltonian structure. Canonoid transformations are closely related to bi-Hamiltonian structures \citep{MM84,KM96,MMSV2002,magri2004eight,FP2003} --alternative Hamiltonian structures describing the same dynamics-- and, consequently, one can find constants of motion associated with each canonoid transformation \citep{CR88,boleantu2013new,azuaje2023canonical}. Remarkably, \citep{azuaje2023canonical} provides a constructive method to find the associated constants of motion for a given canonoid transformation. Indeed, given a canonoid transformation for a Hamiltonian system, it gives rise to a $(1,1)$-tensor field on phase space that remains invariant under the Hamiltonian dynamical flow, so the traces of its powers are constants of motion. A relevant feature of such constants of motion is that they are polynomials with respect to the momenta, provided that the coordinate expression of the canonoid transformation is polynomial with respect to  the momenta (for details, see \citep{azuaje2023canonical}). Polynomial constants of motion (or integrals of motion) have been the object of study for many decades since they are relevant in the study of integrable and superintegrable systems \citep{kolokol1983geodesic,hietarinta1987direct,tsiganov2005family,daskaloyannis2006unified,marquette2007polynomial,miller2013classical} (see \citep{bolsinov2004integrable} for the integrability of geodesic flows).

Canonoid transformations have been further extended beyond symplectic geometry, in fact, they have also been studied for Hamiltonian systems on Poisson manifolds \citep{RS2015}, contact manifolds \citep{azuaje2024scaling} and more recently on locally conformal symplectic manifolds \citep{azuaje2026canonical}. Despite this extensive theoretical development, only a few explicit examples of canonoid transformations are known. In \citep{CR88}, following the theory of generating functions for canonical transformations, a method is proposed to construct canonoid transformations for a given Hamiltonian system from generating functions; however, as noted there, the resulting equations are significantly more difficult than those for canonical transformations. For this reason, attention was restricted to the so-called fouling transformations, a subclass of canonoid transformations that preserve the position coordinates \citep{landolfi2007certain,tempesta2002quantum}. Even then, the corresponding equations remain challenging to solve, and only two explicit examples are presented in \citep{CR88}. 

On the other hand, it is well known that there is a one-to-one correspondence between symmetric $(k,0)$-tensor fields on a manifold $Q$ and homogeneous functions of degree $k$ (with respect to the momenta) on $T^*Q$. More precisely, for each $(k,0)$-tensor $T$ on $Q$, these is a function
$
T^{\mathrm{hom}}:T^*Q\longrightarrow\mathbb{R}
$
satisfying
\[
\Delta(T^{\mathrm{hom}})=k\,T^{\mathrm{hom}},
\]
where $\Delta$ denotes the Liouville vector field on $T^*Q$.

Now, let $Q$ be endowed with a semi-Riemannian metric $g$. Under the above correspondence, Killing symmetric tensor fields are in one-to-one correspondence with homogeneous functions (with respect to the momenta) that are first integrals of the geodesic flow. Recall that a Killing tensor $T$ is a natural generalization of the notion of a Killing vector field and is a tensor on $Q$ characterized by the condition
\[
\operatorname{Sym}(\nabla T)=0,
\]
where $\nabla$ is the Levi-Civita connection of $g$, and $\operatorname{Sym}(\nabla T)$ denotes the symmetrization of $\nabla T$.

In the case $k=1$, if a Lie group acts on $Q$ by isometries, then Noether's theorem states that the associated cotangent lift admits a momentum map whose components are first integrals of the geodesic flow that are linear with respect to the momenta. Under the above correspondence, these first integrals determine Killing vector fields on $Q$.

The framework of this paper is Hamiltonian mechanics on symplectic manifolds, specifically, this paper considers Hamiltonian systems on the cotangent bundle $T^{*}Q$ of a smooth manifold $Q$; it is well known that the cotangent bundle $T^{*}Q$ of each smooth manifold $Q$ has a canonical symplectic structure, which makes it the natural phase space for time-independent Hamiltonian Mechanics. {Given a Hamiltonian function $H\in C^{\infty}(T^{*}Q)$, we introduce the concept of fouling maps for $H$ as bundle maps over the identity that preserve (at least locally) the Hamiltonian nature of the Hamiltonian vector field $X_{H}$}. Such a notion generalizes (from a geometric approach) the concept of fouling transformation; in fact, we prove that given a fouling map for a given Hamiltonian function $H$ on $T^{*}Q$, it induces a $(1,1)$-tensor field on $T^{*}Q$ whose traces of its powers are constants of motion of the Hamiltonian dynamics determined by $H$. 

The main aim of this paper is to provide a procedure that employs tensorial objects to obtain fouling maps for mechanical Hamiltonian functions --Hamiltonians representing the total energy of physical systems \citep{Arnold78,AMRC2008}: Kinetic energy plus potential energy-- on semi-Riemannian configuration spaces $(Q,g)$; in fact, in this case, we consider fouling maps defined by $(k+1,0)$-tensor fields on $Q$, and, more generally, finite sums of such maps. One of the main results of this paper (see theorem \ref{theoremPsi}) is the complete characterization of fouling maps for a given mechanical Hamiltonian function on $T^{*}Q$ that are finite sums of bundle maps over the identity defined by $(k+1,0)$-tensor fields. The local coordinate expressions of these last maps are polynomials on the momentum coordinates, so the corresponding associated constants of motion are polynomials on the momentum coordinates. Several illustrative examples are presented.

This paper is organized as follows: In Section \ref{secmotivation}, we briefly review the construction of homogeneous first integrals of geodesic flows by means of Killing tensors. In section \ref{sec2}, we introduce the notion of {\it fouling maps} for a given Hamiltonian system; this concept generalizes fouling transformations from a geometric approach; we show how to obtain constants of motion by considering the traces of the powers of an induced operator field. The fundamental relation between homogeneous bundle maps over the identity and symmetric mixed tensors on the base manifold is described in Section \ref{sec3}. In Section \ref{sec3}, we characterize polynomial fouling maps for mechanical Hamiltonian functions, i.e., maps whose local expressions in canonical coordinates are polynomial functions with respect to the momenta; from this kind of maps, we obtain polynomial constants of motion. Several examples of polynomial fouling maps for mechanical Hamiltonian functions are presented in the remaining sections. Specifically, Section \ref{sec4} is restricted to Hamiltonian systems on the Euclidean plane, Section \ref{sec5} considers Hamiltonian systems on the 2-sphere (see \citep{carinena2005central,carinena2008harmonic,diacu2012n} for Hamiltonian systems on curved spaces), and Section \ref{sec6} considers a Riemannian  manifold with a Liouville metric.

\section{A motivation: homogeneous first integrals of the geodesic flow}
\label{secmotivation}
Let $(Q,g)$ be a Riemannian manifold, and we denote by $X_K\in {\mathfrak X}(T^*Q)$ the geodesic flow of the metric $g$, that is, the Hamiltonian vector field of the kinetic energy $K:T^*Q\to {\Bbb R}$ given by   
$$K(\alpha_q)=
\frac{1}{2}g^*(\alpha_q,\alpha_q)\;\; \mbox{ with }\alpha_q\in T_q^*Q,$$
$g^*:T^*Q\times_QT^*Q\to {\Bbb R}$ being the dual metric of $g$. We recall that for a Hamiltonian function $H:T^*Q\to \mathbb{R},$ the Hamiltonian vector field  $X_H$ on $T^*Q$ is characterized by 
$$i_{X_H}\omega_Q=dH,$$
where $\omega_Q$ is the canonical symplectic $2$-form on $T^*Q.$

 In the literature, one can find some well-known methods to obtain first integrals for $X_K$ that are linear with respect to  the momenta:  to consider Lie group actions by isometry on the manifold $Q$ or to look for Killing vector fields on $Q.$ 

In the first case, if $\phi:G\times Q\to Q$ is an action for isometries, then $K$ is invariant with respect to the cotangent lift action $T^*\phi:G\times T^*Q\to T^*Q$ of $\phi$. Moreover, from Noether Theorem, for each element $\xi$ of the Lie algebra ${\mathfrak g}$ of $G$, we have a  linear (in the momenta) constant of motion $J_\xi:T^*Q\to {\Bbb R}$ for $X_K$  given by $$ J_\xi(\alpha_q)=\alpha_q(\xi_Q(q)), \mbox{ with }q\in Q\mbox{ and }\alpha_q\in T_q^*Q,$$
where $\xi_Q$ is the infinitesimal generator of $\phi$ for $\xi\in {\mathfrak g}.$

On the other hand, in the second method, if $X$ is a vector field on $Q,$ then the linear (in the momenta) function $$f_X:T^*Q\to {\Bbb R},\;\; f_X(\alpha_q)=\alpha_q(X(q))$$ is a first integral for $X_K$ if and only if 
$${\mathcal L}_Xg=0,$$
i.e., $X$ is Killing. 

But, how can we obtain homogeneous (in the momenta) first integrals of degree greater than one? One of these first integrals  is the Hamiltonian itself, which is  quadratic (i.e., a homogeneous first integral of degree 2).

A natural generalization of the Killing vector field method to higher-order first integrals is to consider Killing tensors. We now briefly review this method: 

Let $T:T^*Q\times_Q\dots \times _QT^*Q\to {\Bbb R}$ be a symmetric $(k,0)$-tensor on $Q.$ Then $T$ induces an homogeneous (in the momenta) function $T^{hom}:T^*Q\to \mathbb{R}$ on $T^*Q$  of degree $k$ given by 
$$
T^{hom}(\alpha_{q})=\frac{1}{k!}T_q(\alpha_{q},\ldots^{(k}\ldots,\alpha_{q}).
$$
Note that if the local expression of $T$ is 
$$T=T^{i_1\dots i_k}\partial_{q^{i_1}}\otimes \dots \otimes\partial_{q^{i_k}}$$
with $T^{i_1\dots i_k}$ local functions on $Q$, 
then the function $T^{hom}$ is given by 
$$
T^{hom}=\frac{1}{k!}T^{i_1\dots i_k}p_{i_1}\dots p_{i_k}.
$$

Since
$$T^{i_{\sigma(1)}\cdots i_{\sigma(k)}}=T^{i_1\cdots i_k},\mbox{ for all }\sigma \in \mathfrak{S}_k,$$
where $\mathfrak{S}_k$ is the group of permutations of order $k$, one deduces that $T^{hom}$ is $k$-homogeneous, that is, 
$\Delta(T^{hom})=kT^{hom},$ with $\Delta$ the Liouville vector field. We recall that the local expression of $\Delta\in {\mathfrak X}(T^*Q)$ is 
\begin{equation}\label{Liouville}
\Delta=p_i\partial_{p_i}.
\end{equation}

Equivalently, we have that 
$$T^{hom}(\lambda \alpha_q)=\lambda^kT^{hom}(\alpha_q), \mbox{ for }\lambda\in \R \mbox{ and } \alpha_q\in T_q^*Q,$$
and, for every $q\in Q$, the map 
$$T_q^*Q\times\cdots ^{(k}\cdots \times T_q^*Q\to \R$$ defined by 
\begin{equation}\label{m1}
\begin{array}{rcl}
(\alpha_q^k,\dots, \alpha_q^k)&\to& \displaystyle\frac{1}{k!}\left((-1)^{k+1} \displaystyle\sum_{i=1}^k T^{hom}(\alpha_q^i) + (-1)^{k} \sum_{i=1, i<j}^k T^{hom}(\alpha_q^i+ \alpha_q^j)+\dots \right.\\&&\left.\displaystyle-\sum_{i=1}^k T^{hom}(\alpha_q^1+\alpha_q^2+\dots +\alpha_q^{i-1}+\alpha_q^{i+1}+\dots +\alpha_q^k ) + 
 T^{hom}(\alpha_q^1+\dots +\alpha_q^k)\right)
 \end{array}
 \end{equation}
is multilinear. In fact, (\ref{m1}) defines  a symmetric tensor of type $(k,0)$ on $Q$ which is $T.$

Therefore, there is a one-to-one correspondence between homogeneous functions on $T^*Q$ of degree $k$ and symmetric $(k,0)$-tensors on $Q$. 

Now, we consider the geodesic flow $X_K$ of the Riemannian manifold  $(Q,g)$.   A  generalization of the previous result, relating Killing vector fields to first integrals that are linear in the momenta, states that the homogeneous function $T^{hom}$ is a first integral of $X_K$ if and only if $T$ is Killing, i.e., 
$$Sym(\nabla T)=0,$$
where $\nabla$ is the Levi-Civita connection of $G$ and $Sym(\nabla T)$ is the symmetrization of $\nabla T.$ Killing tensors have been extensively studied in relation to constants of motion for Hamiltonian systems \citep{thompson1986killing,bolsinov2003geometrical,benenti2016separability}.

In this paper, we develop a new method for deriving homogeneous, and more generally, polynomial first integrals of a given mechanical Hamiltonian system, not necessarily of kinetic type, through the construction of an operator field induced by a fouling map. 

\section{Bundle maps over the identity on the cotangent bundle and Hamiltonian systems}
\label{sec2}

First, we present a brief review of the properties of bundle maps over the identity on the cotangent bundle of a smooth manifold \citep{AMR88,Lee2012,marsden2013introduction}. 

Let $Q$  be a manifold and $\pi_Q: T^*Q\to Q$ the corresponding cotangent bundle.  A smooth map $\Psi:T^*Q\to T^*Q$ is said to be a bundle map over the identity if the following diagram is commutative 
\begin{center}
$$
\xymatrix{T^*Q \ar[rr]^{{\Psi}}\ar[dr]_{\pi_Q}&&T^*Q\ar[dl]^{\pi_Q}\\&Q&}
$$
\end{center}

These kinds of maps are related to the semi-basic $1$-forms $\Theta$  on $T^*Q.$ We recall that $\Theta\in \Omega^1(T^*Q)$ is semi-basic if and only if
$$\langle\Theta(\alpha), X_{\alpha}\rangle=0, \mbox{ for all $\alpha\in T^*Q$ and $X_{\alpha}\in V_{\alpha}\pi_Q.$}$$

Here $V\pi_Q$ is the vertical sub-bundle of $\pi_Q.$ 

\begin{proposition}
There is a one-to-one correspondence between bundle maps of the cotangent bundle $T^*Q$ over the identity and semi-basic $1$-forms on $T^*Q.$
\end{proposition}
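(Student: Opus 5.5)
The correspondence will be the pullback of the Liouville form. Let $\lambda_Q\in\Omega^1(T^*Q)$ denote the tautological (Liouville) $1$-form, so that $\omega_Q=-d\lambda_Q$ and
$$\langle \lambda_Q(\alpha),X_\alpha\rangle=\langle \alpha, T_\alpha\pi_Q(X_\alpha)\rangle, \qquad \alpha\in T^*Q,\ X_\alpha\in T_\alpha(T^*Q).$$
Given a bundle map $\Psi$ over the identity, define $\Theta_\Psi:=\Psi^*\lambda_Q$. Because $\pi_Q\circ\Psi=\pi_Q$, we get
$$\langle\Theta_\Psi(\alpha),X_\alpha\rangle=\langle\Psi(\alpha),T_\alpha\pi_Q(X_\alpha)\rangle.$$
This vanishes whenever $X_\alpha\in V_\alpha\pi_Q$, so $\Theta_\Psi$ is semi-basic. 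The displayed formula also shows that $\Theta_\Psi$ only depends on $\Psi$ pointwise.

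For the inverse, take a semi-basic $\Theta$ and fix $\alpha\in T_q^*Q$. The map $T_\alpha\pi_Q:T_\alpha(T^*Q)\to T_qQ$ is surjective with kernel $V_\alpha\pi_Q$. Since $\Theta(\alpha)$ annihilates that kernel, it descends to a unique covector $\Psi_\Theta(\alpha)\in T^*_qQ$ satisfying
$$\langle\Theta(\alpha),X_\alpha\rangle=\langle\Psi_\Theta(\alpha),T_\alpha\pi_Q(X_\alpha)\rangle.$$
By construction $\pi_Q\circ\Psi_\Theta=\pi_Q$.

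Smoothness is checked in canonical coordinates $(q^i,p_i)$. A semi-basic form is locally $\Theta=\Theta_i(q,p)\,dq^i$ with $\Theta_i$ smooth, since the $dp_i$-components must vanish on the vertical vectors $\partial_{p_i}$. In these coordinates $\Psi_\Theta(q,p)=(q^i,\Theta_i(q,p))$, which is smooth. Conversely, if $\Psi(q,p)=(q,\Psi_i(q,p))$, then $\Psi^*\lambda_Q=\Psi^*(p_i\,dq^i)=\Psi_i\,dq^i$.

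The two constructions are mutually inverse: the covector characterizing $\Psi_{\Theta_\Psi}(\alpha)$ is $\Psi(\alpha)$, and $\Theta_{\Psi_\Theta}$ agrees with $\Theta$ on every tangent vector by the defining identity. The only delicate point is the descent step, i.e.\ that $T_\alpha\pi_Q$ is a submersion with kernel exactly $V_\alpha\pi_Q$, so that the induced covector is well-defined and unique. This is routine for vector bundle projections, so I expect no real obstacle.
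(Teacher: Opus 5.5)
Your proposal is correct and follows the paper's proof: the correspondence is $\Psi\mapsto\Psi^*\Theta_Q$ (your $\lambda_Q$), semi-basicity follows from $\pi_Q\circ\Psi=\pi_Q$, and the inverse comes from descending $\Theta(\alpha)$ through the surjection $T_\alpha\pi_Q$, whose kernel is $V_\alpha\pi_Q$. You also make two points explicit that the paper leaves implicit: the smoothness of $\Psi_\Theta$, via the local form $\Theta=\Theta_i\,dq^i$ (which also corrects the typo $\theta_i\,dp_i$ in the paper's subsequent remark), and the check that the two constructions are mutually inverse.
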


\begin{proof}
The correspondence 
$$\Xi:\{\Psi:T^*Q\to T^*Q/\pi_Q\circ \Psi=\pi_Q\}\to
\{\Theta\in \Omega^1(T^*Q)/\mbox{ $\Theta$ is  semi-basic}\}$$ is given by 
$$\Xi(\Psi)=\Psi^*(\Theta_Q),$$
where $\Theta_Q\in \Omega^1(T^*Q)$ is the Liouville $1$-form on $T^*Q,$ i.e. 
$$\langle\Theta_Q(\alpha), X_{\alpha}\rangle=\langle\alpha, T_\alpha\pi_Q(X_{\alpha})\rangle,\mbox{ for all }\alpha\in T^*Q.$$

Note that 
$$\begin{array}{rcl}
\langle(\Psi^*(\Theta_Q))(\alpha), X_{\alpha}\rangle&=&\langle(\Theta_Q(\Psi(\alpha)), T_{\alpha}\Psi(X_{\alpha})\rangle=\langle\Psi({\alpha}), (T_{\Psi(\alpha)}\pi_Q\circ T_{\alpha}\Psi)(X_{\alpha})\rangle\\[8pt] &=&\langle(\Psi({\alpha}), (T_{\alpha}\pi_Q)(X_{\alpha})\rangle=\langle\Theta_Q(\Psi(\alpha)),X_\alpha\rangle.
\end{array}$$

Then, $\Xi(\Psi)$ is a semi-basic $1$-form on $T^*Q$ and $(\Psi^*(\Theta_Q))(\alpha)=\Theta_Q(\Psi(\alpha))$

On the other hand, the construction of the inverse of $\Xi$ 
is as follows: 
the semi-basic $1$-form  $\Theta\in \Omega^1(T^*Q)$ induces the bundle map $\Psi_{\Theta}:T^*Q\to T^*Q$ given by 

$$\langle\Psi_{\Theta}(\alpha),v\rangle=\langle\Theta(\alpha), Y_{\alpha}\rangle,$$
with $T_{\alpha}\pi_Q(Y_{\alpha})=v.$ 

This map is well-defined because if $T_{\alpha}\pi_Q(Z_{\alpha})=v,$ then
$Z_{\alpha}-Y_{\alpha}\in V_{\alpha}\pi_Q$. Therefore, 
$\langle\Theta(\alpha),Z_{\alpha}-Y_{\alpha}\rangle=0$ 
and 
$$\langle\Theta(\alpha), Y_{\alpha}\rangle=\langle\Theta(\alpha), Z_{\alpha}\rangle.$$
\end{proof}

\begin{remark}
If $({\bf q})$ are local  coordinates on $Q$, $({\bf q},{\bf p})$ are the corresponding local coordinates on $T^*Q$, and the bundle map is given locally by 
$$\Psi({\bf q},{\bf p})=(q^i,\psi_i({\bf q},{\bf p})),$$
then the associated semi-basic $1$-form is just 
$$\Psi^*\Theta_Q(q^i,p_i)=\psi_i({\bf q},{\bf p})dq^i.$$

Conversely, if $\Theta({\bf q},{\bf p})=\theta_i({\bf q},{\bf p})dp_i$ is the local expression of a semi-basic $1$-form on $T^*Q$, then the corresponding bundle map is 
$$\Psi({\bf q},{\bf p})=(q^i,\theta_i({\bf q},\bf p)).$$
\end{remark}

{\it A canonical transformation on $T^*Q$} is a diffeomorphism (not necessarily a bundle map over the identity) 
$\Psi:T^*Q\to T^*Q$ such that $\Psi^*\omega_Q=\omega_Q$, where $\omega_{Q}$ is the canonical symplectic form on $T^{*}Q$. More generally, {\it a canonoid transformation} for the Hamiltonian dynamics determined by the Hamiltonian vector field $X_{H}\in {\mathfrak X}(T^*Q)$ of the function $H:T^*Q\to {\Bbb R},$ is a diffeomorphism  (not necessarily a bundle map over the identity) $\Psi:T^*Q\to T^*Q$ for which there exists a function $K:T^*Q\to \mathbb{R}$ such that 
$$i_{X_{H}}\Psi^*(\omega_Q)=dK$$
(see \citep{CR88,CFR2013}). In this case, if $\omega_{\Psi}=\Psi^*(\omega_Q),$ we have that 
$${\mathcal L}_{X_H}\omega_\Psi=0.$$
There is a special type of canonoid transformation, those preserving the base manifold $Q$ are called fouling transformations. The  next notion is a generalization of this kind of transformation. 

\begin{definition}\label{defc}
{A fouling map for the Hamiltonian dynamics determined by the Hamiltonian vector field $X_{H}\in {\mathfrak X}(T^*Q)$ of the function $H:T^*Q\to {\Bbb R}$ (for short, a fouling map for $H\in C^{\infty}(T^{*}Q)$),} is a bundle map $\Psi:T^{*}Q\longrightarrow T^{*}Q$ over the identity for which the  $2$-form $\omega_\Psi=\Psi^*(\omega_Q)$ is invariant with respect to the Hamiltonian vector field $X_H$ of $H$,  i.e.,
\begin{equation}\label{compatible}
{\mathcal L}_{X_H}\omega_\Psi=d(i_{X_{H}}\omega_\Psi)=0.
\end{equation}
\end{definition} 

\begin{remark}
The concept of a fouling map is more general than that of a fouling transformation. Indeed, a fouling map is not necessarily a diffeomorphism and  it only requires that the 1-form $i_{X_H}\omega_\Psi$ is closed but not necessarily exact. In this case, the closed $2$-form $\omega_\Psi$ is not necessarily symplectic. In the case of fouling transformations, however, the 2-form $\omega_\Psi$ is symplectic.
\end{remark}

On $T^*Q,$  Hamiltonian vector fields of basic functions $f\circ \pi_Q,$ with $f:Q\to \R,$ and of linear functions $Y^\ell$, with $Y$ a vector field  on $Q$, generate the space of the vector fields ${\mathfrak X}(T^*Q)$ on $T^*Q.$ Thus, (\ref{compatible}) is equivalent to 
    these three conditions 
    \begin{equation}\label{3}
    \left\{\begin{array}{rcl}
     d(i_{X_{H}}\omega_\Psi)(X_{f\circ \pi_Q}, X_{g\circ \pi_Q})&=&0\\[8pt]
    d(i_{X_{H}}\omega_\Psi)(X_{f\circ \pi_Q}, X_{Y^l})&=&0\\[8pt]
    d(i_{X_{H}}\omega_\Psi)(X_{Y^\ell}, X_{Z^\ell})&=&0
    \end{array}\right.
    \end{equation}
for all $f,g\in C^\infty(Q)$ and $Y,Z\in {\mathfrak X}(Q)$. Here $Y^\ell:T^*Q\to \R$ is the fiberwise linear map defined by 
$$Y^\ell(\alpha)=\alpha(Y)\mbox{ for all }\alpha\in T^*Q.$$

Let $\Psi:T^*Q\to T^*Q$ be a bundle map over the identity. Then, we have a $(1,1)$-tensor field on $T^*Q$
$$L_\Psi:T^*(T^*Q)\to T^*(T^*Q)$$
defined by 
\begin{equation}
L_\Psi=\flat_{\omega_\Psi}\circ \#_{\omega_Q},
\end{equation}
where $\flat_{\omega_\Psi}:T(T^*Q)\to T^*(T^*Q)$ and $\#_{\omega_Q}:T^*(T^*Q)\to T(T^*Q)$ are the isomorphisms  of vector bundles characterized by  
$$\flat_{\omega_\Psi}(X_{\alpha})=i_{X_{\alpha}}(\omega_\Psi(\alpha)) \mbox{ and } i_{\#_{\omega_Q}(\alpha)}\omega_Q=\alpha.$$

Now, we will see  that this $(1,1)$-tensor field determines constants of motion of the Hamiltonian dynamics induced by ${H}$. The following lemma will be necessary in the proof. 
\begin{lemma}\label{lema} 
    Let $T:T^*M\to T^*M$ be a $(1,1)$-tensor field on a manifold $M$ and $X$ a vector field on $M$. If ${\mathcal L}_X T=0$ then the trace $\mbox{\rm Tr }T$ of $T$ is an invariant of the flow of $X$, i.e., ${\mathcal L}_{X}(\mbox{\rm Tr }T)=0$.
\end{lemma}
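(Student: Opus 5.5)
The plan is to realize the trace as a complete contraction and use that the Lie derivative commutes with contractions. Let $C:\Gamma(T^*M\otimes TM)\to C^\infty(M)$ be the contraction, so that $\mbox{\rm Tr }T=C(T)$ when $T$ is regarded as a $(1,1)$-tensor. The Lie derivative ${\mathcal L}_X$ is a type-preserving derivation of the tensor algebra that commutes with contractions. Hence
$${\mathcal L}_X(\mbox{\rm Tr }T)=C({\mathcal L}_X T)=\mbox{\rm Tr}({\mathcal L}_X T)=0 .$$
This is essentially the whole proof. The only care needed is to recall why ${\mathcal L}_X$ commutes with $C$.

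To make that step self-contained, I would argue via flows. Let $\phi_t$ be the local flow of $X$. The pulled-back endomorphism is $(\phi_t^*T)_x=(T_x\phi_t)^{*}\circ T_{\phi_t(x)}\circ (T_x\phi_t)^{*\,-1}$, interpreted on $T^*_xM$. Thus $(\phi_t^*T)_x$ is conjugate to $T_{\phi_t(x)}$ by the linear isomorphism $(T_x\phi_t)^*$. Since the trace is invariant under conjugation,
$$\mbox{\rm Tr}\big((\phi_t^*T)_x\big)=(\mbox{\rm Tr }T)(\phi_t(x)).$$
Differentiating at $t=0$, the left side gives $\mbox{\rm Tr}(({\mathcal L}_XT)_x)$, because the trace is linear and continuous on $\mathrm{End}(T^*_xM)$. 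The right side gives $X(\mbox{\rm Tr }T)(x)$. The hypothesis ${\mathcal L}_XT=0$ then yields ${\mathcal L}_X(\mbox{\rm Tr }T)=0$.

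Alternatively, one can argue in local coordinates. Write $T=T^i_j\,dx^j\otimes\partial_{x^i}$. Then
$$({\mathcal L}_XT)^i_j=X(T^i_j)-T^k_j\,\partial_k X^i+T^i_k\,\partial_jX^k .$$
Setting $i=j$ and summing, the last two terms cancel, leaving $X(T^i_i)=0$.

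No step is a real obstacle. The only point to watch is the convention that $T$ acts on $T^*M$ rather than on $TM$, and that does not affect the trace.
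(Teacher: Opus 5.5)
Your proof is correct, and all three versions work. The paper does a direct computation in an arbitrary local frame $\{e_i\}$ with dual coframe $\{e^i\}$. It expands $X(T(e_i,e^i))$ by the Leibniz rule as $({\mathcal L}_XT)(e_i,e^i)+T([X,e_i],e^i)+T(e_i,{\mathcal L}_Xe^i)$. Then, writing $[e_i,e_j]=c_{ij}^ke_k$ and ${\mathcal L}_Xe^i=dX^i+i_Xde^i$, it shows that the last two terms cancel. In effect it proves by hand, for this one contraction, the fact your first paragraph quotes: ${\mathcal L}_X$ commutes with contractions.

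Your coordinate alternative is the same computation in a holonomic frame, where the $c_{ij}^k$ vanish and the cancellation is just a relabelling of indices.

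Your flow argument is a genuinely different route. It replaces the cancellation by two observations: $\phi_t^*$ acts on each fibre by conjugation, and the trace is conjugation-invariant. This is coordinate-free and gives more. Since ${\mathcal L}_XT=0$ implies $\phi_t^*T=T$, the endomorphisms $T_x$ and $T_{\phi_t(x)}$ are conjugate. Hence every conjugation-invariant function of $T$ (traces of all powers, coefficients of the characteristic polynomial, eigenvalues) is constant along the flow in one stroke. That would make the induction in the proof of Theorem \ref{th2.5} unnecessary. The paper's route, in exchange, is purely algebraic and needs no flows.

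One cosmetic point: you use $T$ both for the tensor field and for the tangent map $T_x\phi_t$; rename one of them.
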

\begin{proof}
    Let $\{e_i\}$ be a local basis of ${\mathfrak X}(M)$ and $\{e^i\}$ the corresponding dual basis of $\Omega^1(M)$. If $X=\displaystyle\sum_jX^je_j$ and we identify the tensor $T$ with a bilinear map $T:T^*M\times_M TM\to C^\infty(M),$ then 
$$\begin{array}{rcl}
    X(\mbox{\rm Tr }T)&=&X(T(e_i,e^i))=({\mathcal L}_XT)(e_i,e^i) + T([X,e_i],e^i) + T(e_i,{\mathcal L}_Xe^i)=T([X,e_i],e^i) + T(e_i,dX^i  + i_Xde^i)\\[8pt]&=&-e_i(X^j)T(e_j,e^i)+ X^jc_{ji}^kT(e_k,e^i)+ e_j(X^i)T(e_i,e^j) + X^kde^i(e_k,e_j)T(e_i,e^j)\\[8pt]&=&X^kc^i_{kj}T(e_i,e^j)-X^kc_{kj}^iT(e_i,e^j)=0,
\end{array}$$
where $[e_i,e_j]=c_{ij}^ke_k.$ 
\end{proof}

\begin{theorem}\label{th2.5}
Let $\Psi:T^*Q\to T^*Q$ be a fouling map for $H\in C^{\infty}(T^{*}Q)$. Then, for all $l\in \mathbb{N},$ the trace of $L_\Psi^l=L_{\Psi}\circ \cdots^{(l} \cdots\circ  L_{\Psi}$ is a constant of motion of $X_H.$
\end{theorem}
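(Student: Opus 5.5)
The plan is to show that $L_\Psi$ is invariant under the flow of $X_H$, i.e.\ ${\mathcal L}_{X_H}L_\Psi=0$. Invariance then passes to its powers, and Lemma \ref{lema} gives the conclusion.

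\emph{Step 1: the two factors of $L_\Psi$ are invariant.} Since $X_H$ is Hamiltonian with respect to the canonical symplectic form, ${\mathcal L}_{X_H}\omega_Q=d\,i_{X_H}\omega_Q=ddH=0$. The bivector $\Pi_Q$ associated with $\#_{\omega_Q}$ is the inverse of $\omega_Q$, so it is also invariant. Concretely, applying ${\mathcal L}_{X_H}$ to the identity $\flat_{\omega_Q}\circ\#_{\omega_Q}=\mathrm{id}$ and using the Leibniz rule for Lie derivatives of tensor contractions gives ${\mathcal L}_{X_H}\#_{\omega_Q}=-\#_{\omega_Q}\circ({\mathcal L}_{X_H}\flat_{\omega_Q})\circ\#_{\omega_Q}=0$. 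By Definition \ref{defc} and \eqref{compatible}, ${\mathcal L}_{X_H}\omega_\Psi=0$. Hence ${\mathcal L}_{X_H}\flat_{\omega_\Psi}=0$, since $\flat_{\omega_\Psi}$ is just $\omega_\Psi$ viewed as a bundle map.

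\emph{Step 2: invariance of $L_\Psi$ and its powers.} The Lie derivative along a vector field is a derivation commuting with contractions. Applied to the composition $L_\Psi=\flat_{\omega_\Psi}\circ\#_{\omega_Q}$, this gives
$${\mathcal L}_{X_H}L_\Psi=({\mathcal L}_{X_H}\flat_{\omega_\Psi})\circ\#_{\omega_Q}+\flat_{\omega_\Psi}\circ({\mathcal L}_{X_H}\#_{\omega_Q})=0.$$
By the same derivation property and induction on $l$,
$${\mathcal L}_{X_H}(L_\Psi^l)=\sum_{j=0}^{l-1}L_\Psi^{j}\circ({\mathcal L}_{X_H}L_\Psi)\circ L_\Psi^{l-1-j}=0.$$

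\emph{Step 3: traces.} Each $L_\Psi^l$ is a $(1,1)$-tensor field on $M=T^*Q$ satisfying ${\mathcal L}_{X_H}L_\Psi^l=0$. Lemma \ref{lema} then gives $X_H(\mathrm{Tr}\,L_\Psi^l)=0$, so $\mathrm{Tr}\,L_\Psi^l$ is a constant of motion of $X_H$.

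The main obstacle is justifying the Leibniz rule for compositions of bundle maps under ${\mathcal L}_{X}$. Here $L_\Psi$ maps $T^*$ to $T^*$ but factors through $T$, so I would argue pointwise via the flow $\phi_t$ of $X_H$. Invariance of a tensor is equivalent to $\phi_t^*$-invariance, at least locally where the flow is defined, and pullback by a diffeomorphism respects composition and inverses. Thus $\phi_t^*\omega_Q=\omega_Q$ and $\phi_t^*\omega_\Psi=\omega_\Psi$ imply $\phi_t^*L_\Psi=L_\Psi$ directly, and then $\phi_t^*L_\Psi^l=L_\Psi^l$. This flow argument sidesteps the derivation bookkeeping entirely. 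Alternatively, one can check Step 2 in Darboux coordinates $({\bf q},{\bf p})$, where $\#_{\omega_Q}$ has constant coefficients.
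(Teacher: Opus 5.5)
Your proposal is correct and follows essentially the same route as the paper. You show that ${\mathcal L}_{X_H}$ commutes with $\#_{\omega_Q}$ and with $\flat_{\omega_\Psi}$, deduce ${\mathcal L}_{X_H}L_\Psi=0$, pass to powers by induction, and conclude with Lemma \ref{lema}; the only difference is that the paper checks the two commutation relations on the generators $dF$ and $X_F$, whereas you use the tensorial Leibniz rule or flow invariance, which needs invertibility of $\omega_Q$ only and not of $\omega_\Psi$.
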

\begin{proof}
Firstly, we will prove that 
\begin{equation}\label{Ls}{
\mathcal L}_{X_H}\circ \#_{\omega_Q}=\#_{\omega_Q}\circ {\mathcal L}_{X_H}.
\end{equation}
In fact, if $F\in C^\infty(T^*Q)$ then
\begin{equation}\label{cH}
[X_H,X_F]=-X_{\{H,F\}},
\end{equation}
where $\{\cdot,\cdot\}:C^\infty(T^*Q)\times C^\infty(T^*Q)\to C^\infty(T^*Q)$ is the Poisson bracket on $T^*Q.$ Therefore,  we have that 
$${\mathcal L}_{X_H}\circ \#_{\omega_Q}(dF)=[X_H, X_F]=-X_{\{H.F\}}=\#_{\omega_Q}(d\{F,H\})=\#_{\omega_Q}((dX_H(F)))=\#_{\omega_Q}\circ {\mathcal L}_{X_H}(dF).$$

Now, we show that
\begin{equation}\label{Lb}
{\mathcal L}_{X_H}\circ \flat_{\omega_\Psi}=\flat_{\omega_\Psi}\circ {\mathcal L}_{X_H}.
\end{equation}
Indeed, for all $F\in C^\infty(T^*Q)$, using (\ref{compatible}), we have 
$${\mathcal L}_{X_H}(\flat_{\omega_\Psi}(X_F))={\mathcal L}_{X_H}(i_{X_F}\omega_\Psi)=i_{X_F}{\mathcal L}_{X_H}\omega_\Psi + i_{[{X_H},{X_F}]}\omega_\Psi=\flat_{\omega_\Psi}({\mathcal L}_{X_H}X_F).$$
Since the Hamiltonian vector fields  $X_F$ generate the space of vector fields on $T^*Q,$ we deduce that (\ref{Lb}) holds. 

Using (\ref{Ls}) and (\ref{Lb}), we have that 
\begin{equation}\label{Invar-L-k}
    {\mathcal L}_{X_H}L_\Psi=0.
\end{equation}
In fact, if $F\in C^\infty(T^*Q)$
$$({\mathcal L}_{X_H}L_\Psi)(dF)={\mathcal L}_{X_H}(L_\Psi(dF))-L_\Psi({\mathcal L}_{X_H}(dF))={\mathcal L}_{X_H}(\flat_{\omega_\Psi}(\#_{\omega_Q}(dF)))-\flat_{\omega_\Psi}(\#_{\omega_Q}({\mathcal L}_{X_H}(dF)))=0$$

Finally,  we will proceed by induction on $l.$ From (\ref{Invar-L-k}) and Lemma \ref{lema}, we deduce that the result holds for $l=1$. Now,  we suppose that it is true for $l-1$ and we will prove it for $l:$ from this hypothesis and (\ref{Invar-L-k})
$$({\mathcal L}_{X_H}L^l_\Psi)(dF)={\mathcal L}_{X_H}(L^{l-1}_\Psi(L_\Psi(dF)))-L^{l-1}_\Psi(L_\Psi({\mathcal L}_{X_H}(dF)))=({\mathcal L}_{X_H}\circ L^{l-1}_\Psi-L^{l-1}_\Psi\circ {\mathcal L}_{X_H})(L_\Psi(dF))=0.$$

So, using Lemma \ref{lema}, we conclude the theorem. 
\end{proof}

{\bf Local expressions.-} Suppose that 
$$\Psi(q,p)=(q^i,\theta_i(q,p))$$
Then, 
$$\omega_{\Psi}=\Psi^*\omega_Q=\frac{\partial \theta_i}{\partial q^j}dq^i\wedge dq^j+\frac{\partial \theta_i}{\partial p_j}dq^i\wedge dp_j $$
and 
$$L_{\Psi}(dq^j)=\frac{\partial \theta_i}{\partial p_j}dq^i,\;\;\;\; L_{\Psi}(dp_k)=(\frac{\partial \theta_k}{\partial q^i}-\frac{\partial{\theta_i}}{\partial q^k})dq^i+ \frac{\partial{\theta_k}}{\partial p_j}dp_j. $$
Then, it is easy to prove that for every $k\in \mathbb{N},$
$$\mbox{\rm Tr }L_\Psi^l=\langle L_\Psi^l(dq^j),\frac{\partial }{\partial q^j}\rangle + \langle L_\Psi^l(dp_k),\frac{\partial }{\partial p_k}\rangle=2 \frac{\partial \theta_{i_1}}{\partial p_{i_2}}\frac{\partial \theta_{i_2}}{\partial p_{i_3}}\dots \frac{\partial \theta_{i_{l-1}}}{\partial p_{i_l}}\frac{\partial \theta_{i_l}}{\partial p_{i_1}}.$$

So, for every $l\in \mathbb{N}$
\begin{equation}\label{8'}f_l(q,p)=2 \frac{\partial \theta_{i_1}}{\partial p_{i_2}}\frac{\partial \theta_{i_2}}{\partial p_{i_3}}\dots \frac{\partial \theta_{i_{l-1}}}{\partial p_{i_l}}\frac{\partial \theta_{i_l}}{\partial p_{i_1}}
\end{equation}
is a first integral of $X_H.$

\section{A particular case: mechanical Hamiltonian functions}
\label{sec3}

In this section, we will present a description of  polynomial fouling maps for mechanical Hamiltonian functions, induced by symmetric tensors of type $(s+1,0)$, with $s\in \{0,\dots ,k\}$.

Let $(Q,g)$ be a semi-Riemannian manifold of dimension $n$.
\begin{definition}\citep{godbillon1969geometrie,AMRC2008}
A mechanical Hamiltonian function on $(Q,g)$ is a real smooth function $H:T^{*}Q\longrightarrow \mathbb{R}$ of the form
\begin{equation}
H=(g^{*})^{hom}+V\circ \pi_{Q},
\end{equation}
with $V\in C^{\infty}(Q)$, $g^{*}:T^*Q\times_QT^*Q\to {\Bbb R}$ the dual metric on $T^*Q$  induced by  $g$, and $(g^{*})^{hom}:T^{*}Q\longrightarrow \mathbb{R}$ the homogeneous function associated to $g^{*}$ given by $$(g^{*})^{hom}(\alpha_{q})=\frac{1}{2}g^{*}_{q}(\alpha_{q},\alpha_{q})\quad \forall \alpha_{q}\in T_{q}^{*}Q.$$ 
\end{definition}
The homogeneous function $(g^{*})^{hom}$ represents the kinetic energy function in the space $T^*Q$, and $V\in C^\infty(Q)$ is the potential energy function; thus, a mechanical Hamiltonian function is the total energy of a physical system. Hamiltonian systems on $T^{*}Q$ with mechanical Hamiltonian functions are called natural Hamiltonian systems \citep{Arnold78}.

Let $(q^{i},p_{i})=(q^{1},\ldots,q^{n},p_{1},\ldots, p_{n})$ be local canonical coordinates on $T^{*}Q$, i.e., $\omega_Q=dq^{i}\wedge dp_{i}$. Suppose that the local expression of $g$ is 
$$g=g_{ij}dq^{i}\otimes dq^{j},$$
then,  the local expression of the dual metric $g^*$ 
is given by   
\begin{equation}
g^{*}=g^{ij}\partial_{q^{i}}\otimes\partial_{ q^{j}},
\end{equation}
where $(g^{ij})$ is the inverse matrix of $(g_{ij})$, and therefore,  
\begin{equation}
(g^{*})^{hom}(q^{i},p_{i})=\frac{1}{2}g^{ij}p_{i}p_{j}.
\end{equation}

Given a mechanical Hamiltonian function $H:T^{*}Q\longrightarrow \mathbb{R}$, the local expression of the Hamiltonian vector field for $H$ is 
\begin{equation}\label{hamiltonian}
X_H=g^{mn}p_m\partial_{q_n} + (\Gamma_{sm}^ng^{sl}p_np_l-\partial_{q^m}V)\partial_{p_m},
\end{equation}
where $\Gamma_{sm}^n$ are the Christoffel symbols of the metric $g$. 

We  are interested in homogeneous bundle maps over the identity on $T^*Q$. In what follows, we recall the relation of these maps and the symmetric tensors on $Q.$

For each symmetric $(k+1,0)$-tensor field $T$ on $Q$,  we have the bundle map $\Psi_{T}:T^{*}Q\longrightarrow T^{*}Q$ over the identity given by
\begin{equation}\label{Tk+1}
\Psi_{T}(\alpha_{q})(X_q)=\frac{1}{k!}\widetilde{T}(\alpha_q,\dots ,\alpha_q)(X_q),
\end{equation}
where $\widetilde{T}$ is the $(k,1)$-tensor induced by $T$, that is, 
\begin{equation}\label{Tt}
\widetilde{T}_{q}(\alpha_1,\ldots,\alpha_k)(v)=
T(\alpha_1,\ldots,\alpha_k,\beta) \mbox{ for all } \alpha_i\in T_{q}^{*}Q \mbox{ and } v\in T_qQ,
\end{equation}
where $\beta\in T_q^*Q$ is the covector $\beta_q=g(v,\cdot).$

For  local canonical coordinates $(q^i, p_i) = (q^1,\dots,q^n, p_1,\dots, p_n)$ on $T^*Q$,  the  local expression of $g$ is 
$$g=g_{ij}dq^{i}\otimes dq^{j},$$
with $g_{ij}$ smooth functions on a neighborhood of $Q.$   

If the local expression of $T$ is 
\begin{equation}
T=T^{i_{1}\ldots i_{k+1}}\partial_{q^{i_{1}}}\otimes\ldots\partial_{ q^{i_{k+1}}},
\end{equation}
with $T^{i_{1}\ldots i_{k}}$ local functions on $Q$, the corresponding $(k,1)$-tensor deduced from $T$ is 
$$\widetilde{T}=\widetilde{T}^{i_1\dots i_k}_{i}\partial_{q^{i_{1}}}\otimes\ldots\partial_{ q^{i_{k}}}\otimes dq^i$$
such that $$\widetilde{T}^{i_1\dots i_k}_i=g_{ij}T^{i_1\dots i_kj}.$$
Thus, the bundle map $\Psi_T:T^*Q\to T^*Q$ is given by 
\begin{equation}\label{lcPhi}
\Psi_{T}(q^{i},p_{i})=(q^{i},\frac{1}{k!}\widetilde{T}^{i_{1}\ldots i_{k}}_ip_{i_{1}}\cdots p_{i_{k}})=(q^{i},\frac{1}{k!}g_{ij}{T}^{i_{1}\ldots i_{k}i}p_{i_{1}}\cdots p_{i_{k}}).
\end{equation}

In addition, from (\ref{Liouville}) and (\ref{lcPhi}), it follows that $\Psi_T$ is a $k$-homogeneous smooth map, that is, 
$$\Psi_T(\lambda\alpha_q)=\lambda^k\Psi_T(\alpha_q) \mbox{ for } \lambda\in \mathbb{R} \mbox{ and } \alpha_q\in T^*_qQ$$
and for every $q\in Q$, the map 
$$T_q^*Q\times \cdots ^{k} \cdots \times T_q^*Q\to T_q^*Q$$
defined by
\begin{equation}\label{m2}
\begin{array}{rcl}
(\alpha^1_q,\dots ,\alpha_q^k) &\longrightarrow& \displaystyle\frac{1}{k!} \left((-1)^{k+1}\displaystyle\sum^k_{i=1} \Psi_T(\alpha_q^i)+(-1)^{k}\displaystyle{\sum^k_{i<j}} \Psi_T(\alpha^i_q + \alpha^j_q)+ \cdots \right. \\
&& \left. \dots -\displaystyle{\sum^k_{i}} \Psi_T(\alpha^1_q +\dots+ \alpha^i_q+\alpha^{i+1}_q + \cdots +  \alpha_k)+ \Psi_T(\alpha^1_q  + \dots +  \alpha_q^k)\right)
\end{array}
\end{equation}
is multilineal. In fact, (\ref{m2}) defines a symmetric tensor of type $(k,0)$ on $Q$ that is just $T.$

In consequence, there is a one-to-one correspondence between homogeneous bundle maps of degree $k$ on $T^*Q$ over  the identity and symmetric $(k+1,0)$-tensors on $Q$. 

Now, we will study the case when $\Psi=\Psi_{T_s}.$
But before, we will prove the  following homogeneity properties, which will be useful in several results of this paper.

\begin{lemma}\label{lemma1}
If $T$ is a symmetric tensor of type $(k,0)$, then 
\begin{enumerate}
\item[(a)] $X_{f\circ \pi_Q}(T^{hom})=(i_{df}T)^{hom},$
\item[(b)] $X_{Y^\ell}(T^{hom})=({\mathcal L}_YT)^{hom},$
 \end{enumerate} 
 for all $f\in C^\infty(Q)$ and $Y\in {\mathfrak X}(Q).$
\end{lemma}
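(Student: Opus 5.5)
Both identities are local and linear in $T$, so I would prove them in canonical coordinates $(q^i,p_i)$, where $T^{hom}=\frac{1}{k!}T^{i_1\dots i_k}p_{i_1}\cdots p_{i_k}$. The first step is to write the Hamiltonian vector fields explicitly, using the convention $i_{X_F}\omega_Q=dF$ with $\omega_Q=dq^i\wedge dp_i$. This convention gives $X_F=\frac{\partial F}{\partial p_i}\partial_{q^i}-\frac{\partial F}{\partial q^i}\partial_{p_i}$, which is consistent with (\ref{hamiltonian}). Hence
$$X_{f\circ\pi_Q}=-\frac{\partial f}{\partial q^i}\partial_{p_i},\qquad X_{Y^\ell}=Y^i\partial_{q^i}-p_j\frac{\partial Y^j}{\partial q^i}\partial_{p_i}.$$

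For (a), I apply $X_{f\circ\pi_Q}$ to $T^{hom}$. Only the $p$-derivatives act, and by the symmetry of $T$ the $k$ resulting terms coincide. This gives a multiple of $\frac{1}{(k-1)!}\frac{\partial f}{\partial q^i}T^{i i_2\dots i_k}p_{i_2}\cdots p_{i_k}$, which is exactly the homogeneous function of degree $k-1$ associated with the symmetric $(k-1,0)$-tensor $i_{df}T$. The only delicate point is the overall sign. Intrinsically, the flow of $X_{f\circ\pi_Q}$ is the fibre translation $\alpha_q\mapsto\alpha_q\mp t\,df(q)$, and differentiating $T^{hom}$ along it at $t=0$ gives $\pm\frac{1}{(k-1)!}T(df,\alpha,\dots,\alpha)$. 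I would therefore fix the sign convention for $i_{df}T$ (contraction in the first slot, possibly with the sign absorbed) so that it matches the sign convention for Hamiltonian vector fields used in the paper, and state the identity with that sign.

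For (b), there is a clean intrinsic argument. $X_{Y^\ell}$ is the cotangent lift of $Y$, so its flow is $T^*\phi_{-t}$, the cotangent lift of the flow $\phi_t$ of $Y$. Moreover, $T^{hom}$ is natural under cotangent lifts: $T^{hom}\circ T^*\phi_{-t}=(\phi_t^*T)^{hom}$, where $\phi_t^*T$ denotes the pullback of the contravariant tensor. Differentiating at $t=0$ gives $({\mathcal L}_YT)^{hom}$.

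As a check, and as the proof I would actually write down, I also compute (b) in coordinates:
$$X_{Y^\ell}(T^{hom})=\frac{1}{k!}\Big(Y^i\partial_{q^i}T^{i_1\dots i_k}-k\,\partial_{q^j}Y^{i_1}\,T^{j i_2\dots i_k}\Big)p_{i_1}\cdots p_{i_k}.$$
Symmetrizing the second term over the $k$ slots, this is $\frac{1}{k!}({\mathcal L}_YT)^{i_1\dots i_k}p_{i_1}\cdots p_{i_k}$, because $({\mathcal L}_YT)^{i_1\dots i_k}=Y^j\partial_jT^{i_1\dots i_k}-\sum_r\partial_jY^{i_r}T^{i_1\dots j\dots i_k}$.

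The main obstacle is bookkeeping rather than anything conceptual. One must track the factors $1/k!$ versus $1/(k-1)!$, and one must make the sign in (a) consistent with the paper's convention $i_{X_H}\omega_Q=dH$.
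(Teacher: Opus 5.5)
For (a) your argument is the paper's: differentiate $T^{hom}$ along the fibre translation generated by $X_{f\circ\pi_Q}$. You do it in coordinates; the paper does it intrinsically.

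For (b) you take a different and more complete route. The paper checks only $k=2$, using $(X\otimes Z+Z\otimes X)^{hom}=X^\ell Z^\ell$, the derivation property of $X_{Y^\ell}$ and ${\mathcal L}_Y$, and $X_{Y^\ell}(Z^\ell)=[Y,Z]^\ell$. It then calls $k\geq 3$ ``similar''. That generator argument does extend: locally $T$ is a sum of terms $\mathrm{Sym}(X_1\otimes\cdots\otimes X_k)$, whose associated function is $\frac{1}{k!}X_1^\ell\cdots X_k^\ell$. But the paper does not write it out. You identify $X_{Y^\ell}$ as the cotangent lift of $Y$ (flow $T^*\phi_{-t}$) and use the naturality $T^{hom}\circ T^*\phi_{-t}=(\phi_t^*T)^{hom}$. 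This handles every $k$ at once and explains why the identity holds. Your coordinate formula is a correct, self-contained check of the same thing.

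The point to fix is your hedge on the sign in (a). Your own formula $X_{f\circ\pi_Q}=-\frac{\partial f}{\partial q^i}\partial_{p_i}$ is correct: it is forced by $\omega_Q=dq^i\wedge dp_i$ and $i_{X_H}\omega_Q=dH$, and it matches the term $-\partial_{q^m}V\,\partial_{p_m}$ in (\ref{hamiltonian}). It already settles the question. The flow is $\alpha_q\mapsto\alpha_q-t\,df_q$, so with the ordinary contraction $X_{f\circ\pi_Q}(T^{hom})=-(i_{df}T)^{hom}$; there is no ``$\mp$''.

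The paper gets $+$ only by asserting that $X_{f\circ\pi_Q}$ is the vertical lift of $df$, and by writing $X_{f\circ\pi_Q}=\partial_{q^j}f\,\partial_{p_j}$ in the Appendix. Both contradict its own conventions. Neither of the obvious repairs works:
\begin{itemize}
\item Absorbing the sign into $i_{df}$ is not acceptable, because $i_{df}$ and $i_{dV}$ are ordinary contractions everywhere else in the paper.
\item Reversing the Hamiltonian convention does not help either. $X_{Y^\ell}$ then becomes minus the cotangent lift, and (b) acquires the minus sign instead.
\end{itemize}
So (a) and (b) cannot both hold as stated under a single convention, and the correct version of (a) carries a minus sign.

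This is harmless downstream. It only flips the overall sign of the right-hand sides in Propositions \ref{propT0}(ii) and \ref{PropTk}(ii), which are afterwards required to vanish. The relation $\omega_{T_k}(X_H,X_{f\circ\pi_Q})=kX_{f\circ\pi_Q}(T_k^{hom})$ in Proposition \ref{WT}(a) stays true; only its middle expression $k(i_{df}T_k)^{hom}$ needs a minus sign.
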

\begin{proof}
We recall that the Hamiltonian vector field $X_{f\circ \pi_Q}$ is just the vertical lift of $df$, i.e.
$$(X_{f\circ \pi_Q})_{\alpha_q}(F)=\frac{d}{dt}_{|t=0}(F(\alpha_q+tdf_{q})),$$ for all $F\in C^\infty(T^*Q).$ Then 
$$(i_{df}T)^{hom}(\alpha_q)=\frac{1}{(k-1)!}T_{q}(df_q, \alpha_{q},\ldots,\alpha_{q})$$
and
$$\begin{array}{rcl}
\left((X_{f\circ \pi_Q})(T^{hom})\right)(\alpha_q)&=&\displaystyle\frac{d}{dt}_{|t=0}(T^{hom}(\alpha_q + tdf_q))=\frac{1}{k!}\frac{d}{dt}_{|t=0} (T_q(\alpha_q, \dots \alpha_q) + ktT_q(df_q, \alpha_q,\dots ,\alpha_q))\\[8pt]&=&(i_{df}T)^{hom}_q(\alpha_q).
\end{array}$$
Thus, $(a)$ is proved.

On the other hand, for all $X,Y\in {\mathfrak X}(Q),$ since 
$$(X\otimes Y)^{hom}=\frac{1}{2}X^\ell Y^\ell,$$
then 
\begin{equation}\label{lemm1}
({\mathcal L}_Y(X\otimes Z + Z\otimes X))^{hom}=Z^\ell [Y,X]^\ell + X^\ell[Y,Z]^\ell.
\end{equation}

Moreover, $X_{Y^\ell}(Z^{\ell})=[Y,Z]^\ell.$ Indeed, 
\begin{equation}\label{lemm2}
X_{Y^\ell}(Z^{\ell})=\omega_Q(X_{Z^\ell}, X_{Y^\ell})=\{Z^\ell, Y^\ell\}= [Y,Z]^\ell.
\end{equation}
Therefore, from (\ref{lemm1}) and (\ref{lemm2}), 
$$({\mathcal L}_Y(X\otimes Z + Z\otimes X))^{hom}= X_{Y^\ell}(X\otimes Z + Z\otimes X)^{hom}.$$
This proves (b) for $k=2.$ In a similar way, one may check (b) for $k\geq 3.$ 
\end{proof}

\subsection{The case when the tensor is a vector field $T_0$ on $Q$}
In this subsection, we will characterize fouling maps associated with a $(1,0)$-tensor field $T_0$ on $Q$ (that is, a vector field $T_0$ on $Q$). The corresponding bundle map $\Psi_{T_0}:T^*Q\to T^*Q$ is a $0$-homogeneous map. 

In what follows, we denote by 
$\sharp:T^*Q\to TQ$
the isomorphism of vector bundles induced by $g^*$, that is, 
\begin{equation}\label{sharp}
\langle \beta, \sharp \alpha\rangle=g^*(\alpha,\beta)\mbox{ for all } \alpha,\beta\in \Omega^1(Q)
\end{equation}
and by $\flat:TQ\to T^*Q$ the inverse isomorphism $\sharp^{-1},$ that is 
$$\flat(X)(Y)=g(X,Y),\;\;\; X,Y\in {\mathfrak X}(Q).$$
Note that 
$$\sharp (dq^i)=g^{ij}\partial_{q^j} \mbox{ and } \flat(\partial_{q^i})=g_{ij}dq^j.$$
Both isomorphisms of vector bundles, $\sharp$ and $\flat$,  induce  $C^\infty(Q)$-modules 
$$\sharp:\Omega^1(Q)\to {\mathfrak X}(Q),\;\;\; \flat:{\mathfrak X}(Q)\to \Omega^1(Q).$$

Let $R_{T_0}$ be the $2$-vector on $Q$ defined by 
\begin{equation}\label{T0}
R_{T_0}(\alpha,\beta)= (\nabla_{\sharp\alpha}{T_0})(\beta)-(\nabla_{\sharp\beta}{T_0})(\alpha), \mbox{ with }\alpha,\beta\in \Omega^1(Q),
\end{equation}
where $\nabla$ is the Levi-Civita connection associated with the metric $g.$ 

We will characterize the fact that $\Psi_{T_0}:T^*Q\to T^*Q$ is a fouling map for a mechanical Hamiltonian function $H:T^*Q\to \mathbb{R}$ in terms of this $2$-vector. Previously, we prove the following lemma, which describes the $2$-vector $R_{T_0}$ using  the $1$-form   metrically equivalent to $T_0$ via the metric $g$; i.e, the $1$-form $\flat(T_0).$
\begin{lemma}
\label{R-T}
If $T_0$ is a vector field on the semi-Riemannian manifold $(Q,g)$, then
\begin{equation}
R_{T_0}(\alpha,\beta)=d(\flat(T_0))\sharp\alpha,\sharp\beta),
\end{equation}
for all $\alpha,\beta\in \Omega^1(Q).$ 
\end{lemma}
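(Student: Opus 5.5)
We will prove the identity by combining the torsion-free expression of the exterior derivative of a $1$-form with the metric compatibility of the Levi-Civita connection. Put $\theta=\flat(T_0)$, so that $\theta(Z)=g(T_0,Z)$ for every $Z\in{\mathfrak X}(Q)$. Because $\nabla$ is torsion free,
\[
d\theta(X,Y)=(\nabla_X\theta)(Y)-(\nabla_Y\theta)(X),\qquad X,Y\in{\mathfrak X}(Q).
\]
This follows from $d\theta(X,Y)=X(\theta(Y))-Y(\theta(X))-\theta([X,Y])$ together with $[X,Y]=\nabla_XY-\nabla_YX$.

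The second step uses $\nabla g=0$. This gives $\nabla_X(\flat T_0)=\flat(\nabla_XT_0)$, and hence
\[
(\nabla_X\theta)(Y)=g(\nabla_XT_0,Y).
\]
It follows that $d\theta(X,Y)=g(\nabla_XT_0,Y)-g(\nabla_YT_0,X)$.

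Next we specialise to $X=\sharp\alpha$ and $Y=\sharp\beta$. By (\ref{sharp}), $g(Z,\sharp\beta)=\beta(Z)$ for every vector field $Z$. Therefore
\[
g(\nabla_{\sharp\alpha}T_0,\sharp\beta)=\beta(\nabla_{\sharp\alpha}T_0)=(\nabla_{\sharp\alpha}T_0)(\beta),
\]
where the vector field $\nabla_{\sharp\alpha}T_0$ is read as a $(1,0)$-tensor, exactly as in (\ref{T0}). The second term is handled the same way and gives $(\nabla_{\sharp\beta}T_0)(\alpha)$. Substituting both into the formula for $d\theta$ yields $d(\flat(T_0))(\sharp\alpha,\sharp\beta)=R_{T_0}(\alpha,\beta)$, as claimed.

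None of these steps is a genuine obstacle. The only points needing care are the conventions: the normalisation of $d$ on $1$-forms, which must be the one without a factor $\frac12$, consistent with $\omega_Q=dq^i\wedge dp_i$; and the ordering of the arguments, so that the sign matches (\ref{T0}). To guard against a sign error we would also check the identity in local coordinates. There $R_{T_0}^{ij}=g^{ia}\nabla_aT_0^j-g^{ja}\nabla_aT_0^i$, and $d\theta$ has components $\partial_a\theta_b-\partial_b\theta_a=\nabla_a\theta_b-\nabla_b\theta_a$ with $\theta_b=g_{bc}T_0^c$. Raising both indices with $g^{ij}$ then gives the same expression.
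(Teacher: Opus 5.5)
Your proof is correct and essentially the paper's argument. Both rest on torsion-freeness of $\nabla$ and metric compatibility; they differ only in the order of bookkeeping. The paper expands $d(\flat(T_0))(\sharp\alpha,\sharp\beta)$ via the bracket, rewrites $[\sharp\alpha,\sharp\beta]$ by torsion-freeness and uses $\nabla_X\sharp\gamma=\sharp(\nabla_X\gamma)$. You instead write $d\theta$ as the antisymmetrized $\nabla\theta$ and commute $\nabla$ with $\flat$ on $T_0$, which is the same compatibility fact.
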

\begin{proof}
Using  that 
$[\sharp\alpha,\sharp\beta]=\nabla_{\sharp\alpha}\sharp\beta-\nabla_{\sharp\beta}\sharp\alpha,$
we have 
\begin{equation}\label{26'-}
\begin{array}{rcl}
d(\flat(T_0))(\sharp\alpha,\sharp\beta)&=&\sharp\alpha(\langle \beta,T_0\rangle)-\sharp\beta(\langle\alpha,T_0\rangle)-[\sharp\alpha,\sharp\beta]\\[8pt]&=&
\sharp\alpha(\langle \beta,T_0\rangle)-T_0(\nabla_{\sharp\alpha}\sharp\beta)-\sharp\beta(\langle \alpha,T_0\rangle)+T_0(\nabla_{\sharp\beta}\sharp\alpha).
\end{array}
\end{equation}
    
On the other hand,  
\begin{equation}\label{nablasos}
\nabla_X\sharp\gamma=\sharp(\nabla_X\gamma),
\end{equation}
for all $X\in {\mathfrak X}(Q)$ and $\gamma\in \Omega^1(Q).$ In fact, 
$$\begin{array}{rcl}
0=(\nabla_Xg)(\sharp\gamma,Z)&=&X(g(\sharp\gamma,Z))-g(\nabla_X\sharp\gamma,Z)- g(\sharp\gamma,\nabla_XZ)\\[8pt]&&=X(\langle\gamma,Z\rangle)-g(\nabla_X\sharp\gamma,Z)-\langle\gamma,\nabla_XZ\rangle\\[8pt]&&=\langle\nabla_X\gamma,Z\rangle-g(\nabla_X\sharp\gamma,Z)=g(\sharp(\nabla_X\gamma),Z)-g(\nabla_X\sharp\gamma,Z).\end{array}$$

So, from (\ref{nablasos}) and (\ref{26'-}) we conclude that 
$$d(\flat(T_0))(\sharp\alpha,\sharp\beta)=R_{T_0}(\alpha,\beta).$$
\end{proof}

Now, we  have
\begin{proposition}\label{propT0}
    Let $T_0$ be a vector field on $Q$  and  $H:T^*Q\to \mathbb{R}$ be a mechanical Hamiltonian function. Then, 
    \begin{enumerate}
    \item[(i)]
     $d(i_{X_{H}}\omega_{T_0})(X_{f\circ \pi_Q}, X_{h\circ \pi_Q})=0$ 
    \item[(ii)] $d(i_{X_{H}}\omega_{T_0})(X_{f\circ \pi_Q}, X_{Y^l})=(i_{\flat(Y)}R_{T_0})(f)\circ \pi_Q$
    \item[(iii)] $d(i_{X_{H}}\omega_{T_0})(X_{Y^\ell}, X_{Z^\ell})=({\mathcal L}_Yi_{\flat(Z)}R_{T_0}-{\mathcal L}_Zi_{\flat(Y)}R_{T_0}-i_{\flat({[Y,Z])}}R_{T_0})^{hom},$
    \end{enumerate}
with $f,h\in C^\infty(Q)$ and $Y,Z\in {\mathfrak X}(Q),$ where $\omega_{T_0}=\Psi_{T_0}^*(\omega_Q)$.
\end{proposition}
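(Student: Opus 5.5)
The plan rests on the fact that $\Psi_{T_0}$ is $0$-homogeneous, so it is simply $\alpha_q\mapsto \flat(T_0)(q)$. Writing $\theta=\flat(T_0)\in\Omega^1(Q)$, the semi-basic form $\Psi_{T_0}^*\Theta_Q$ is $\pi_Q^*\theta$. Hence
$$\omega_{T_0}=\Psi_{T_0}^*\omega_Q=\pm\,\pi_Q^*(d\theta),$$
with the sign fixed by the convention $\omega_Q=-d\Theta_Q$ (equivalently $dq^i\wedge dp_i$).

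Next I compute $\eta:=i_{X_H}\omega_{T_0}$. From (\ref{hamiltonian}), $T_\alpha\pi_Q(X_H(\alpha))=\sharp\alpha$, so
$$\eta(\alpha)(W)=\pm\, d\theta(\sharp\alpha,T_\alpha\pi_Q W).$$
In particular $\eta$ is semi-basic, and the potential $V$ plays no role. Since $T\pi_Q(X_{Y^\ell})=Y$, Lemma \ref{R-T} gives
$$\eta(X_{Y^\ell})(\alpha)=\pm\, d\theta(\sharp\alpha,\sharp\flat Y)=\pm R_{T_0}(\alpha,\flat Y).$$
Thus $\eta(X_{Y^\ell})$ is, up to the sign convention for interior products, the fibrewise linear function $(i_{\flat(Y)}R_{T_0})^{hom}=(i_{\flat(Y)}R_{T_0})^\ell$. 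Also $\eta(X_{f\circ\pi_Q})=0$, because $X_{f\circ\pi_Q}$ is vertical.

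With this in hand, I would evaluate each case using
$$d\eta(X,Y)=X(\eta(Y))-Y(\eta(X))-\eta([X,Y]),$$
together with (\ref{cH}), $[X_F,X_G]=-X_{\{F,G\}}$.

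\textbf{Case (i).} Both $\eta$-terms vanish because the fields are vertical. The bracket term also vanishes, since $\{f\circ\pi_Q,h\circ\pi_Q\}=0$.

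\textbf{Case (ii).} $\eta(X_{f\circ\pi_Q})=0$. Moreover $\{f\circ\pi_Q,Y^\ell\}$ is basic (it is $\pm Y(f)\circ\pi_Q$), so the bracket is again vertical and is killed by $\eta$. What remains is $X_{f\circ\pi_Q}\big((i_{\flat(Y)}R_{T_0})^{hom}\big)$. By Lemma \ref{lemma1}(a) this equals $\big(i_{df}i_{\flat(Y)}R_{T_0}\big)^{hom}=(i_{\flat(Y)}R_{T_0})(f)\circ\pi_Q$.

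\textbf{Case (iii).} By Lemma \ref{lemma1}(b), $X_{Y^\ell}\big((i_{\flat(Z)}R_{T_0})^{hom}\big)=({\mathcal L}_Y i_{\flat(Z)}R_{T_0})^{hom}$, and symmetrically with $Y$ and $Z$ exchanged. For the last term, (\ref{lemm2}) gives $\{Y^\ell,Z^\ell\}=\pm[Y,Z]^\ell$, hence $[X_{Y^\ell},X_{Z^\ell}]=X_{[Y,Z]^\ell}$ with the appropriate sign. Then $\eta$ of this bracket is $(i_{\flat([Y,Z])}R_{T_0})^{hom}$. Assembling the three terms gives the stated expression.

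I expect the only real obstacle to be sign bookkeeping. Several conventions must be made mutually consistent: $\omega_Q=dq^i\wedge dp_i$ versus $-d\Theta_Q$, the sign in (\ref{cH}) and (\ref{lemm2}), the contraction $i_{\flat(Y)}R_{T_0}$ (in which slot of the 2-vector), and the orientation $R_{T_0}(\alpha,\beta)$ in Lemma \ref{R-T}. I would fix these once by checking the local coordinate formula for $\omega_\Psi$ with $\theta_i=g_{ij}T_0^j$, and by checking (ii) directly in coordinates with $f=q^a$ and $Y=\partial_{q^b}$. The structure of the computation itself is straightforward.
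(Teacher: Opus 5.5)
Your proposal is correct and is essentially the paper's own proof. Both show $\omega_{T_0}=-\pi_Q^*d(\flat(T_0))$, so that $i_{X_H}\omega_{T_0}$ kills vertical vectors and satisfies $\omega_{T_0}(X_H,X_{Y^\ell})=(i_{\flat(Y)}R_{T_0})^{hom}$ by Lemma \ref{R-T}, and then obtain (i)--(iii) from the Cartan formula for $d$, (\ref{cH}), the three basic Poisson brackets and Lemma \ref{lemma1}; the only cosmetic difference is that you get $i_{X_H}\omega_{T_0}$ intrinsically from $T\pi_Q(X_H(\alpha))=\sharp\alpha$, whereas the paper uses coordinates.

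When you run your planned coordinate sign check of (ii), you will find that with $\omega_Q=dq^i\wedge dp_i$ one has $X_{f\circ\pi_Q}=-\partial_{q^i}f\,\partial_{p_i}$, not the vertical lift of $df$. Hence Lemma \ref{lemma1}(a), and with it (ii), actually carries a minus sign under the convention $i_{\flat(Y)}R_{T_0}=R_{T_0}(\flat(Y),\cdot)$ that makes the formula for $\omega_{T_0}(X_H,X_{Y^\ell})$ and (iii) correct. This slip is shared with the paper and is harmless for the vanishing criterion in the Corollary.
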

\begin{proof}
First,  we will prove that 
\begin{equation}\label{0}
\omega_{T_0}(X_H,X_{f\circ \pi_Q})=0.
\end{equation}
Let $(q^i)$ be coordinates on $Q$ and $(q^i,p_i)$ the corresponding coordinates  on $T^*Q.$  Then,  the local expressions of $T_0$ and $\Psi_{T_0}$ are  $$T_0=T^i\partial_{q^i}\mbox{ and } \Psi_{T_0}({\bf q},{\bf p})=(q^i,\widetilde{T}_i({\bf q}))=(q^i,g_{ij}{T}^i({\bf q})),$$ with $T^i$ local functions on $Q$, and $\widetilde{T}_i$  the local functions associated with  $1$-form $\flat({T}_0)=\widetilde{T}_idq^i$ metrically equivalent to $T_0$.
     
Moreover, locally 
$$\omega_{T_0}=\partial_{q^k}\widetilde{T}_idq^i\wedge dq^k,$$
which implies that 
\begin{equation}
\omega_{T_0}=-\Pi_Q^*(d(\flat({T}_0)).
\end{equation}

Now, using (\ref{hamiltonian})
\begin{equation}\label{1}
i_{X_H}\omega_{T_0}=(\partial_{q^i}\widetilde{T}_k-\partial_{q^k}\widetilde{T}_i)g^{km} p_mdq^i.
\end{equation}
Therefore, since $X_{f\circ \pi_Q}$ is vertical,  we deduce (\ref{0}).

\noindent $(i)$ Using  (\ref{cH}) and (\ref{0}), we have that 
$$\begin{array}{rcl}
d(i_{X_{H}}\omega_{T_0})(X_{f\circ \pi_Q}, X_{h\circ \pi_Q})&=&X_{f\circ \pi_Q}(\langle i_{X_{H}}\omega_{T_0},X_{h\circ \pi_Q}\rangle)-X_{h\circ \pi_Q}(\langle i_{X_{H}}\omega_{T_0},X_{f\circ \pi_Q}\rangle)-\langle i_{X_{H}}\omega_{T_0},[X_{f\circ \pi_Q},X_{h\circ \pi_Q}]\rangle\\[8pt]&=&\langle i_{X_{H}}\omega_{T_0},X_{\{f\circ \pi_Q,h\circ \pi_Q\}}\rangle.
\end{array}$$

Finally, since 
\begin{equation}\label{b1}
\{f\circ \pi_Q,g\circ \pi_Q\}=0,
\end{equation}
we deduce that 
$$d(i_{X_{H}}\omega_{T_0})(X_{f\circ \pi_Q}, X_{h\circ \pi_Q})=0.$$

\noindent $(ii)$ If the local expression of $Y\in{\mathfrak X}(Q)$ is $Y=Y^k\partial_{ q^k}\in {\mathfrak X}(Q),$ then the linear function $Y^\ell\in C^\infty(T^*Q)$ is locally $Y^\ell=Y^k p_k$ and  we have that 
\begin{equation}\label{24'}
X_{Y^\ell}=Y^k\partial_{q^k}-p_k\partial_{ q^j}Y^k\partial_{p_j}.
\end{equation}
Using this and  (\ref{1}), we deduce that 
$$\omega_{T_0}(X_H,X_{Y^\ell})=(\partial_{q^i}\widetilde T_k-\partial_{q^k}\widetilde T_i)g^{mk}Y^ip_m.$$
On the other hand, we have 
$$d(\flat(T_0))(Y,\cdot )=(\partial_{q^i}T_k-\partial_{q^k}T_i)g^{mk}Y^idq^m.$$ Therefore, using Lemma \ref{R-T}, we conclude  that 
\begin{equation}\label{4}
\omega_{T_0}(X_H,X_{Y^\ell})=(i_{\flat(Y)}R_{T_0})^{hom}.
\end{equation}
Then, using   (\ref{cH}), (\ref{0}) and (\ref{4})
$$\begin{array}{rcl}
d(i_{X_H}\omega_{T_0})(X_{f\circ \pi_Q}, X_{Y^\ell}) &=&X_{f\circ \pi_Q}(\langle i_{X_H}\omega_{T_0},X_{Y^\ell}\rangle)-X_{Y^\ell}(\langle i_{X_H}\omega_{T_0},X_{f\circ \pi_Q}\rangle)-\langle i_{X_H}\omega_{T_0},[X_{f\circ \pi_Q},X_{Y^\ell}]\rangle\\[8pt]&=&X_{f\circ \pi_Q}(\langle i_{X_H}\omega_{T_0},X_{Y^\ell}\rangle)+\langle i_{X_H}\omega_{T_0},X_{Y(f)\circ \pi_Q}\rangle=X_{f\circ \pi_Q}((i_{\flat(Y)}R_{T_0})^{hom})\\[8pt]&=&({i_{\flat(Y)}R_{T_0}})(f)\circ \pi_Q.
\end{array}$$
Here, we have also used 
\begin{equation}\label{b2}
\{f\circ \pi_Q,Y^\ell\}=Y(f)\circ \pi_Q\
\end{equation}
So, we have $(ii).$ 

$(iii)$ Since 
\begin{equation}\label{b3}
\{Y^\ell,Z^\ell\}=-[Y,Z]^\ell,
\end{equation}
from again (\ref{cH}) and (\ref{4}), 
$$\begin{array}{rcl}
d(i_{X_H}\omega_{T_0})(X_{Y^\ell}, X_{Z^\ell})&=&X_{Y^\ell}(\langle i_{X_H}\omega_{T_0},X_{Z^\ell}\rangle)-X_{Z^\ell}(\langle i_{X_H}\omega_{T_0},X_{Y^\ell}\rangle)-\langle i_{X_H}\omega_{T_0},[X_{Y^\ell},X_{Z^\ell}]\rangle\\[8pt]&=&X_{Y^\ell}(\langle i_{X_H}\omega_{T_0},X_{Z^\ell}\rangle)-X_{Z^\ell}(\langle i_{X_H}\omega_{T_0},X_{Y^\ell}\rangle)-\langle i_{X_H}\omega_{T_0},X_{[Y,Z]^\ell}\rangle\\[8pt]&=&X_{Y^\ell}((i_{\flat(Z)}R_{T_0})^{hom})-X_{Z^\ell}((i_{\flat(Y)}R_{T_0})^{hom})-(i_{\flat([Y,Z])}R_{T_0})^{hom}.\end{array}$$

Now, using item (b) of Lemma \ref{lemma1}, we conclude that 
$$d(i_{X_H}\omega_{T_0})(X_{Y^\ell}, X_{X^\ell})=({\mathcal L}_Yi_{\flat(Z)}R_{T_0}-{\mathcal L}_Zi_{\flat(Y)}R_{T_0} -i_{\flat({[Y,Z]})}R_{T_0})^{hom}.$$
\end{proof}

An immediate consequence of this result and Lemma \ref{R-T} is the following characterization 
\begin{corollary}
Let $T_0$ be a vector field  on $Q$ and  $H:T^*Q\to \mathbb{R}$ be a mechanical Hamiltonian function on $(Q,g)$. Then, the associated bundle map over the identity $\Psi_{T_0}:T^*Q\to T^*Q$ is a fouling map for $H$ if and only if the $2$-vector $R_{T_0}$ given in (\ref{T0}) is zero; equivalently, the $1$-form metrically equivalent to $\flat(T_0)$ and $T_0$ via the metric $g$ is closed.
\end{corollary}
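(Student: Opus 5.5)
By the discussion around (\ref{3}), $\Psi_{T_0}$ is a fouling map for $H$ exactly when the three quantities computed in Proposition \ref{propT0} vanish for all $f,h\in C^\infty(Q)$ and $Y,Z\in{\mathfrak X}(Q)$. The plan is therefore to read the corollary off those three formulas and then translate the condition with Lemma \ref{R-T}.

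\emph{Step 1: sufficiency.} Suppose $R_{T_0}=0$. Then every contraction $i_{\flat(Y)}R_{T_0}$ is zero. Hence the right-hand sides of items (ii) and (iii) of Proposition \ref{propT0} vanish, and item (i) vanishes unconditionally. So $d(i_{X_H}\omega_{T_0})=0$ on a generating family of vector fields, which gives (\ref{compatible}).

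\emph{Step 2: necessity.} Assume $\Psi_{T_0}$ is a fouling map. Item (ii) then gives $(i_{\flat(Y)}R_{T_0})(f)\circ\pi_Q=0$ for all $f$ and $Y$. Since $\pi_Q$ is a surjective submersion, this means $R_{T_0}(\flat(Y),df)=0$ on $Q$. Because $\flat$ is an isomorphism and the differentials $df$ span $T^*_qQ$ at each point (use local coordinate functions times bump functions), the $2$-vector $R_{T_0}$ vanishes identically. This step is only a pointwise spanning argument, and item (iii) is not needed for the converse. One small point to check is the interpretation of $(i_{\flat(Y)}R_{T_0})(f)$: it should mean the vector field $i_{\flat(Y)}R_{T_0}$ applied to $f$, that is $R_{T_0}(\flat(Y),df)$ up to sign.

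\emph{Step 3: the equivalent condition.} By Lemma \ref{R-T}, $R_{T_0}(\alpha,\beta)=d(\flat(T_0))(\sharp\alpha,\sharp\beta)$. Since $\sharp$ is a fibrewise isomorphism, $R_{T_0}=0$ holds if and only if $d(\flat(T_0))=0$, that is, the $1$-form $\flat(T_0)$ metrically equivalent to $T_0$ is closed.

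The only place needing care is the density argument in Step 2. No step is a genuine obstacle; the corollary is essentially bookkeeping on Proposition \ref{propT0}.
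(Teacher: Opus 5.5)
Your proposal is correct and follows the paper's own route. The paper states the corollary as an immediate consequence of Proposition \ref{propT0} (read through the generating-family criterion (\ref{3})) and Lemma \ref{R-T}; your argument that item (ii) alone forces $R_{T_0}=0$, by a pointwise spanning argument on the forms $\flat(Y)$ and $df$, just supplies the detail the paper leaves implicit.
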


Note that if  the $1$-form $\flat(T_0)$ is closed, then, using (\ref{0}), we have that the $2$-form $\omega_{T_0}$ is trivial, which implies that the tensor field $L_{\Psi_{T_0}}$ of type $(1,1)$ on $T^*Q$ is also trivial. So, for every $k\in \mathbb{N},$ the trace of $L_{\Psi_{T_0}}^k$ is the zero function on $T^*Q$ and, via Theorem \ref{th2.5}, we don't obtain non-trivial first integrals of the mechanical Hamiltonian vector field $X_H$.

\subsection{The case when the tensor  is a $(k+1,0)$-tensor $T_k$ on $Q$}
In this subsection, we extend the previous  results for fouling maps associated with $(k+1,0)$-tensor fields for $k\geq 1$. In this case, the corresponding bundle maps are $k$-homogeneous maps.

\begin{proposition}\label{WT}
Let $T_k$ be a symmetric $(k+1,0)$-tensor field on $Q,$ with $k\geq  1$. If   $H:T^*Q\to \mathbb{R}$ is a mechanical Hamiltonian function, with kinetic energy induced by the semi-Riemannian metric $g$ and  with potential function $V:Q\to \mathbb{R}$, then for all $f\in C^\infty(Q)$ and $Y\in {\mathfrak X}(Q),$
\begin{enumerate}
\item[(a)] $\omega_{T_k}(X_H, X_{f\circ \pi_Q})=k(i_{df}{T_k})^{hom}=kX_{f\circ \pi_Q}({T_k}^{hom}),$
\item[(b)]
$\omega_{T_k}(X_H,X_{Y^l})= (k{\mathcal L}_YT_k -i_{\flat(Y)}R_{T_k})^{hom} +(i_{dV}i_{\flat(Y)}{T_k})^{hom},$
\end{enumerate}
where $\omega_{T_k}=\Psi_{T_k}^*(\omega_Q)$ and $R_{T_k}$ is the $(k+2,0)$-tensor given by 
\begin{equation}\label{R}
R_{T_k}(\alpha_0,\alpha_1,\dots,\alpha_{k+1})=\sum_{i=1}^{k+1} (\nabla_{\sharp\alpha_i}{T_k})(\alpha_1,\dots,\alpha_{i-1} ,\alpha_{0},\alpha_{i+1},\dots, \alpha_{k+1})-(\nabla_{\sharp\alpha_{0}}{T_k})(\alpha_1, \dots ,\alpha_{k+1}).
\end{equation}

Here $\nabla$ denotes the Levi-Civita connection of $(Q,g).$ 
\end{proposition}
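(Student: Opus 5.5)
The plan is a direct computation in canonical coordinates $(q^i,p_i)$, followed by a tensorial identification of the resulting coefficients. Write $\Psi_{T_k}(q,p)=(q^i,\theta_i)$ with
\[
\theta_i=\tfrac{1}{k!}g_{ij}T^{i_1\dots i_kj}p_{i_1}\cdots p_{i_k},
\]
following (\ref{lcPhi}). From the local expressions after Theorem \ref{th2.5}, $\omega_{T_k}$ is $d\theta_i\wedge dq^i$ up to the sign convention fixed there. Contracting with $X_H$ from (\ref{hamiltonian}) gives
\[
i_{X_H}\omega_{T_k}=X_H(\theta_i)\,dq^i-\dot q^i\,d\theta_i,\qquad \dot q^i=g^{im}p_m .
\]
Both identities are then obtained by pairing this $1$-form with $X_{f\circ\pi_Q}$ and with $X_{Y^\ell}$.

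For (a), $X_{f\circ\pi_Q}$ is vertical; it is the vertical lift of $df$, with components $-\partial_{q^j}f\,\partial_{p_j}$ up to sign. Only the $d\theta_i$ term survives, which yields
\[
\pm\, g^{im}p_m\,\frac{\partial\theta_i}{\partial p_j}\,\partial_{q^j}f .
\]
Using $g^{im}g_{ij'}=\delta^m_{j'}$, the metric factors cancel, and the symmetry of $T$ turns $p_m\,\partial\theta_i/\partial p_j$ into
\[
\tfrac{k}{k!}\,T^{j i_2\dots i_k m}p_{i_2}\cdots p_{i_k}p_m .
\]
This expression is exactly $k\,(i_{df}T_k)^{hom}$. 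Lemma \ref{lemma1}(a) then gives the second equality in (a). I will fix the signs so that they are consistent with the conventions in Proposition \ref{propT0}.

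For (b), I use (\ref{24'}) for $X_{Y^\ell}$. The pairing splits into a term $X_H(\theta_i)Y^i$ and a term $-\dot q^i\,X_{Y^\ell}(\theta_i)$.

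\begin{itemize}
\item The contribution of the potential comes only from the $-\partial_{q^m}V\,\partial_{p_m}$ part of $X_H$ acting on $\theta_i$. It gives $-\partial_mV\,\partial\theta_i/\partial p_m\,Y^i$, which by the computation in (a) equals $(i_{dV}i_{\flat(Y)}T_k)^{hom}$ up to sign and ordering of contractions; symmetry of $T_k$ makes the order irrelevant.
\item The remaining terms are polynomials of degree $k+1$ in $p$. They involve $\partial_q T$, $\partial_q g$ and $Y$, $\partial Y$.
\end{itemize}

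To identify the remaining terms tensorially rather than by brute force, I will work at a point in normal coordinates of $g$, so that $\Gamma=0$ and $\partial g=0$ there; this is legitimate because both sides are tensorial in $T_k$ and $Y$. Then $\nabla=\partial$ at that point. The $\partial Y$ terms, coming from $X_{Y^\ell}(\theta_i)=Y^k\partial_k\theta_i-p_k\partial_jY^k\,\partial\theta_i/\partial p_j$, should assemble with the $Y^k\partial_kT$ terms into $k(\mathcal L_YT_k)^{hom}$, using
\[
\mathcal L_YT=\nabla_YT-\sum(\nabla Y)\cdot T .
\]
The leftover derivative terms have the form $Y^i\dot q^j\,\partial_j(\cdot)$ and $\dot q^i\,Y^k\partial_k(\cdot)$, both contracted with $k$ momenta. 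These must be matched with $-(i_{\flat(Y)}R_{T_k})^{hom}$, with $\alpha_0=\flat(Y)$ and $\alpha_1=\dots=\alpha_{k+1}=p$ in (\ref{R}); note that $\sharp\flat(Y)=Y$. The sum over $i$ in (\ref{R}) gives $(k+1)$ equal copies of $(\nabla_{\sharp p}T_k)(Y,p,\dots,p)$, minus $(\nabla_YT_k)(p,\dots,p)$. Dividing by $(k+1)!$ from the $hom$ normalization and comparing coefficients with the coordinate terms is the expected bookkeeping.

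The main obstacle is this last matching. One has to see that the $k\nabla_Y T$ part of $k\mathcal L_Y T$, together with the explicit $-\nabla_Y T$ in $R_{T_k}$, reproduces precisely the coefficient produced by $Y^k\partial_k\theta_i$ paired with $\dot q^i$. One also has to see that the cross term $\dot q^j\partial_j\theta_iY^i$ yields the $(k+1)$-fold symmetrized term. I would first check the case $k=1$ explicitly, where $R_{T_1}$ is a $(3,0)$-tensor and the numerics are small, and then handle general $k$ using the homogeneity relations $p_m\,\partial\theta_i/\partial p_m=k\theta_i$ (Euler).
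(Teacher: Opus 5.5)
Your plan is correct and follows the paper's outline. You compute $i_{X_H}\omega_{T_k}$ in canonical coordinates, pair it with $X_{f\circ\pi_Q}$ and $X_{Y^\ell}$, and recognize the result tensorially. Part (a) is exactly the paper's argument via (\ref{2}).

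For (b), you do the recognition step differently. The paper works in arbitrary coordinates and carries the Christoffel terms. It rewrites the degree-$(k+1)$ part, as in (\ref{HY}), as the contraction with $p$ of three non-symmetric tensors built from $\nabla_{\sharp\alpha}\widetilde{T_k}$, $\mathcal{L}_Y\widetilde{T_k}$ and $\nabla_{\sharp\alpha}Y$. It then symmetrizes each of them, using (\ref{RR1}) and the tensoriality of $\mathcal{L}_Y-\nabla_Y$ on $1$-forms.

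Your normal coordinates at $q_0$ make $\partial=\nabla$ there, and symmetrization becomes automatic because everything is evaluated on the diagonal. The justification is that both sides are intrinsically defined functions on $T^*Q$, so they can be compared in any chart. Saying the sides are "tensorial in $Y$" is not accurate, since $\mathcal{L}_YT_k$ is not.

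The bookkeeping you call the main obstacle does close for every $k$, so you need neither a separate $k=1$ check nor the Euler relation. Let $\alpha=p_i\,dq^i\in T^*_{q_0}Q$, and define $\beta\in T^*_{q_0}Q$ by $\beta(Z)=\alpha(\nabla_ZY)$. Then
\begin{align*}
\omega_{T_k}(X_H,X_{Y^\ell})(\alpha)={}&\frac{1}{k!}(\nabla_YT_k)(\alpha,\dots,\alpha)-\frac{1}{(k-1)!}T_k(\beta,\alpha,\dots,\alpha)\\
&-\frac{1}{k!}(\nabla_{\sharp\alpha}T_k)(\flat(Y),\alpha,\dots,\alpha)+(i_{dV}i_{\flat(Y)}T_k)^{hom}(\alpha).
\end{align*}
On the other side, $\mathcal{L}_Y\gamma=\nabla_Y\gamma+\gamma(\nabla_{\cdot}Y)$ and (\ref{R}) give
\begin{align*}
k(\mathcal{L}_YT_k)^{hom}(\alpha)&=\frac{k}{(k+1)!}(\nabla_YT_k)(\alpha,\dots,\alpha)-\frac{1}{(k-1)!}T_k(\beta,\alpha,\dots,\alpha),\\
(i_{\flat(Y)}R_{T_k})^{hom}(\alpha)&=\frac{1}{k!}(\nabla_{\sharp\alpha}T_k)(\flat(Y),\alpha,\dots,\alpha)-\frac{1}{(k+1)!}(\nabla_YT_k)(\alpha,\dots,\alpha).
\end{align*}
Since $\frac{k}{(k+1)!}+\frac{1}{(k+1)!}=\frac{1}{k!}$, (b) follows. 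Your route buys a verification a few lines long. The paper's avoids special coordinates at the cost of a longer symmetrization argument.

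You left the signs open, and they need fixing as follows. The paper's local formula is $\omega_{T_k}=dq^i\wedge d\theta_i$, so $i_{X_H}\omega_{T_k}=\dot q^i\,d\theta_i-X_H(\theta_i)\,dq^i$. That is the negative of what you wrote. With (\ref{hamiltonian}) and (\ref{24'}), it is the paper's sign that yields (b); yours yields $-$(b). You can already see this in your potential term, which comes out as $-(i_{dV}i_{\flat(Y)}T_k)^{hom}$.

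With the correct sign, the middle expression $k(i_{df}T_k)^{hom}$ in (a) requires $X_{f\circ\pi_Q}=\partial_{q^j}f\,\partial_{p_j}$. This is the vertical lift of $df$ that the paper uses in Lemma \ref{lemma1} and in the appendix. Strictly, $\omega_Q=dq^i\wedge dp_i$ together with $i_{X_F}\omega_Q=dF$ gives $-\partial_{q^j}f\,\partial_{p_j}$, as you wrote, and then the middle expression changes sign. The sign slip there is the paper's. The equality $\omega_{T_k}(X_H,X_{f\circ\pi_Q})=kX_{f\circ\pi_Q}(T_k^{hom})$ holds in either convention.
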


For a proof of this result, see Appendix \ref{Appendix}. 

\begin{remark}
From the properties of the Levi-Civita connection, one deduces that $R_{T_k}$ is a tensor that, in general, is not symmetric. However,  for each $\alpha\in \Omega^1(Q),$ the  $(k+1,0)$-tensor  $i_{\alpha}R_{T_k}$ is  symmetric. 
\end{remark}

\begin{proposition}\label{PropTk}
Let $T_k$ be a symmetric $(k+1,0)$-tensor  on $Q$ with $k\geq 1$. If   $H:T^*Q\to \mathbb{R}$ is a mechanical Hamiltonian function on $(Q,g)$ with kinetic energy induced by a semi-Riemannian metric $g$ on $Q$ and with potential $V:Q\to \mathbb{R}$, then 
$\begin{array}{llll}
(i)&
d(i_{X_{H}}\omega_{T_k})(X_{f\circ \pi_Q}, X_{h\circ \pi_Q})&=&0,\\[8pt]
(ii)& d(i_{X_{H}}\omega_{T_k})(X_{f\circ \pi_Q}, X_{Y^l})&=&-(i_{df}i_{\flat(Y)}R_{T_k})^{hom}+ (i_{dV}i_{df}i_{\flat(Y)}{T_k})^{hom},\\[8pt]
(iii)&
d(i_{X_{H}}\omega_{T_k})(X_{Y^\ell}, X_{Z^\ell})&=&-\left({\mathcal L}_Yi_{\flat(Z)}R_{T_k}-{\mathcal L}_Zi_{\flat(Y)}R_{T_k}-i_{\flat([Y,Z])}R_{T_k}\right)^{hom}\\[8pt]&&&+\left({\mathcal L}_Y i_{dV}i_{\flat(Z)}{T_k}-{\mathcal L}_Z i_{dV}i_{\flat(Y)}{T_k} \right. -\left. i_{dV}i_{\flat_{[Y,Z]}}T_k\right)^{hom}, 
\end{array}$
where $\omega_{T_k}=\Psi_{T_k}^*\omega_Q,$ with $f,h\in C^\infty(Q)$ and $Y,Z\in {\mathfrak X}(Q)$.
\end{proposition}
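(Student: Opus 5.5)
We follow the proof of Proposition \ref{propT0} step by step, with Proposition \ref{WT} replacing the explicit local formulas (\ref{0}) and (\ref{4}). Throughout we write $\Theta=i_{X_H}\omega_{T_k}$. For any two Hamiltonian vector fields $X_F,X_G$ on $T^*Q$, the intrinsic formula for the exterior derivative together with (\ref{cH}) gives
\[
d\Theta(X_F,X_G)=X_F(\Theta(X_G))-X_G(\Theta(X_F))+\Theta(X_{\{F,G\}}).
\]
Here we use $[X_F,X_G]=-X_{\{F,G\}}$. Each of (i)--(iii) then reduces to evaluating $\Theta$ on Hamiltonian vector fields of basic and linear functions. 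These values are exactly the contents of Proposition \ref{WT}(a),(b), and the derivatives of the resulting homogeneous functions are computed with Lemma \ref{lemma1}.

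\textbf{(i).} By (\ref{b1}), $\{f\circ\pi_Q,h\circ\pi_Q\}=0$. By Proposition \ref{WT}(a) and Lemma \ref{lemma1}(a),
\[
X_{f\circ\pi_Q}(\Theta(X_{h\circ\pi_Q}))=k\,(i_{df}i_{dh}T_k)^{hom}.
\]
This expression is symmetric in $f,h$ because $T_k$ is symmetric, so the two derivative terms cancel and (i) follows.

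\textbf{(ii).} We use (\ref{b2}), namely $\{f\circ\pi_Q,Y^\ell\}=Y(f)\circ\pi_Q$. The three terms are as follows.
\begin{itemize}
\item[1.] Proposition \ref{WT}(b) and Lemma \ref{lemma1}(a) give $X_{f\circ\pi_Q}(\Theta(X_{Y^\ell}))=\big(k\,i_{df}\mathcal{L}_YT_k-i_{df}i_{\flat(Y)}R_{T_k}+i_{df}i_{dV}i_{\flat(Y)}T_k\big)^{hom}$.
\item[2.] Proposition \ref{WT}(a) and Lemma \ref{lemma1}(b) give $-X_{Y^\ell}(\Theta(X_{f\circ\pi_Q}))=-k(\mathcal{L}_Y i_{df}T_k)^{hom}$.
\item[3.] Proposition \ref{WT}(a) gives $\Theta(X_{Y(f)\circ\pi_Q})=k(i_{d(Yf)}T_k)^{hom}$.
\end{itemize}
The identity $\mathcal{L}_Y i_{df}T_k=i_{df}\mathcal{L}_YT_k+i_{\mathcal{L}_Ydf}T_k$, with $\mathcal{L}_Ydf=d(Yf)$, makes all the $k$-terms cancel. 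The contractions with $df$ and $dV$ commute because $i_{\flat(Y)}T_k$ is symmetric. What remains is the stated formula.

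\textbf{(iii).} We use (\ref{b3}), namely $\{Y^\ell,Z^\ell\}=-[Y,Z]^\ell$. Applying Lemma \ref{lemma1}(b) to each piece of Proposition \ref{WT}(b) gives
\[
\begin{aligned}
d\Theta(X_{Y^\ell},X_{Z^\ell})={}&\big(k\mathcal{L}_Y\mathcal{L}_ZT_k-k\mathcal{L}_Z\mathcal{L}_YT_k-k\mathcal{L}_{[Y,Z]}T_k\big)^{hom}\\
&-\big(\mathcal{L}_Yi_{\flat(Z)}R_{T_k}-\mathcal{L}_Zi_{\flat(Y)}R_{T_k}-i_{\flat([Y,Z])}R_{T_k}\big)^{hom}\\
&+\big(\mathcal{L}_Yi_{dV}i_{\flat(Z)}T_k-\mathcal{L}_Zi_{dV}i_{\flat(Y)}T_k-i_{dV}i_{\flat([Y,Z])}T_k\big)^{hom}.
\end{aligned}
\]
The first line vanishes by $[\mathcal{L}_Y,\mathcal{L}_Z]=\mathcal{L}_{[Y,Z]}$, which leaves the claimed expression.

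\textbf{Main obstacle.} The step needing most care is applying Lemma \ref{lemma1}(b) to the terms $i_{\flat(Z)}R_{T_k}$ and $i_{dV}i_{\flat(Z)}T_k$. This requires these tensors to be symmetric, which holds by the Remark following Proposition \ref{WT} and by the symmetry of $T_k$. We must also not expand $\mathcal{L}_Y$ through $\flat$, since $\flat$ depends on $g$ and $Y$ is not assumed Killing. The formula keeps the terms $\mathcal{L}_Y i_{\flat(Z)}(\cdot)$ unexpanded, so no such expansion is needed. Finally, the sign conventions for $\{\cdot,\cdot\}$ in (\ref{cH}) and (\ref{b1})--(\ref{b3}) must be tracked consistently with those used in Proposition \ref{propT0}.
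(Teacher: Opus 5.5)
Your proposal is correct relative to the paper's earlier results and coincides with the paper's proof step for step. It uses the same expansion of $d(i_{X_H}\omega_{T_k})$ on $X_{f\circ\pi_Q}$ and $X_{Y^\ell}$ via (\ref{cH}) and (\ref{b1})--(\ref{b3}), with Proposition \ref{WT} and Lemma \ref{lemma1} supplying every value. The only cosmetic differences are that in (i) you use the symmetry of $i_{df}i_{dh}T_k$ where the paper writes $k[X_{f\circ\pi_Q},X_{h\circ\pi_Q}](T_k^{hom})=0$, and that in (iii) you state the cancellation $[\mathcal{L}_Y,\mathcal{L}_Z]=\mathcal{L}_{[Y,Z]}$, which the paper uses silently.

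There is one caveat, which you inherit from the paper rather than introduce. With $\omega_Q=dq^i\wedge dp_i$ and $i_{X_H}\omega_Q=dH$, one actually has $X_{f\circ\pi_Q}=-\partial_{q^j}f\,\partial_{p_j}$; compare the term $-\partial_{q^m}V\,\partial_{p_m}$ in (\ref{hamiltonian}). Hence Lemma \ref{lemma1}(a) and the first equality of Proposition \ref{WT}(a) hold with a minus sign. Correcting this leaves (i), (iii) and the cancellation of the $k$-terms intact, but it reverses the overall sign of the right-hand side of (ii). That does not affect the fouling conditions later extracted from it.
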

\begin{proof}
From (\ref{cH}), (\ref{b1}) and item $(a)$ in Proposition \ref{WT}, we have that
$$\begin{array}{rcl}d(i_{X_{H}}\omega_{T_k})(X_{f\circ \pi_Q}, X_{h\circ \pi_Q})&=& X_{f\circ \pi_Q}(i_{X_{H}}\omega_{T_k}(X_{h\circ \pi_Q}))-X_{h\circ \pi_Q}(i_{X_{H}}\omega_{T_k}(X_{f\circ \pi_Q}))-i_{X_{H}}\omega_{T_k}([X_{f\circ \pi_Q},X_{h\circ \pi_Q}])\\[8pt]&=&
kX_{f\circ \pi_Q}(X_{h\circ \pi_Q}({T_k}^{hom})) - kX_{h\circ \pi_Q}(X_{f\circ \pi_Q}({T_k}^{hom}))+i_{X_{H}}\omega_{T_k}(X_{\{f\circ \pi_Q,h\circ \pi_Q\}})\\[8pt]&=&k[X_{f\circ\pi_Q}, X_{h\circ \pi_Q}]({T_k}^{hom})=-kX_{\{f\circ \pi_Q,h\circ \pi_Q\}}({T_k}^{hom})=0.
\end{array}$$
Thus,  (i) is proved.

(ii) Using again  Lemma \ref{lemma1}, (\ref{b2}) and Proposition \ref{WT}, 
$$\begin{array}{rcl}
d(i_{X_{H}}\omega_{T_k})(X_{f\circ \pi_Q}, X_{Y^\ell})&=& X_{f\circ \pi_Q}(i_{X_{H}}\omega_{T_k}(X_{Y^\ell}))-X_{Y^\ell}(i_{X_{H}}\omega_{T_k}(X_{f\circ \pi_Q}))-i_{X_{H}}\omega_{T_k}([X_{f\circ \pi_Q},X_{Y^\ell}])\\[8pt]&=&
X_{f\circ \pi_Q}\left( k({\mathcal L}_YT_k) -i_{\flat(Y)}R_{T_k}\right)^{hom}  + (i_{dV}i_{\flat(Y)}{T_k})^{hom} - kX_{Y^\ell}(X_{f\circ \pi_Q}({T_k}^{hom}))\\[8pt]&&+i_{X_{H}}\omega_{T_k}(X_{\{f\circ \pi_Q,Y^\ell\}})
\\[8pt]&=&(ki_{df}{\mathcal L}_Y{T_k}-i_{df}i_{\flat(Y)}R_{T_k})^{hom}+ (i_{dV}i_{df}i_{\flat(Y)}{T_k})^{hom}-k({\mathcal L}_Yi_{df}T_k)^{hom}\\[8pt]&&+ i_{X_{H}}\omega_{T_k}(X_{Y(f)\circ \pi_Q})
\\[8pt]&=&-k(i_{d(Y(f))}T_k)^{hom}+ (i_{dV}i_{df}i_{\flat(Y)}{T_k})^{hom}+ k(i_{d(Y(f))}T_k)^{hom}\\[8pt]&=&
-(i_{df}i_{\flat(Y)}R_{T_k})^{hom}+ (i_{dV}i_{df}i_{\flat(Y)}{T_k})^{hom}.
\end{array}$$
This proves (ii).

(iii) If $X,Y\in {\mathfrak X}(Q)$ then, using (\ref{cH}), (\ref{b3}) and item (b) in Lemma \ref{lemma1} and Proposition \ref{WT}, 
$$\begin{array}{rcl}
d(i_{X_{H}}\omega_{T_k})(X_{Y^\ell}, X_{Z^\ell})&=& X_{Y^\ell}(i_{X_{H}}\omega_{T_k}(X_{Z^\ell}))-X_{Z^\ell}(i_{X_{H}}\omega_{T_k}(X_{Y^\ell}))-i_{X_{H}}\omega_{T_k}([X_{Y^\ell},X_{Z^\ell}])\\[8pt]&=&X_{Y^\ell}\left((k{\mathcal L}_ZT_k-i_{\flat(Z)}R_{T_k})^{hom}+ i_{dV}i_{\flat(Z)}{T_k})^{hom}\right)\\[8pt]
&&-X_{Z^\ell}\left((k{\mathcal L}_YT_k-i_{\flat(Y)}{R_{T_k}})^{hom}+(i_{dV}i_{\flat(Y)}{T_k})^{hom}\right) + i_{X_{H}}\omega_{T_k}(X_{\{Y^\ell,Z^\ell\}})\\[8pt]&=&-({\mathcal L}_Yi_{\flat(Z)}{R_{T_k}})^{hom}+(k{\mathcal L}_Y{\mathcal L}_ZT_k)^{hom}+({\mathcal L}_Y i_{dV}i_{\flat(Z)}{T_k})^{hom}+({\mathcal L}_Zi_{\flat(Y)}{R_{T_k}})^{hom} \\[8pt]&&-k({\mathcal L}_Z{\mathcal L}_YT_k)^{hom}-({\mathcal L}_Z i_{dV}i_{\flat(Y)}{T_k})^{hom} -i_{X_{H}}\omega_{T_k}(X_{[Y,Z]^\ell})
\\[8pt]&=&-({\mathcal L}_Yi_{\flat(Z)}{R_{T_k}})^{hom}+k({\mathcal L}_Y{\mathcal L}_ZT_k)^{hom}+({\mathcal L}_Y i_{dV}i_{\flat(Z)}{T_k})^{hom}+({\mathcal L}_Zi_{\flat(Y)}{R_{T_k}})^{hom}\\[8pt]&&-k({\mathcal L}_Z{\mathcal L}_YT_k)^{hom}-({\mathcal L}_Z i_{dV}i_{\flat(Y)}{T_k})^{hom} + (i_{\flat([Y,Z])}R_{T_k})^{hom} - k({\mathcal L}_{[Y,Z]}{T_k})^{hom} \\[8pt]&&- (i_{dV}i_{\flat([Y,Z])}{T_k})^{hom}\\[8pt]&=&
-\left({\mathcal L}_Yi_{\flat(Z)}{R_{T_k}} - {\mathcal L}_Zi_{\flat(Y)}{R_{T_k}}-i_{\flat({[Y,Z]})}R_{T_k}\right)^{hom}+\\[8pt]&&\left
({\mathcal L}_Y i_{dV}i_{\flat(Z)}{T_k}-{\mathcal L}_Z i_{dV}i_{\flat(Y)}{T_k} -i_{dV}i_{\flat([Y,Z])}{T_k}\right)^{hom}
\end{array}$$
\end{proof}

Then,  using Proposition \ref{PropTk}, we deduce the following
\begin{theorem}\label{the}
Let $T_k$ be a symmetric $(k+1,0)$-tensor field on $Q,$ with $k\geq 1$.  If  $H:T^*Q\to \mathbb{R}$ is a mechanical Hamiltonian function with kinetic energy induced by a semi-Riemannian metric $g$ on $Q$ and a potential energy $V:Q\to \mathbb{R}$. 
\begin{enumerate}
\item If $k>1$, $\Psi_{T_k}:T^{*}Q\longrightarrow T^{*}Q$  is a fouling map for $H$ if and only if $R_{T_k}=0$ and 
 $i_{dV}{T_k}=0.$
\item If $k=1$, $\Psi_{T_1}:T^{*}Q\longrightarrow T^{*}Q$  is a fouling map for $H$ if and only if $R_{T_1}=0,$
\end{enumerate}
where $R_{T_{k}}$ is the $(k+2,0)$-tensor defined in (\ref{R}).
\end{theorem}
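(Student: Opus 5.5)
The plan is to combine the reduction (\ref{3}) with Proposition \ref{PropTk}. By (\ref{3}), $\Psi_{T_k}$ is a fouling map for $H$ exactly when the three expressions in Proposition \ref{PropTk} vanish for all $f,h\in C^\infty(Q)$ and $Y,Z\in\mathfrak X(Q)$. Item (i) vanishes identically. So everything depends on (ii) and (iii). Each of these is a sum of homogeneous functions of \emph{different} degrees in the momenta:
\begin{itemize}
\item $i_{df}i_{\flat(Y)}R_{T_k}$ is a symmetric $(k,0)$-tensor (symmetric by the Remark after Proposition \ref{WT}), hence contributes degree $k$;
\item $i_{dV}i_{df}i_{\flat(Y)}T_k$ is a $(k-2,0)$-tensor, hence contributes degree $k-2$;
\item in (iii), the $R_{T_k}$-terms give degree $k$ and the $T_k$-terms give degree $k-1$.
\end{itemize}
Using $\Delta(T^{hom})=\deg\cdot T^{hom}$ and the correspondence between homogeneous functions and symmetric tensors, a sum of homogeneous functions of distinct degrees vanishes iff each piece vanishes. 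This separation is what will produce the two conditions.

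\emph{Sufficiency.} If $R_{T_k}=0$ and $i_{dV}T_k=0$, every term in (ii) and (iii) is zero, since each contains either $R_{T_k}$ or a contraction $i_{dV}T_k$ (the contractions of a symmetric tensor commute, so $i_{dV}i_{df}i_{\flat(Y)}T_k=i_{df}i_{\flat(Y)}i_{dV}T_k$). For $k=1$ I must check that the $V$-terms vanish automatically. In (ii), $i_{dV}i_{df}i_{\flat(Y)}T_1$ is a contraction of a $(2,0)$-tensor with three forms, so it is zero (degree $-1$). In (iii), the $V$-part is a degree-$0$ function, namely
\[
Y(T_1(dV,\flat Z))-Z(T_1(dV,\flat Y))-T_1(dV,\flat[Y,Z]).
\]
This equals $d\gamma(Y,Z)$, where $\gamma=\flat(i_{dV}T_1)$ — at least up to a correction involving $\flat$ not commuting with Lie derivatives. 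Here I expect the main obstacle. I would handle it by checking the relation
\[
(\mathcal L_Y\flat Z)-\flat[Y,Z]=\flat(\nabla_Z Y)+\flat\text{-terms}
\]
and showing that the leftover pieces cancel against, or are absorbed into, the degree-$0$ component of the fouling condition. In any case, the fouling property requires this to vanish only if it is not already implied by $R_{T_1}=0$. My first attempt would be to show, via the first Bianchi-type symmetry of $R_{T_1}$, that $R_{T_1}=0$ forces $d(\flat(i_{dV}T_1))$ to be controlled. Failing that, I would redo the computation in (iii) for $k=1$ in local coordinates.

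\emph{Necessity.} Take the degree-$k$ part of (ii). It gives $i_{df}i_{\flat(Y)}R_{T_k}=0$ for all $f,Y$. Since the $df$ span $T^*Q$ pointwise and $\flat$ is an isomorphism, this yields $i_\alpha i_\beta R_{T_k}=0$ for all $\alpha,\beta$, hence $R_{T_k}=0$. For $k\ge 2$, the degree-$(k-2)$ part of (ii) gives $i_{dV}i_{df}i_{\flat(Y)}T_k=0$ for all $f,Y$. By the same spanning argument this yields $i_{dV}T_k=0$ (for $k=2$ this is the scalar identity $T_2(dV,\cdot,\cdot)=0$). For $k=1$ no condition on $V$ arises from (ii), which accounts for the dichotomy in the statement, provided the degree-$0$ term in (iii) is handled as discussed above.
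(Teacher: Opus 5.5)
Your strategy is the paper's: reduce via (\ref{3}) to items (ii) and (iii) of Proposition \ref{PropTk}, then separate by degree of homogeneity. Your necessity argument (for every $k$) and your sufficiency argument for $k\geq 2$ are correct. There is one minor slip: in (iii) the $R_{T_k}$-terms have degree $k+1$, not $k$. They still differ by two from the $V$-terms, so nothing changes.

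\textbf{The gap.} It is sufficiency for $k=1$, exactly where you flagged it, and it cannot be closed because the claim is false there. There is no correction term. Since $T_1(dV,\flat Z)=g(i_{dV}T_1,Z)=\gamma(Z)$ with $\gamma=\flat(i_{dV}T_1)$, the $V$-part of (iii) is exactly $d\gamma(Y,Z)$. It is the only degree-$0$ piece, because the $R_{T_1}$-terms have degree $2$. So nothing can cancel it, and no symmetry of $R_{T_1}$ controls it.

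\textbf{Counterexample.} Take $\mathbb{R}^2$ with $g=dx^2+dy^2$, the parallel tensor $T_1=\partial_x\otimes\partial_x$ (so $R_{T_1}=0$), and $V=xy$. Then:
\begin{itemize}
\item $\Psi_{T_1}(x,y,p_x,p_y)=(x,y,p_x,0)$ and $\omega_{T_1}=dx\wedge dp_x$;
\item $X_H=p_x\partial_x+p_y\partial_y-y\partial_{p_x}-x\partial_{p_y}$;
\item $i_{X_H}\omega_{T_1}=p_x\,dp_x+y\,dx$, whose differential is $dy\wedge dx\neq 0$.
\end{itemize}
Hence $\Psi_{T_1}$ is not a fouling map for $H$.

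\textbf{Relation to the paper.} The paper's own proof has the same defect: it simply asserts that for $k=1$ condition (b) ``is always satisfied.'' This also contradicts the paper's Theorem \ref{theoremPsi}. With $s=1$ and $T_0=0$, that theorem gives that $\Psi_{T_1}$ is fouling if and only if $R_{T_1}=0$ and $d(\flat(i_{dV}T_1))=0$. The flat-plane condition (\ref{condT1}) in Section \ref{sec4} shows the same thing. For that corrected $k=1$ statement, your degree-separation argument goes through unchanged. The degree-$1$ part of (ii) forces $R_{T_1}=0$, and (iii) then reduces precisely to $d\gamma=0$.
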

\begin{proof}
From (\ref{3}), and Proposition \ref{PropTk}, $\Psi_{T_k}$ is a fouling map for $H$ if and only if 
\begin{enumerate} 
\item[(a)] $-(i_{df}i_{\flat(Y)}R_{T_k})^{hom}+ (i_{dV}i_{df}i_{\flat(Y)}{T_k})^{hom}=0$
\item[(b)] 
$ -\left({\mathcal L}_Yi_{\flat(Z)}{R_{T_k}} - {\mathcal L}_Zi_{\flat(Y)}{R_{T_k}}-i_{\flat({[Y,Z]})}R_{T_k}\right)^{hom}+\left(
{\mathcal L}_Y i_{dV}i_{\flat(Z)}{T_k}-{\mathcal L}_Z i_{dV}i_{\flat(Y)}{T_k} -i_{dV}i_{\flat([Y,Z])}{T_k}\right)^{hom}=0$
\end{enumerate}
for all $Y,Z\in {\mathfrak X}(M)$ and $f\in C^\infty(Q).$, 

In the case $k=1$ (a) is equivalent to $R_{T_k}=0$ and (b) is always satisfied. 

If $k>1$, since $(i_{df}i_{\flat(Y)}R_{T_k})^{hom}$ is  a homogeneous map of degree $k$ and $(i_{df}i_{\flat(Y)}i_{dV}{T_k})^{hom}$ is one of degree $k-2$, then (a) is equivalent to $$(i_{df}i_{\flat(Y)}R_{T_k})^{hom}=0\mbox{ and } (i_{df}i_{\flat(Y)}i_{dV}{T_k})^{hom}=0$$ 
Now, using that $i_{df}i_{\flat(Y)}R_{T_k}$ and $i_{df}i_{\flat(Y)}i_{dV}{T_k}$ are symmetric tensors, then (a) is equivalent to 
$$R_{T_k}=0 \mbox{ and } i_{dV}{T_k}=0.$$ 

Moreover, if $(a)$ is satisfied, then (b) holds. 
\end{proof}

\noindent {\bf Local expressions: } Suppose that 
$$T_k=T^{i_1\dots i_ki_{k+1}}\partial_{q^{i_1}}\otimes \dots \otimes \partial_{q^{i_{k+1}}}$$
is a symmetric tensor field of type $(k+1,0)$ such that  $(a)$, $(b)$ in Theorem \ref{the} hold. So, the corresponding bundle map $\Psi_{T_k}:T^*Q\to T^*Q$ identity of $Q$ is given by 
$$\Psi_{T_k}(q^i,p_i)=(q^i,\displaystyle\frac{1}{k!}g_{ij}T^{i_1\dots i_kj}p_{i_1}\dots p_{i_k}).$$

Moreover, if we denote by 
$$\theta_i(q,p)=\frac{1}{k!}g_{ij}T^{i_1\dots i_kj}p_{i_1}\dots p_{i_k}, \forall i,$$
then, using  (\ref{8'}), we deduce that, for every $l\in \mathbb{N},$ 
$$f_l(q,p)=2g_{i_1j}T^{i_2j_2^1\dots j_k^1j}g_{i_2j}T^{i_3j_2^2\dots j_k^2j}\dots
g_{i_{l-1}j}T^{i_lj_2^{l-1}\dots j_k^{l-1}j}g_{{i_l}j}T^{i_1j_2^l\dots j_k^lj}
p_{j_2^1}\dots p_{j_k^1}p_{j_2^2}\dots p_{j_k^2}\dots p_{j_2^{l-1}}\dots p_{j_k^{l-1}}p_{j_2^{l}}\dots p_{j_k^{l}}$$
is a $l(k-1)$-homogeneous first integral of $X_H.$ 
\subsection{The general case when $\Psi$ is a polynomial map}
Now, for mechanical Hamiltonian functions,  we consider polynomial bundle maps $\Psi:T^*Q\to T^*Q$ over the identity of order $k$, i.e. $\Psi=\Psi_{T_{0}}+\Psi_{T_{1}}+\cdots+\Psi_{T_{k}}$ as a fiberwise sum, such that for each $s\in\{0,\ldots,k\},$ $T_{s}$ is a symmetric $(s+1,0)$-tensor field on $Q$, and $\Psi_{T_s}$ the corresponding $s$-homogeneous  bundle maps over the identity induced by $T_s$ (see (\ref{Tk+1}) and (\ref{lcPhi})). 

If the local expression of $T_s$ is
$$T_s=T_s^{i_{1}\ldots i_{s+1}}\partial_{q^{i_{1}}}\otimes\ldots\otimes\partial_{ q^{i_{s+1}}},$$
then 
$$\Psi(q_i,p_i)=(q_i,g_{ij}({\bf q})(T_0^{j}({\bf q}) + \sum_{s=1,\dots k} T_s^{i_1\dots, i_sj}({\bf q})p_{i_1}\dots p_{i_s})).$$

By equating the degrees of homogeneity of the operators and using Propositions \ref{propT0} and \ref{PropTk}, we deduce the following theorem.
\begin{theorem}
\label{theoremPsi}
Let $T_{0}$ be a vector field on $Q$ and $T_{k}$ be a symmetric $(k+1,0)$-tensor field on $Q$ for each $k\in\{1,2,\ldots,s\}$. If  $H:T^*Q\to \mathbb{R}$ is a mechanical Hamiltonian function with kinetic energy induced by a semi-Riemannian metric $g$ on $Q$ and a potential energy $V:Q\to \mathbb{R}$, then 
$\Psi_T=\displaystyle\sum_{k=0}^{s}\Psi_{T_k}:T^{*}Q\longrightarrow T^{*}Q$  is a fouling map for $H$ if and only if 
\begin{enumerate}
 \item[(i)] $R_{{T}_k}=i_{dV}{T}_{k+2}$ for $k=0,\dots s-2,$
 \item[(ii)] 
$R_{{T}_k}=0$  for $k=s,s-1,$
\item[(iii)]
$d(\flat({i_{dV}T_1}))=0.$
\end{enumerate}
where $R_{T_k}$  is the $(k+2,0)$-tensor on $Q$ defined as in (\ref{R}) for the $(k+1,0)$-tensor $T_k.$ 
\end{theorem}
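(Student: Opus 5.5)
The plan is to use linearity of the condition in $\Psi$. Since $\Psi_T=\sum_k\Psi_{T_k}$ is a fiberwise sum, the associated semi-basic form is $\Psi_T^*\Theta_Q=\sum_k\Psi_{T_k}^*\Theta_Q$. Hence $\omega_{\Psi_T}=-d(\Psi_T^*\Theta_Q)=\sum_k\omega_{T_k}$, and $d(i_{X_H}\omega_{\Psi_T})=\sum_k d(i_{X_H}\omega_{T_k})$. By (\ref{3}), $\Psi_T$ is a fouling map if and only if the three families of evaluations vanish. For each summand these evaluations are given explicitly by Proposition \ref{propT0} (for $k=0$) and Proposition \ref{PropTk} (for $k\ge1$). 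Evaluations on pairs of vertical Hamiltonian vector fields vanish identically, so the whole problem reduces to items (ii) and (iii).

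\textbf{Splitting item (ii) by degree.} In item (ii) of Proposition \ref{PropTk}, the term $(i_{df}i_{\flat(Y)}R_{T_k})^{hom}$ is homogeneous of degree $k$ in the momenta, and $(i_{dV}i_{df}i_{\flat(Y)}T_k)^{hom}$ is homogeneous of degree $k-2$. For $k=0$, Proposition \ref{propT0}(ii) contributes $(i_{\flat(Y)}R_{T_0})(f)$, of degree $0$; this equals $-(i_{df}i_{\flat(Y)}R_{T_0})$ up to the sign convention for the antisymmetric $R_{T_0}$, which I must match carefully. For $k=1$, the $V$-term has degree $-1$ and so is absent; indeed $i_{dV}i_{df}i_{\flat(Y)}T_1$ is then a function. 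I expect this is the place where the $k=1$ potential term has to be tracked: Proposition \ref{WT}(b) gives $(i_{dV}i_{\flat(Y)}T_1)^{hom}$, a degree-$0$ function that enters via $d$. Since the sum is a polynomial in the momenta, it vanishes if and only if each homogeneous component vanishes. For degree $m$, the component reads
\[
-(i_{df}i_{\flat(Y)}R_{T_m})^{hom}+(i_{dV}i_{df}i_{\flat(Y)}T_{m+2})^{hom}=0 .
\]
Because $i_{\alpha}R$ is symmetric, and $f$, $Y$ are arbitrary while $\flat$ is an isomorphism, this yields $R_{T_m}=i_{dV}T_{m+2}$ for $m\le s-2$ and $R_{T_m}=0$ for $m=s-1,s$. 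These are (i) and (ii).

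\textbf{The degree-zero part of item (iii).} Assuming (i) and (ii), item (iii) cancels in each degree $\ge1$. The reason is that the $R$-terms of $T_m$ and the $V$-terms of $T_{m+2}$ appear with the same Lie-derivative structure, namely ${\mathcal L}_Y i_{\flat(Z)}-{\mathcal L}_Z i_{\flat(Y)}-i_{\flat([Y,Z])}$ applied to $-R_{T_m}+i_{dV}T_{m+2}$, which is zero. The only leftover is the degree-$0$ term from $T_1$, namely
\[
Y\big((i_{dV}i_{\flat(Z)}T_1)\big)-Z\big((i_{dV}i_{\flat(Y)}T_1)\big)-i_{dV}i_{\flat([Y,Z])}T_1 .
\]
Writing $i_{dV}i_{\flat(Y)}T_1=\langle \flat(i_{dV}T_1),Y\rangle$, this expression is exactly $d(\flat(i_{dV}T_1))(Y,Z)$. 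So condition (iii) is forced, and conversely suffices.

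The main obstacle is bookkeeping rather than new ideas. One point is reconciling the sign and form of the $k=0$ contributions from Proposition \ref{propT0} with those of Proposition \ref{PropTk}, so that the $m=0$ equation indeed reads $R_{T_0}=i_{dV}T_2$. The other is confirming that the degree-$0$ piece of (ii) coming from $T_1$ is consistent, i.e. that no extra constraint beyond (iii) appears. I would first verify both in local coordinates using (\ref{1}) and (\ref{hamiltonian}).
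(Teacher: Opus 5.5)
Your proposal is correct and follows the paper's proof essentially step by step. You use linearity of $\omega_{\Psi_T}$ in the summands and the vanishing of the evaluations on pairs of vertical Hamiltonian fields, then split the mixed evaluation by degree (using symmetry of $i_{df}i_{\flat(Y)}R_{T_m}$) to get (i)–(ii), and finally reduce the evaluation on pairs of linear Hamiltonian fields to $d(\flat(i_{dV}T_1))(Y,Z)$. Both bookkeeping worries you flag are harmless: at $m=0$, $R_{T_0}$ is skew while $i_{dV}T_2$ is symmetric, so either sign forces $R_{T_0}=0=i_{dV}T_2$; and the $T_1$ potential term $(i_{dV}i_{\flat(Y)}T_1)^{hom}$ is basic, hence killed by the vertical field $X_{f\circ\pi_Q}$, so it contributes only to condition (iii).
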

\begin{proof}
If $T=\displaystyle\sum_{k=0}^sT_k$, using (i) of Propositions \ref{propT0} and \ref{PropTk} we have that
$$d(i_{X_{H}}\omega_{T})(X_{f\circ \pi_Q}, X_{h\circ \pi_Q})=0$$
On the other hand, from (ii) of Propositions \ref{propT0} and \ref{PropTk}

\begin{equation}\label{iip}
\begin{array}{rcl}
d(i_{X_{H}}\omega_{T})(X_{f\circ \pi_Q}, X_{Y^l})&=&-\displaystyle\sum_{k=0}^s(i_{df}i_{\flat(Y)}R_{T_k})^{hom}+ \displaystyle\sum_{k=2}^s(i_{dV}i_{df}i_{\flat(Y)}T_k)^{hom},\end{array}
\end{equation}
for all $f\in C^\infty(Q)$ and $Y\in {\mathfrak X}(Q)$. In the same way, using (iii) of these propositions, we deduce that 
\begin{equation}\label{iiip}
\begin{array}{rcl}
d(i_{X_{H}}\omega_{T})(X_{Y^\ell}, X_{Z^\ell})&=&-\displaystyle\sum_{k=0}^s\left({\mathcal L}_Yi_{\flat(Z)}R_{T_k}-{\mathcal L}_Zi_{\flat(Y)}R_{T_k}-i_{\flat([Y,Z])}R_{T_k}\right)^{hom}\\[8pt]&&+\displaystyle\sum_{k=1}^s\left({\mathcal L}_Y i_{dV}i_{\flat(Z)}{T_k}-{\mathcal L}_Z i_{dV}i_{\flat(Y)}{T_k} -i_{dV}i_{\flat_{[Y,Z]}}T_k\right)^{hom}\end{array}
\end{equation}
Then $\Psi_T=\displaystyle\sum_{k=0}^{s}\Psi_{T_k}:T^{*}Q\longrightarrow T^{*}Q$  is a fouling map for $H$ if and only if (\ref{iip}) and (\ref{iiip}) are null. So, (\ref{iip}) being zero is equivalent to 
$$-\sum_{k=0}^s(i_{df}i_{\flat(Y)}R_{T_i})^{hom} + \sum_{k=2}^s(i_{df}i_{\flat(Y)}i_{dV}T_k)^{hom}=0.$$
Now, equalizing polynomials of the same degree, we have that 
$$(i_{df}i_{\flat(Y)}R_{T_k})^{hom}=(i_{df}i_{\flat(Y)}i_{dV}T_{k+2})^{hom} \mbox{ for all } k=0,\dots, s-2,$$$$(i_{df}i_{\flat(Y)}R_{T_{k-1}})^{hom}=0\mbox{ and }(i_{df}i_{\flat(Y)}R_{T_k})^{hom}=0.$$
Note that $i_{df}i_{\flat(Y)}R_{T_k}$ and $i_{df}i_{\flat(Y)}i_{dV}T_{k+2}$  are symmetric tensors. Thus, 
\begin{equation}\label{Tk}
R_{T_s}=0,\;\; R_{T_{s-1}}=0\mbox{ and }R_{T_k}=i_{dV}T_{k+2}, \mbox{ for all } k=0,\dots s-2.
\end{equation}

Finally, if we suppose that (\ref{Tk}) holds, then the nullification of (\ref{iiip}) is 
$$0={\mathcal L}_Y i_{\flat(Z)}{i_{dV}T_1}-{\mathcal L}_Z i_{\flat(Y)}{i_{dV}T_1} -i_{\flat_{[Y,Z]}}i_{dV}T_1,$$
that is, 
$$0=Y(\flat(i_{dV}T_1)(Z))-Z(\flat(i_{dV}T_1)(Y)) -(\flat(i_{dV}T_1))([Y,Z])=d(\flat(i_{dV}T_1))(Y,Z).$$
Equivalently, the $1$-form $\flat({i_{dV}T_1})$ on $Q$ is closed. 
\end{proof}

\begin{remark}
\begin{enumerate}
\item[(a)]
Since $R_{T_0}$ is a $2$-vector and $T_2$ is symmetric, the condition $R_{T_0}=i_{dV}T_2$ is equivalent to $R_{T_0}=0$ and $i_{dV}T_2=0.$

\item[(b)] Let us remember that given a fouling map $\Psi_T$ for a mechanical Hamiltonian function $H$, there exists an associated $(1,1)-$tensor field $L_{\Psi}$ that is invariant under the Hamiltonian flow; therefore, the traces of its powers are constants of motion. In the special case when $\Psi$ is a polynomial map $\Psi_T=\displaystyle\sum_{k=0}^{s}\Psi_{T_k}:T^{*}Q\longrightarrow T^{*}Q$, such constants of motion are polynomial functions of the momentum coordinates.
\end{enumerate}
\end{remark}

\begin{corollary}
Let  $T_{k}$ be a symmetric $(k+1,0)$-tensor field on $Q$ for each $k\in\{0,2,\ldots,s\}$. If  $K:T^*Q\to \mathbb{R}$ is the kinetic energy induced by a semi-Riemannian metric $g$ on $Q$,  then 
$\Psi_T=\displaystyle\sum_{k=0}^{s}\Psi_{T_k}:T^{*}Q\longrightarrow T^{*}Q$  is a fouling map for $H$ if and only if $R_{{T}_k}=0$ for $k=0,\dots ,s.$
\end{corollary}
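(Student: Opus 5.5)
This corollary is the special case $V=0$ of Theorem \ref{theoremPsi}. With $V$ constant, the kinetic energy $K=(g^*)^{hom}$ is a mechanical Hamiltonian function with zero potential. I would therefore apply Theorem \ref{theoremPsi} with $H=K$ and $dV=0$, and then simplify each of its three conditions.

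The substitution goes as follows. Every term of the form $i_{dV}T_{k+2}$ vanishes, so condition (i) of Theorem \ref{theoremPsi} becomes $R_{T_k}=0$ for $k=0,\dots,s-2$. Condition (ii) already gives $R_{T_s}=R_{T_{s-1}}=0$. Condition (iii) reads $d(\flat(i_{dV}T_1))=d(0)=0$, which holds trivially. Together these say exactly that $R_{T_k}=0$ for every $k=0,\dots,s$.

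One point needs care. The index set in the corollary is $\{0,2,\ldots,s\}$, so $T_1$ may be absent. If it is, take $T_1=0$; then $R_{T_1}=0$ automatically and nothing changes. If the list is meant to include $T_1$, the argument above already covers it, since (iii) is vacuous. The degenerate cases $s=0$ and $s=1$ are handled by (ii) alone: for $s=1$ it involves $T_0$ and $T_1$, and for $s=0$ one uses the Corollary after Proposition \ref{propT0}.

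For the converse there is nothing further to check. Theorem \ref{theoremPsi} is an equivalence, so the simplified conditions are both necessary and sufficient.

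The only potential obstacle is whether Theorem \ref{theoremPsi}, which is proved via Propositions \ref{propT0} and \ref{PropTk}, allows $V=0$. It does: both propositions hold for an arbitrary smooth potential, and the zero potential is one of them. As a check on the mechanism, one can also argue directly. In identity (\ref{iip}) the $dV$-terms drop out, and matching homogeneous components degree by degree, together with the symmetry of $i_{df}i_{\flat(Y)}R_{T_k}$, forces each $R_{T_k}$ to vanish. Expression (\ref{iiip}) then vanishes identically.
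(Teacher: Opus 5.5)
Your proposal is correct and takes the same route as the paper, which gives no separate proof and treats this corollary as the immediate $V=0$ specialization of Theorem \ref{theoremPsi}: every term $i_{dV}T_{k+2}$ vanishes, condition (iii) becomes trivial, and (i)--(ii) reduce to $R_{T_k}=0$ for $k=0,\dots,s$. Your handling of the apparently typographical index set $\{0,2,\ldots,s\}$ by setting $T_1=0$ when it is absent, and of the degenerate case $s=0$ via the corollary after Proposition \ref{propT0}, is also fine.
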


\section{Fouling maps for mechanical Hamiltonian functions on the Euclidean plane}
\label{sec4}

Let us consider $Q=E^{2}$ with the plane metric \citep{lee2018introduction,carroll2004introduction} given in cartesian coordinates $(x,y)$ by $g=dx^{2}+dy^{2}$, or in matrix form by
\begin{equation*}
(g_{ij})=\left( \begin{array}{cc}
1&0\\
0&1
\end{array} \right)=(g^{ij}).
\end{equation*}
The expression in coordinates $(x,y)$ of mechanical Hamiltonian functions on $(E^{2},g)$ is of the form
\begin{equation}
H=\frac{1}{2}(p_{x}^{2}+p_{y}^{2})+V(x,y),
\end{equation}
for $V\in C^{\infty}(E^{2})$.
Here $(x,y,p_x,p_y)$ are coordinates on $T^*E^2.$

\subsection{The case of polynomial of degree 1}
Let $T_{0}$ and $T_{1}$ be a vector field  on $E^2$ and a symmetric $(1,1)$-tensor field on $TE^{2},$ respectively.  From Theorem \ref{theoremPsi}, we deduce that $\Psi_T=\Psi_{T_0} + \Psi_{T_1}$ is a fouling map for $H$ if and only if 
\begin{enumerate}
    \item[(i)] $R_{T_0}=0$,
    \item[(ii)] $R_{{T}_{1}}=0$ ,
    \item[(iii)] $d(\flat(i_{dV}{T}_{1}))=0$
\end{enumerate}

In coordinates $(x,y)$ we write
\begin{equation*}
T_{0}=A_{0}\partial_x+B_{0}\partial_y,\quad A_{0},B_{0}\in C^{\infty}(E^{2}).
\end{equation*}
The condition (i) is equivalent to $d(\flat(T_0))=d(A_0dx+B_0dy)=0$ which implies that  
\begin{equation}\label{condT0}
\partial_xB_{0}-\partial_yA_{0}=0.
\end{equation}

On the other hand, for the symmetry tensor $T_1$, we have that 
\begin{equation}\label{T1}
{T}_{1}=A_1\partial_x\otimes \partial_x+B_1(\partial_x\otimes \partial_y+\partial_y\otimes \partial_x)+C_1\partial_y\otimes \partial_y,
\end{equation}
with $A_1,B_1,C_1\in C^{\infty}(E^{2}).$

Condition (ii) can be read
\begin{equation}\label{RT1}
\left\lbrace \begin{array}{l}
R_{{T}_1}(dx,dx,dx)={\partial_x}A_{1}=0,\;\;\;\;
R_{{T}_1}(dx,dx,dy)={\partial_x}B_{1}-{\partial_y}A_{1}=0,\\[8pt]
R_{{T}_1}(dx,dy,dx)={\partial_y}A_{1}=0,\;\;\;\;
R_{{T}_1}(dy,dy,dx)=2{\partial_y}B_{1}-{\partial_x}C_{1}=0\\[8pt]
R_{{T}_1}(dy,dx,dy)={\partial_x}C_{1}=0,\;\;\;\;
R_{{T}_1}(dy,dy,dy)={\partial_y}C_{1}=0\ .
\end{array} \right.
\end{equation}
The solution to this system of differential equations is $A_{1}=a_{1},B_{1}=b_{1},C_{1}=c_{1}\in\mathbb{R}.$ 

Finally, condition (iii) implies 
\[
\flat(i_{dV}T_1)=
(a_1\partial_xV+b_1\partial_yV)\,dx
+
(b_1\partial_xV+c_1\partial_yV)\,dy.
\]
So, this $1$-form is closed if and only if  
\begin{equation}\label{condT1}(a_1-c_1)\partial_y\partial_xV+b_1(\partial_y\partial_yV-\partial_x\partial_xV)=0
\end{equation}

In conclusion, $\Psi_T=\Psi_{T_0} + \Psi_{T_1}$ is a fouling map for $H$ if and only if $T_{0}=A_{0}\partial_x+B_{0}\partial_y$ satisfies (\ref{condT0})  and the coefficients of $T_1$ are constant and (\ref{condT1}) holds.

\begin{example}
We consider the Hamiltonian function $H=\frac{1}{2}(p_{x}^{2}+p_{y}^{2})+\frac{1}{2}(x^{2}+y^{2})$ for the isotropic Harmonic oscillator. 

Let us take $A_{0}=xy,B_{0}=\frac{1}{2}x^{2}$, then
\begin{equation}
{\partial_x}B_{0}-{\partial_y}A_{0}=x-x=0.
\end{equation}

 On the other hand, we have that 
(\ref{condT1}) holds  for any real value of $a_{1},b_{1},c_{1}\in {\Bbb R}$. So, 
\begin{equation}
\Psi_T(x,y,p_{x},p_{y})=(x,y,xy+a_{1}p_{x}+b_{1}p_{y},\frac{1}{2}x^{2}+c_{1}p_{y}+b_{1}p_{x})
\end{equation}
is a fouling map for $H$. In the particular case $a_{1}=2,b_{1}=-1,c_{1}=1$ we have the canonoid (fouling) transformation given in \citep{NOT87,CR88}. The associated constants of motion are real constants. 
\end{example}

\subsection{The case of polynomial of degree 2}
Now,  let us consider a symmetric $(i+1,0)$-tensor field ${T}_i$ for each $i\in \lbrace 1,2\rbrace$, and a vector field $T_{0}$ on $E^2.$  The conditions in Theorem \ref{theoremPsi} reduce to:
\begin{enumerate}
    \item[(i)] $R_{{T}_0}=i_{dV}{T}_2,$
    \item[(ii)] $R_{T_{1}}=0$ and $R_{T_{2}}=0,$
    \item[(iii)]  $d(\flat(i_{dV}T_1))=0.$
\end{enumerate}

Note that $R_{{T}_0}$ is skew-symmetric and 
$i_{dV}{T}_2$ is symmetric. Then $R_{{T}_0}=0$ and $i_{dV}{T}_2=0,$ or equivalently, $d(\flat(T_0))=0$ and $i_{dV}{T}_2=0.$ 

From the symmetry of ${T}_{2},$ we have that 
\begin{equation}
\label{T2}
\begin{array}{rcl}
{T}_{2}=A_{2}\displaystyle\partial_x\otimes \partial_x\otimes\partial_x +B_{2}(\partial_y\otimes\partial_x\otimes \partial_x+ \partial_x\otimes\partial_y\otimes\partial_x+ \partial_x\otimes\partial_x\otimes\partial_y)\\[8pt]
+C_{2}(\displaystyle\partial_x\otimes\partial_y\otimes \partial_y+ \partial_y\otimes\partial_x\otimes\partial_y+ \partial_y\otimes\partial_y\otimes\partial_x)+D_{2}\partial_y\otimes \partial_y\otimes\partial_y\end{array}
\end{equation}
with $A_{2},B_{2},C_{2},D_{2}\in C^{\infty}(E^{2})$.
Now
\begin{equation}\label{RT2}
\left\{
\begin{array}{l}
R_{\widehat{T}_{2}}(dx,dx,dx,dx)=2{\partial_x}A_2,\;\;\;\; R_{\widehat{T}_{2}}(dx,dx,dx,dy)=3{\partial_x}B_2-{\partial_y}A_2,\\[8pt]
R_{\widehat{T}_{2}}(dx,dx,dy,dx)={\partial_x}B_2+{\partial_y}A_2,\;\;\;\;
R_{\widehat{T}_{2}}(dx,dx,dy,dy)=2{\partial_x}C_2,\\[8pt] R_{\widehat{T}_{2}}(dy,dy,dx,dx)=2{\partial_y}B_2,\;\;\;\;
R_{\widehat{T}_{2}}(dy,dy,dy,dx)=3{\partial_y} C_2-{\partial_x}D_2,\\[8pt] R_{\widehat{T}_{2}}(dy,dy,dx,dy)={\partial_y}C_2+{\partial_x}D_2,\;\;\;\;
R_{\widehat{T}_{2}}(dy,dy,dy,dy)={\partial_y}D_2.
\end{array}\right.
\end{equation}
Then, the solution of $R_{{T}_2}=0$ is $A_{2}=a_{2},B_{2}=b_{2},C_{2}=c_{2},D_{2}=d_{2}\in\mathbb{R}$.

Moreover,  the equation $i_{dV}{T}_2=0$ is equivalent to  
\begin{equation}
\label{systemT2}
\left\lbrace \begin{array}{c}
a_{2}\displaystyle{\partial_x}V+b_{2}{\partial_y}V=0,\\[8pt] 
b_{2}{\partial_x}V+c_{2}{\partial_y}V=0,\\[8pt]
c_{2}{\partial_x}V+d_{2}{\partial_y}V=0.
\end{array} \right.
\end{equation}

The conditions $R_{{T}_1}=0$ and $d(\flat(i_{dV}T_1))=0$ can be analyzed as in the polynomial case of order $1$. So 
$$T_{1}=a_{1}\partial_x\otimes \partial_x + b_{1}(\partial_x\otimes {\partial_y}+ {\partial_y y}\otimes \partial_x)+ c_1{\partial_y}\otimes {\partial_y}$$ with $a_{1},b_{1},c_1\in \R$ satisfying (\ref{condT1}).

In conclusion,  $\Psi_T=\Psi_{T_{0}}+\Psi_{T_{1}}+\Psi_{T_{2}}$ is a fouling map for the mechanical Hamiltonian function $H=\frac{1}{2}(p_{x}^{2}+p_{y}^{2})+V(x,y)$ if and only if the $1$-form $\flat(T_0)$ is closed, the coefficients of the tensor $T_1$ are constant and satisfy (\ref{condT1}) and the coefficients of the tensor $T_2$ are constant and satisfy (\ref{systemT2}).
 
\begin{example}\label{5.2}
Let again $V(x,y)=\frac{1}{2}(x^{2}+y^{2})$. System (\ref{systemT2}) reduces to
\begin{equation}
\left\lbrace \begin{array}{c}
a_{2}x+b_{2}y=0,\\ 
b_{2}x+c_{2}y=0,\\
c_{2}x+d_{2}y=0,
\end{array} \right.
\end{equation}
which implies that $a_{2}=b_{2}=c_{2}=d_{2}=0$, therefore $T_{2}=0$. 
\end{example}

\begin{example}
Now let us consider the function $V(x,y)=x$. System (\ref{systemT2}) reduces to
\begin{equation}
\left\lbrace \begin{array}{c}
a_{2}x=0,\\ 
b_{2}x=0,\\
c_{2}x=0,
\end{array} \right.
\end{equation}
which implies that $a_{2}=b_{2}=c_{2}=0$ and $d_{2}\in \mathbb{R}$; and equation (\ref{condT1}) is satisfied for any values of $a_{1},b_{1},c_{1}$. So for any real values of $a_{1},b_{1},c_{1},d_{2}$, the map
\begin{equation}
\Psi_T(x,y,p_{x},p_{y})=(x,y,A_{0}+a_{1}p_{x}+b_{1}p_{y},B_{0}+c_{1}p_{y}+b_{1}p_{x}+d_{2}p_{y}^{2})
\end{equation}
a fouling map for $H=\frac{1}{2}(p_{x}^{2}+p_{y}^{2})+x$, with $A_0,B_0\in C^\infty(E^2)$ satisfying (\ref{condT0}).

By taking $A_{0}=B_0=0$ we have that the associated constants of motion (up to the power $2$ of the tensor $L_{\Psi_T}$) are respectively
\begin{equation}
f_{1}=2a_1+2c_1+4d_2p_{y}\quad\textit{and}\quad f_{2}=2a_1^2+4b_1^2+2(c_1+2d_2p_y)^2.
\end{equation}
\end{example}

\subsection{The case of polynomial of degree 3}
Consider a symmetric $(i+1,0)$-tensor field $T_i$ for each $i\in \lbrace 1,2,3\rbrace$, and a vector field  $T_{0}$ on $E^2.$ The conditions in Theorem \ref{theoremPsi} reduce to
\begin{enumerate}
    \item[(i)] $R_{{T}_0}=i_{dV}T_{2}$ or equivalently, $R_{{ T}_0}=0$ and $i_{dV} T_{2}=0,$
    \item[(ii)] $d(\flat(i_{dV}T_{1}))=0$,
    \item[(iii)] $R_{ T_{1}}=i_{dV}{T}_{3}$, 
    \item[(iv)] $R_{T_{2}}=0$ and $R_{T_{3}}=0$.
\end{enumerate}

Since $\widehat{T}_{3}$ is symmetric then 

\begin{equation}
\begin{array}{rcl}
{T}_{3}&=&A_{3}\displaystyle\partial_x\otimes\partial_x\otimes\partial_x\otimes\partial_x + F_{3}\displaystyle\partial_y\otimes\partial_y\otimes\partial_y\otimes\partial_y\\[8pt]&&
+B_{3}(\displaystyle\partial_x\otimes\partial_x\otimes\partial_x\otimes\partial_y+ \partial_x\otimes\partial_x\otimes\partial_y\otimes\partial_x+\partial_x\otimes\partial_y\otimes\partial_x\otimes\partial_x+\partial_y\otimes\partial_x\otimes\partial_x\otimes\partial_x)\\[8pt]&&+C_{3}
(\displaystyle\partial_x\otimes\partial_x\otimes\partial_y\otimes\partial_y + \partial_x\otimes\partial_y\otimes\partial_x\otimes\partial_y+ \partial_y\otimes\partial_y\otimes\partial_x\otimes\partial_x+\partial_y\otimes\partial_x\otimes\partial_y\otimes\partial_x\\[8pt]&&\displaystyle+\partial_y\otimes\partial_x\otimes\partial_x\otimes\partial_y+ \partial_x\otimes\partial_y\otimes\partial_y\otimes\partial_x)\\[8pt]&&
+D_{3}(\displaystyle\partial_y\otimes\partial_y\otimes\partial_y\otimes\partial_x+ \partial_y\otimes\partial_y\otimes\partial_x\otimes\partial_y+\partial_y\otimes\partial_x\otimes\partial_y\otimes\partial_y+\partial_x\otimes\partial_y\otimes\partial_y\otimes\partial_y)
\end{array}
\end{equation}
with $A_{3},B_{3},C_{3},D_{3}, F_3\in C^{\infty}(E^{2})$.

Now, 
$$
\left\{
\begin{array}{l}
R_{{T}_{3}}(dx,dx,dx,dx,dx)=3{\partial_x}A_3,\;\;\;\; R_{{T}_{3}}(dx,dx,dx,dx,dy)=4{\partial_x}B_3-{\partial_y}A_3,\\[8pt]
R_{{T}_{3}}(dx,dx,dx,dy,dx)=2{\partial_x}B_3+ {\partial_y}A_3,\;\;\;\; 

R_{{T}_{3}}(dx,dx,dx,dy,dy)=3{\partial_x}C_3\\[8pt]

R_{{T}_{3}}(dx,dx,dy,dy,dx)=2{\partial_y}B_3+{\partial x}C_3\;\;\; \; R_{{T}_{3}}(dx,dx,dy,dy,dy)=2{\partial_x}D_3+{\partial_y}C_3\\[8pt]

R_{{T}_{3}}(dx,dy,dx,dy,dx)={\partial_x}C_3+ 2 {\partial_y}B_3,\;\;\;\;
R_{{T}_{3}}(dx,dy,dy,dy,dy)={\partial_x}F_3 + 2{\partial_y}D_3,\\[8pt]

R_{{T}_{3}}(dy,dy,dy,dy,dx)=4{\partial_y}D_3 -{\partial_x}F_3,\;\;\;\;

\displaystyle R_{{T}_{3}}(dy,dy,dy,dy,dy)=3{\partial_y}F_3.
\end{array}\right.
$$

The solution to the system $R_{T_3}=0$ is $A_{3}=a_{3},B_{3}=b_{3},C_{3}=c_{3},D_{3}=d_{3},F_{3}=f_{3}\in\mathbb{R}$.

On the other hand,
$$
\left\{
\begin{array}{l}
i_{dV}{T}_{3}(dx,dx.dx)=a_3{\partial_x}V + b_3{\partial_y}V,\;\;\;\; i_{dV}{T}_{3}(dx,dx.dy)=b_3{\partial_x}V + c_3{\partial_y}V\\[8pt]

i_{dV}{T}_{3}(dx,dy.dx)=b_3{\partial_x}V + c_3{\partial_y}V,\;\;\;\; i_{dV}{T}_{3}(dy,dy.dx)=c_3{\partial_x}V + d_3{\partial_y}V\\[8pt]

i_{dV}{T}_{3}(dx,dy.dy)=c_3{\partial_x}V + d_3{\partial_y}V,\;\;\;\;
i_{dV}{T}_{3}(dy,dy.dy)=d_3{\partial_x}V + f_3{\partial_y}V
\end{array}\right.$$
So, ${T}_1$ is like as in (\ref{T1}), the equation $R_{\widehat T_1}=i_{dV}{T}_{3}$ (see (\ref{RT1})) is equivalent to
\begin{equation}\label{T1T3}
\left\{
\begin{array}{l}
{\partial_x}A_1=a_3
{\partial_x}V + b_3{\partial_y}V,\;\;\;\;  2{\partial_x} B_1-{\partial_y} A_1= b_3{\partial_x}V + c_3{\partial_y}V\\[8pt]

{\partial_y}A_1=b_3{\partial_x}V + c_3{\partial_y}V,\;\;\;\; 2{\partial_y}B_1 - {\partial_x}C_1=c_3{\partial_x}V + d_3{\partial_y}V\\[8pt]

{\partial_x }C_1=c_3{\partial_x}V + d_3{\partial_y}V,\;\;\;\;
{\partial_y}C_1=d_3{\partial_x}V + f_3{\partial_y}V
\end{array}\right.
\end{equation}

Note that 
$${\partial_y}A_1={\partial_x}B_1, \mbox{ and } {\partial_x}C_1={\partial_y}B_1.$$
The condition $d(\flat(i_{dV}T_{1}))=0$ can be re-written as

\begin{equation}\label{dVT1}
(C_{1}-A_{1}){\partial_x\partial_y}V+B_{1}(\partial_x\partial_xV-\partial_y\partial_yV)=0.
\end{equation}

Finally, $T_2$ is a symmetric tensor of type $(2,0)$ as (\ref{T2}) with $R_{{T}_2}=0$ and $i_{dV} T_2=0$. That is, its coefficients are constant and satisfy (\ref{systemT2}).

In conclusion,  $\Psi_T=\Psi_{T_{0}}+\Psi_{T_{1}}+\Psi_{T_{2}}+\Psi_{T_{3}}$ is a fouling map for the mechanical Hamiltonian function $H=\frac{1}{2}(p_{x}^{2}+p_{y}^{2})+V(x,y)$ if and only if the $1$-form $\flat(T_0)$ is closed, the coefficients of the tensor $T_1$ satisfy (\ref{dVT1}) , the coefficients of the tensors $T_2$ and  $T_3$ are constant, and the coefficients of $T_3$ and those of $T_1$ are related by  (\ref{T1T3}) .
\begin{example}
Let once again $V(x,y)=\frac{1}{2}(x^{2}+y^{2})$. The coefficients of $T_{1}$ and $T_{3}$ are related by
 $$A_{1}=\displaystyle\frac{1}{2}a_3x^{2}+b_{3}xy+\frac{1}{2}c_{3}y^{2}$$, $$B_{1}=\displaystyle\frac{1}{2}b_{3}x^{2}+c_{3}xy+\frac{1}{2}d_{3}y^{2},$$ $$C_{1}=\displaystyle\frac{1}{2}c_{3}x^{2}+d_{3}xy+\frac{1}{2}f_{3}y^{2}.$$

Equation (\ref{dVT1}) is satisfied for any value of $a_{3},b_{3},c_{3},d_{3},f_{3}$.

In the particular case of $A_{0}=B_{0}=0$, $a_{3}=f_{3}=1$ and $b_{3}=c_{3}=d_{3}=0$, we have the bundle map
\begin{equation}
\Psi_T(x,y,p_{x},p_{y})=(x,y,\frac{1}{2}x^{2}p_{x}+\frac{1}{6}p_{x}^{3},\frac{1}{2}y^{2}p_{y}+\frac{1}{6}p_{y}^{3}),
\end{equation}
which is a canonoid transformation given in \citep{NOT87,CR88}.

The associated constants of motion (up to the power $2$ of the tensor $L_{\Psi_T}$) are
\begin{equation}
f_{1}=x^2 + y^2 + p_x^2 +p_y^2\quad and \quad f_{2}=\frac{1}{2}(p_y^2 + y^2)^2 + \frac{1}{2}(p_x^2 + x^2)^2.
\end{equation}
Of course, the eigenvalues of the $(1,1)-$tensor field $L_{\Psi_T}$ are also constants of motion; in this case, they are also polynomial functions on the momentum coordinates, namely
\begin{equation}
e_{1}=\frac{1}{2}(p_x^2 + x^2)\quad\textit{and}\quad e_{2}=\frac{1}{2}(p_y^2 + y^2).
\end{equation}
\end{example}

\section{Fouling maps for mechanical Hamiltonian functions on the 2-sphere}
\label{sec5}

Let us consider $Q=S^{2}$ with the round metric \citep{lee2018introduction,carroll2004introduction} given in spherical coordinates $(\theta,\phi)$ by $g=d\theta^{2}+sin^{2}\theta d\phi^{2}$, or in matrix form by
\begin{equation}
(g_{ij})=\left( \begin{array}{cc}
1&0\\
0&sin^{2}\theta
\end{array} \right).
\end{equation}
We have that
\begin{equation}
(g^{ij})=\left( \begin{array}{cc}
1&0\\
0&\displaystyle\frac{1}{sin^{2}\theta}
\end{array} \right).
\end{equation}

Then, 
$$\sharp(d\theta)=\partial_\theta \mbox{ and } \sharp(d\phi)=\frac{1}{\sin^2\theta}\partial_\phi$$
and therefore 
$$\flat(\partial_\theta)=d\theta \mbox{ and } \flat(\partial_\phi)=\sin^2\theta d\phi.$$

The non-null Christoffel symbols of this metric $g$ are 
\begin{equation}\label{sC}
\Gamma_{\phi\phi}^\theta=-\sin\theta\cos\theta\mbox{ and }\Gamma_{\theta\phi}^\phi=\Gamma_{\phi\theta}^\phi=\cot\theta.
\end{equation}

Moreover, 
\begin{equation}\label{nablaS2}
\nabla_{\partial_\theta}d\theta=0\;\;\; \nabla_{\partial_\theta}d\phi=-\cot\theta d\phi,\;\;\; \nabla_{\partial_\phi}d\theta=\sin\theta\cos\theta d\phi,\;\;\; \nabla_{\partial_\phi}d\phi=-\cot\theta d\theta.
\end{equation}

The expression in coordinates $(\theta,\phi)$ of mechanical Hamiltonian functions on $(S^{2},g)$ is of the form
\begin{equation}
H=\frac{1}{2}(p_{\theta}^{2}+\frac{1}{sin^{2}\theta}p_{\phi}^{2})+V(\theta,\phi),
\end{equation}
for $V\in C^{\infty}(S^{2})$.

\subsection{The case of polynomial of degree 1}
Let $T_{0}$ and $T_{1}$ be a vector field and a $(2,0)$-tensor field on $S^{2},$ respectively. From Theorem \ref{theoremPsi}, we deduce that $\Psi=\Psi_{T_0} + \Psi_{T_1}$ is a fouling map for $H$ if and only if 
\begin{enumerate}
    \item[(i)] $d(\flat(T_{0}))=0$ (or equivalently $R_{T_0}=0$),
    \item[(ii)] $R_{{T}_{1}}=0,$
    \item[(iii)] $d(\flat(i_{dV}(T_{1}))=0.$
\end{enumerate}
In coordinates $(\theta,\phi)$, if 
\begin{equation}
T_{0}=A_{0}\partial_\theta+B_{0}\partial_\phi,\quad A_{0},B_{0}\in C^{\infty}(S^{2})
\end{equation}
then 

$$
\flat(T_{0})=A_{0}d\theta+B_0{\sin^2\theta}d\phi.
$$
Now $d(\flat(T_0))=0$ is equivalent to  
\begin{equation}\label{T0S}
{\partial_\phi}A_0 - \sin^2\theta{\partial_\theta B_0}-B_0\sin (2\theta)=0.
\end{equation}
On the other hand, 
\begin{equation}\label{T1S2}
T_{1}=A_1\partial_\theta\otimes \partial_\theta+B_1(\partial_\phi\otimes \partial_\theta+\partial_\theta\otimes \partial_\phi)+C_1\partial_\phi\otimes \partial_\phi,\quad A_1,B_1,C_1\in C^{\infty}(S^{2}).
\end{equation}

$R_{T_{1}}=0$ if and only if
$$\left\{\begin{array}{ll}
{\partial_\theta} A_1=0,&
{\partial_\phi} A_{1}-2\sin\theta\cos\theta B_{1}=0,\\[5pt]
{\partial_\theta} B_{1}+\cot\theta B_{1}=0,&
{\partial_\phi} B_{1}+\cot\theta C_{1}-\sin\theta\cos\theta A_{1}=0,\\[5pt]
{\partial_\theta} C_{1}+2\cot\theta C_{1}=0,&
{\partial_\phi} C_{1}+2\cot\theta B_{1}=0.\\[5pt]
\end{array}
\right.
$$

The solution of this system is $A_{1}=a_{1},B_{1}=0,C_{1}=\displaystyle\frac{a_{1}}{\sin^{2}\theta},\quad a_{1} \in\mathbb{R}$.
Finally, let $V\in C^{\infty}(S^{2}).$ Then 
\begin{equation}
d(\flat(i_{dV}T_{1})=a_1d(dV)=0,
\end{equation}
which is true for any value of $a_{1}$. 

In conclusion,  $\Psi=\Psi_{T_{0}}+\Psi_{T_{1}}$ is a fouling map for the mechanical Hamiltonian function $H=\frac{1}{2}(p_{\theta}^{2}+p_{\phi}^{2})+V(\theta,\phi)$ if and only if the coefficients of the vector field  ${T_0}$ satisfy (\ref{T0S})  and the coefficients of the tensor $T_1$ are constant such that $A_1=C_1=a_{1}\in\mathbb{R}$ and $B_1=0.$

In fact, $$\Psi_T(\theta,\phi,p_\theta,p_\phi)=(\theta,\phi,A_0(\theta,\phi)+a_1 p_\theta,B_0(\theta,\phi)\sin^2\theta + a_1p_\phi).$$
Now, using (\ref{T0S}), we deduce that 
 $\Psi^{*}\omega=a_{1}\omega$.

\subsection{The case of polynomial of degree 2}
Now, let $T_{2}$ be a symmetric $(3,0)$-tensor field on $S^2$. From Theorem \ref{theoremPsi}, we deduce that $\Psi=\Psi_{T_0} + \Psi_{T_1}+ \Psi_{T_2}$ is a fouling map for $H$ if and only if 
\begin{enumerate}
    \item[(i)]
    $R_{T_2}=0$,
    \item[(ii)]
    $R_{T_1}=0$,
    \item[(iii)]
$R_{T_0}=i_{dV}T_2,$
    \item[(iv)]
    $d(\flat(i_{dV}(T_{1}))=0.$
\end{enumerate}

In coordinates $(\theta,\phi)$, we have
$$
\begin{array}{rcl}
T_{2}&=&A_{2}\displaystyle\partial_\theta\otimes \partial_\theta\otimes \partial_\theta+B_{2}(\partial_\phi\otimes \partial_\theta\otimes \partial_\theta+\partial_\theta\otimes \partial_\phi\otimes\partial_\theta+\partial_\theta\otimes \partial_\theta\otimes \partial_\phi)\\
&+&C_{2}(\partial_\phi\otimes \partial_\phi\otimes \partial_\theta+\partial_\theta\otimes \partial_\phi\otimes\partial_\phi+\partial_\phi\otimes \partial_\theta\otimes \partial_\phi)
+D_{2}\displaystyle\partial_\phi\otimes \partial_\phi\otimes\partial_\phi,
\end{array}$$
with $A_{2},B_{2},C_{2},D_{2}\in C^{\infty}(S^{2})$. The system of partial differential equations for $R_{T_{2}} = 0$ is given by
\begin{equation}
\left\{
\begin{array}{ll}
    {\partial_\theta} A_{2} = 0 &
    {\partial_\phi}A_{2} - 3\sin\theta\cos\theta \, B_{2} = 0 \\
    {\partial_\theta}B_{2} + \cot\theta \, B_{2} = 0 &
    {\partial_\phi}B_{2} - 2\sin\theta\cos\theta \, C_{2} + \cot\theta \, A_{2} = 0 \\
    {\partial_\theta}C_{2} + 2\cot\theta \, C_{2} = 0 &
    {\partial_\phi}C_{2} - \sin\theta\cos\theta \, D_{2} + 2\cot\theta \, B_{2} = 0 \\
    {\partial_\theta}D_{2} + 3\cot\theta \, D_{2} = 0 &
    {\partial_\phi}D_{2} + 3\cot\theta \, C_{2} = 0
\end{array}\right.
\end{equation}
The only solution to $R_{{T}_{2}}=0$ is $A_{2}=B_{2}=C_{2}=D_{2}=0$, i.e., $T_{2}=0$.

\subsection{The case of polynomial of degree 3}\label{SS3}
Finally, let us consider a symmetric $(4,0)$-tensor field $T_{3}$ on $S^2$. The conditions in Theorem \ref{theoremPsi} reduce to
\begin{enumerate}
    \item[(i)]
    $R_{T_3}=0$ and $R_{T_{2}}=0$,
    \item[(ii)]
    $R_{T_1}=i_{dV}T_{3}$,
    \item[(iii)]
$R_{T_0}=i_{dV}T_2,$
    \item[(iv)]
    $d(\flat(i_{dV}(T_{1}))=0.$
\end{enumerate}
We express $T_{3}$ on spherical coordinates as
$$\begin{array}{rcl}
T_{3}&=&A_{3}\displaystyle\partial_\theta\otimes \partial_\theta\otimes\partial_\theta\otimes\partial_\theta\\
&+&B_{3}(\partial_\phi\otimes\partial_\theta\otimes \partial_\theta\otimes\partial_\theta+\partial_\theta\otimes \partial_\phi\otimes\partial_\theta\otimes\partial_\theta+\partial_\theta\otimes \partial_\theta\otimes \partial_\phi\otimes\partial_\theta+\partial_\theta\otimes\partial_\theta\otimes\partial_\theta\otimes\partial_\phi)\\
&+&C_{3}(\partial_\phi\otimes \partial_\phi\otimes \partial_\theta\otimes\partial_\theta+\partial_\theta\otimes \partial_\phi\otimes\partial_\phi\otimes\partial_\theta+\partial_\phi\otimes \partial_\theta\otimes \partial_\phi\otimes\partial_\theta\\
&+&\partial_\phi\otimes\partial_\theta\otimes\partial_\theta\otimes\partial_\phi+\partial_\theta\otimes\partial_\theta\otimes\partial_\phi\otimes\partial_\phi+\partial_\theta\otimes\partial_\phi\otimes\partial_\theta\otimes\partial_\phi)\\
&+&D_{3}(\partial_\phi\otimes\partial_\phi\otimes \partial_\phi\otimes\partial_\theta+\partial_\phi\otimes \partial_\phi\otimes\partial_\theta\otimes\partial_\phi+\partial_\phi\otimes \partial_\theta\otimes \partial_\phi\otimes\partial_\phi+\partial_\theta\otimes\partial_\phi\otimes\partial_\phi\otimes\partial_\phi)\\
&+&F_{3}\displaystyle\partial_\phi\otimes \partial_\phi\otimes\partial_\phi\otimes\partial_\phi,
\end{array}$$
with $A_{3},B_{3},C_{3},D_{3},F_{3}\in C^{\infty}(S^{2})$.

The system of partial differential equations for $R_{T_{3}} = 0$ is given by
\begin{equation}
\left\{
\begin{array}{ll}
    {\partial_\theta} A_{3} = 0 &
    {\partial_\phi}A_{3} - 4\sin\theta\cos\theta \, B_{3} = 0 \\
    {\partial_\theta}B_{3} + \cot\theta \, B_{3} = 0 &
    {\partial_\phi}B_{3} - 3\sin\theta\cos\theta \, C_{3} + \cot\theta \, A_{3} = 0 \\
    {\partial_\theta}C_{3} + 2\cot\theta \, C_{3} = 0 &
    {\partial_\phi}C_{3} - 2\sin\theta\cos\theta \, D_{3} + 2\cot\theta \, B_{3} = 0 \\
    {\partial_\theta}D_{3} + 3\cot\theta \, D_{3} = 0 &
    {\partial_\phi}D_{3} - \sin\theta\cos\theta \, F_{3} + 3\cot\theta \, C_{3} = 0 \\
    {\partial_\theta}F_{3} + 4\cot\theta \, F_{3} = 0 &
    {\partial_\phi}F_{3} + 4\cot\theta \, D_{3} = 0.
\end{array}\right.
\end{equation}

The solution to $R_{T_{3}}=0$ is $A_{3}=3a_{3},\, B_{3}=0,\, C_{3}=\displaystyle\frac{a_{3}}{\sin^{2}\theta},\, D_{3}=0,\, F_{3}=\displaystyle\frac{3a_{3}}{\sin^{4}\theta}$ with $a_{3}\in\mathbb{R}.$ We rewrite
\begin{equation}\label{eqRT_{3}}
\begin{array}{rcl}
T_{3}&=&3a_{3}\displaystyle\partial_\theta\otimes \partial_\theta\otimes\partial_\theta\otimes\partial_\theta\\[5pt]
&+&\displaystyle\frac{a_{3}}{\sin^{2}\theta}(\partial_\phi\otimes \partial_\phi\otimes \partial_\theta\otimes\partial_\theta+\partial_\theta\otimes \partial_\phi\otimes\partial_\phi\otimes\partial_\theta+\partial_\phi\otimes \partial_\theta\otimes \partial_\phi\otimes\partial_\theta\\[8pt]
&+&\partial_\phi\otimes\partial_\theta\otimes\partial_\theta\otimes\partial_\phi+\partial_\theta\otimes\partial_\theta\otimes\partial_\phi\otimes\partial_\phi+\partial_\theta\otimes\partial_\phi\otimes\partial_\theta\otimes\partial_\phi)\\[5pt]
&+&\displaystyle\frac{3a_{3}}{\sin^{4}\theta}\displaystyle\partial_\phi\otimes \partial_\phi\otimes\partial_\phi\otimes\partial_\phi.
\end{array}
\end{equation}

Proceeding as in Subsection \ref{SS3}, we deduce that $R_{T_{2}}=0$ implies  $T_{2}=0$. Now, if $T_1$ is defined as (\ref{T1S2}),  the condition $R_{T_{1}}=i_{dV}T_{3}$ is equivalent to the following system of differential equations
\begin{equation}\label{eqT_{1}T_{3}}
\left\{
\begin{array}{l}
    {\partial_\theta}A_{1} = 3a_{3}{\partial_\theta}V\\
    \dfrac{1}{\sin^2\theta} \left({\partial_\phi}A_{1} - 2\sin\theta\cos\theta \, B_{1} \right) = \dfrac{a_{3}}{\sin^{2}\theta} {\partial_\phi}V \\
    {\partial_\theta}C_{1} + 2\cot\theta \, C_{1} = \dfrac{a_{3}}{\sin^{2}\theta} {\partial_\theta}V \\
    \dfrac{1}{\sin^2\theta} \left({\partial_\phi}C_{1} + 2\cot\theta \, B_{1}\right) = \dfrac{3a_{3}}{\sin^{2}\theta} {\partial_\phi}V \\
    \dfrac{2}{\sin^2\theta} \left({\partial_\phi}B_{1} + \cot\theta \, C_{1} - \sin\theta\cos\theta \, A_{1} \right) - {\partial_\theta}C_{1} -2\cot\theta \, C_{1} = \dfrac{a_{3}}{\sin^{2}\theta} {\partial_\theta}V \\
    2{\partial_\theta}B_{1} + 2\cot\theta \, B_{1} - \dfrac{1}{\sin^2\theta} \left( {\partial_\phi}C_{1} - 2\sin\theta\cos\theta \, B_{1} \right) = \dfrac{a_{3}}{\sin^{2}\theta}{\partial_\phi}V.
\end{array}\right.
\end{equation}

Finally, let us see the condition $d(\flat(i_{dV}T_{1}))=0$, it is equivalent to
\begin{equation}\label{eqi_{dV}T_{1}}
\partial_{\theta}(B_{1}\partial_{\theta}V+C_{1}\partial_{\phi}V)\sin^{2}\theta+(B_{1}\partial_{\theta}V+C_{1}\partial_{\phi}V)\sin\theta \cos\theta-\partial_{\phi}(A_{1}\partial_{\theta}V+B_{1}\partial_{\phi}V)=0.
\end{equation}

In conclusion, $\Psi_T=\Psi_{T_{0}}+\Psi_{T_{1}}+\Psi_{T_{2}}+\Psi_{T_{3}}$ is a fouling map for the mechanical Hamiltonian function $H=\frac{1}{2}(p_{\theta}^{2}+\displaystyle \frac{1}{\sin^{2}\theta}p_{\phi}^{2})+V(\theta,\phi)$, if and only if the coefficients of $T_0$ satisfy (\ref{T0S}), $T_{2}=0$, $T_{3}$ are of the form (\ref{eqRT_{3}}), and $T_{1}$ satisfies equations (\ref{eqT_{1}T_{3}}) and (\ref{eqi_{dV}T_{1}}).

\begin{example}
\label{example6.1}
Let $V(\theta,\phi)=sin^{2}\theta$. System (\ref{eqT_{1}T_{3}}) reduces to
\begin{equation}
\left\{
\begin{array}{ll}
    {\partial_\theta}A_{1} = 6 a_{3}\sin\theta\cos\theta &
    {\partial_\phi}A_{1} - 2\sin\theta\cos\theta \, B_{1} = 0 \\
    {\partial_\theta}B_{1} + \cot\theta \, B_{1} = 0 &
    {\partial_\phi}B_{1} + \cot\theta \, C_{1} - \sin\theta\cos\theta \, A_{1} = 2 a_{3} \sin\theta\cos\theta \\
    {\partial_\theta}C_{1} + 2\cot\theta \, C_{1} = 2 a_{3}\cot\theta &
    {\partial_\phi}C_{1} + 2\cot\theta \, B_{1} = 0;
\end{array}\right.
\end{equation}
whose solution is $A_{1}=3a_{3}\sin^{2}\theta, \, B_{1}=0, \, C_{1}=a_{3}$. 

On the other hand, equation (\ref{eqi_{dV}T_{1}}) reduces to
\begin{equation}
\sin^{2}\theta \, \partial_{\theta}B_{1}=\partial_{\phi} A_{1},
\end{equation}
which is satisfied by the previous solution. 

Then,  we have that for any real value of $a_{3}$, the map
\begin{equation}
\begin{split}
&\Psi_T(\theta,\phi,p_{\theta},p_{\phi})=(\theta,\phi,P_{\theta},P_{\phi}),\\
&P_{\theta}=A_{0}+a_{3}\sin^{2}\theta p_{\theta}+\frac{1}{6}(a_{3}p_{\theta}^{3}+\frac{a_{3}}{\sin^{2}\theta}p_{\theta}p_{\phi}^{2}),\\
&P_{\phi}=B_{=}+\frac{a_{3}}{3}\sin^{2}\theta p_{\phi}+\frac{1}{6}(a_{3}p_{\theta}^{2}p_{\phi}+\frac{a_{3}}{\sin^{2}\theta}p_{\phi}^{3}),
\end{split}
\end{equation}
with $\displaystyle\frac{\partial B_{0}}{\partial\theta}-\frac{\partial A_{0}}{\partial\phi}=0$, is a fouling map for $H=\displaystyle\frac{1}{2}(p_{\theta}^{2}+\frac{1}{\sin^{2}\theta}p_{\phi}^{2})+\sin^{2}\theta$.

The associated constants of motion (up to the power $2$ of the tensor $L_{\Psi_T}$) are
\begin{equation}
f_{1}=\frac{1}{3} (3 p_\theta^2 + p_\phi^2 \csc^2\theta) + \frac{1}{3} (p_\theta^2 + 3 p_\phi^2 \csc^2\theta) + \frac{8 \sin^2\theta}{3}=\frac{8}{3}H
\end{equation}
and
\begin{equation}
f_{2}=\frac{4}{9} p_\phi^2 p_\theta^2 \csc^2\theta + 
 2 (\frac{1}{6} (p_\theta^2 + 3 p_\phi^2 \csc^2\theta) + \frac{\sin^2\theta}{3})^2 + 
 2 (\frac{1}{6} (3 p_\theta^2 + p_\phi^2 \csc^2\theta) + \sin^2\theta)^2=\frac{20}{9}H^{2}+\frac{2}{9}p_{\phi}^{2};
\end{equation}
which are functionally independent. 
\end{example}

\begin{example}
Now let us consider the potential $V(\theta,\phi)=1-\sin^{2}\theta\cos^{2}\phi$, or equivalently $V(\theta,\phi)=\sin^{2}\phi+\cos^{2}\theta\cos^{2}\phi$.

The solution to system (\ref{eqT_{1}T_{3}}) and equation (\ref{eqi_{dV}T_{1}}) is $A_{1}=a_1 +a_3 \left(3\cos^{2}\theta\cos^{2}\phi+\sin^{2}\phi\right)$, $B_{1}=-2a_3 \cot\theta\,\sin\phi\cos\phi$ and $C_{1}=a_1 \csc^{2}\theta+a_3 \csc^{2}\theta\left(3\sin^{2}\phi+\cos^{2}\theta\cos^{2}\phi\right)$.

So we have that for any real values of $a_{1},a_{3}\in {\Bbb R}$, the map
\begin{equation}
\begin{split}
&\Psi_T(\theta,\phi,p_{\theta},p_{\phi})=(\theta,\phi,P_{\theta},P_{\phi}),\\
&P_{\theta}=A_{0}+\left[a_1 +a_3 \left(3\cos^{2}\theta\cos^{2}\phi+\sin^{2}\phi\right)\right]p_{\theta}-2a_3 \cot\theta\,\sin\phi\cos\phi\,p_{\phi}+\frac{a_3}{2}\,p_{\theta}\left(p_{\theta}^{2}+\csc^{2}\theta\,p_{\phi}^{2}\right)\\
&P_{\phi}=B_{0}-2a_3 \sin\theta\cos\theta\,\sin\phi\cos\phi\,p_{\theta}+\left[
a_1 +a_3 \left(3\sin^{2}\phi+\cos^{2}\theta\cos^{2}\phi\right)\right]p_{\phi}+\frac{a_3}{2}\,p_{\phi}\left(p_{\theta}^{2}+\csc^{2}\theta\,p_{\phi}^{2}\right),
\end{split}
\end{equation}
with $\displaystyle\frac{\partial B_{0}}{\partial\theta}-\frac{\partial A_{0}}{\partial\phi}=0$, is a fouling map for $H=\displaystyle\frac{1}{2}(p_{\theta}^{2}+\frac{1}{\sin^{2}\theta}p_{\phi}^{2})+\sin^{2}\phi+\cos^{2}\theta\cos^{2}\phi$.

The associated constants of motion (up to the power $2$ of the tensor $L_{\Psi_T}$) are
\begin{equation}
f_{1}=2a_1+4a_3\left(\sin^{2}\psi+\cos^{2}\theta\cos^{2}\psi\right)+2a_3\left(p_{\theta}^{2}+\csc^{2}\theta\,p_{\psi}^{2}\right)= 2a_1 +4a_3 H,
\end{equation}
and
\begin{equation}
\begin{aligned}
f_{2}&=2\left[a_1+2a_3\left(\sin^{2}\psi+\cos^{2}\theta\cos^{2}\psi\right)+a_3\left(p_{\theta}^{2}+\csc^{2}\theta\,p_{\psi}^{2}\right)\right]^{2}\\
&+2a_3^{2}\Bigg[\left(\cos^{2}\theta\cos^{2}\psi-\sin^{2}\psi+\frac12\left(p_{\theta}^{2}-\csc^{2}\theta\,p_{\psi}^{2}\right)\right)^{2}+\left(\csc\theta\,p_{\theta}p_{\psi}-2\cos\theta\sin\psi\cos\psi\right)^{2}\Bigg]\\
&=2f_{1}+f_3;
\end{aligned}
\end{equation}
where $f_{3}$ is a constant of motion which is functionally independent to $H$, in fact, $f_1$ and $f_2$ are functionally independent.
\end{example}

\begin{remark}
To the best of our knowledge, in the existing literature there is no example of any non-trivial fouling (or canonoid) transformation for a natural Hamiltonian system on a curved space. In the previous examples, we are presenting non-trivial fouling maps for natural Hamiltonian systems on the 2-sphere. It is worth remarking that the dynamics considered are not geodesic, in fact, the corresponding potential functions are relevant potential, they are degenerate quadratic Neumann potentials \citep{ractiu1981c,dullin2012degenerate}.
\end{remark}

\section{Fouling maps for mechanical Hamiltonian functions on a Riemannian  manifold with a Liouville metric}
\label{sec6}

Now let us consider a generic $2$ dimensional Riemannian manifold $(M,g)$. The metric $g$ is called a Liouville metric \citep{bolsinov1998two} if it can be written in suitable local coordinates $(x, y)$ as
\begin{equation}
g(x,y)= \bigl(f(x) + h(y)\bigr)(dx^2 + dy^2).
\end{equation}
Let us take the open manifold $M=\{(x,y)\in{\Bbb R}^2/x+y>0\}$, the Liouville metric 
$$g=(x+y)(dx^2+dy^2)$$
and the potential function $V(x,y)=3x^2+2xy+3y^2$. We make the coordinate transformation $u=x+y$ and  $v=x-y$, then
$g=\displaystyle\frac{u}{2}(du^2+dv^2)$,
\begin{equation}
g^*=\frac{2}{u}\left(\partial_u\otimes\partial_u+
\partial_v\otimes\partial_v\right)
\end{equation}
and $V=2u^2+v^2$.

Let $T_3$ a symmetric $(4,0)$-tensor on $M$ such that $R_{T_3}=0$. Then 

\begin{equation}
\begin{array}{ll}
T_3(du,du,du,du)=\displaystyle\frac{12 a_3}{u^2},&
T_3(du,du,dv,dv)=\displaystyle\frac{4 a_3}{u^2},\\[8pt]
T_3(dv,dv,dv,dv)=\displaystyle\frac{12 a_3}{u^2},&
T_3(du,du,du,dv)=T_3(du,dv,dv,dv)=0.
\end{array}
\end{equation}
Moreover, since $R_{T_2}=0$, this implies that  $T_2=0$. The solution for equations $R_{T_1}=i_{dV}T_3$ and $d(\flat(i_{dV}T_{1}))=0$ is
\begin{equation}
\begin{aligned}
T_1(du,du)&=12a_3 u+\frac{4a_3 v^2+a_1}{u},\\
T_1(du,dv)&=4a_3 v,\\
T_1(dv,dv)&=4a_3 u+\frac{4a_3 v^2+a_1}{u}.
\end{aligned}
\end{equation}

Finally, the condition $R_{T_0}=i_{dV}T_2=0,$ implies that 
$$\partial_u B_{0}-\partial_v A_{0}=0,$$
where 
$T_{0}=A_{0}\partial_u+B_{0}\partial_v,$ with $A_{0},B_{0}\in C^{\infty}(M).$

So we have that for any real value of $a_{1},a_{3}$, the map
\begin{equation}
\begin{split}
&\Psi_T(u,v,p_u,p_v)=(u,v,P_u,P_v),\\
&P_u=A_{0}+\big(\frac{a_1}{2}+6a_3 u^2+2a_3 v^2\big)p_u+\big(2a_3 uv\big)p_v+\frac{a_3}{u}(p_u^2+p_v^2)p_u,\\
&P_v=B_{0}+\big(2a_3 uv\big)p_u+\big(\frac{a_1}{2}+2a_3 u^2+2a_3 v^2\big)p_v+\frac{a_3}{u}(p_u^2+p_v^2)p_v,
\end{split}
\end{equation}
with $\partial_u B_{0}-\partial_v A_{0}=0$, is a fouling map for $H=\displaystyle\frac{1}{u}(p_{u}^{2}+p_{v}^{2})+2u^{2}+v^{2}$.

The associated constants of motion (up to the power $2$ of the tensor $L_{\Psi_T}$) are
\begin{equation}
f_{1}=a_1+4a_3
\left(
\frac{p_u^2+p_v^2}{u}+2u^2+v^2
\right)=a_1+4a_3H,
\end{equation}
and
\begin{equation}
\begin{aligned}
f_{2}&=\frac{a_1^2}{2}+
\frac{4a_1a_3}{u}(p_u^2+p_v^2)+
8a_1a_3 u^2+
4a_1a_3 v^2+
\frac{10a_3^2}{u^2}(p_u^2+p_v^2)^2+
40a_3^2up_u^2+
24a_3^2up_v^2\\
&+
\frac{16a_3^2v^2}{u}(p_u^2+p_v^2)+
16a_3^2v\,p_up_v+
40a_3^2u^4+
40a_3^2u^2v^2+
8a_3^2v^4\\
&=\frac{1}{2}\left(a_1+4a_3H\right)^2
+
\frac{2a_3^2}{u^2}
\left[
\left(p_u^2-p_v^2+2u^3\right)^2
+
\left(2p_up_v+2u^2v\right)^2
\right].
\end{aligned}
\end{equation}
$f_1$ and $f_2$ are functionally independent.

\section*{Acknowledgements}

This research was initiated while R. Azuaje was at Universidad Autónoma Metropolitana unidad Iztapalapa in Mexico city, with the financial support of the Secretaría de Ciencia, Humanidades, Tecnología e Innovación (SECIHTI) of Mexico through a postdoctoral fellowship under the Estancias Posdoctorales por México 2022 program. The author received subsequent support for the completion of this research from the European Union and the Czech Ministry of Education under project CZ.02.01.01/00/22\_011/0008569 "Czech Technical University - International Postdoc Programme CROP".

J.C. Marrero and E. Padr\'on acknowledge financial support from the Spanish Ministry of Science and Innovation under grant PID2022-137909NB-C22. These authors have been partially supported by Agencia Estatal de Investigaci\'on (Spain) under grant RED2022-134301-TD.

\section*{AI Use Statement}
The author R. Azuaje acknowledges the use of Artificial intelligence (AI)-assisted tools for the preparation of the examples presented in the paper; in particular for performing simplifications of some algebraic and symbolic expressions, and solving some systems of differential equations as well. All results obtained with the assistance of these tools were independently reviewed and verified by the author. The author takes full responsibility for the accuracy, originality, and integrity of the final manuscript.

\appendix

\section{The proof of Proposition \ref{WT}}\label{Appendix}

In this Appendix we will show the proof of Proposition \ref{WT}. 

{\bf \noindent Proposition \ref{WT}. }{\it 
Let $T_k$ be a symmetric $(k+1,0)$-tensor field on $Q,$ with $k\geq  1$. If   $H:T^*Q\to \mathbb{R}$ is a mechanical Hamiltonian function, with kinetic energy induced by the semi-Riemannian metric $g$ and  with potential function $V:Q\to \mathbb{R}$ then, for all $f\in C^\infty(Q)$ and $Y\in {\mathfrak X}(Q),$
\begin{enumerate}
\item[(a)] $\omega_{T_k}(X_H, X_{f\circ \pi_Q})=k(i_{df}{T_k})^{hom}=kX_{f\circ \pi_Q}({T_k}^{hom}),$
\item[(b)]
$\omega_{T_k}(X_H,X_{Y^l})= (k{\mathcal L}_YT_k -i_{\flat(Y)}R_{T_k})^{hom} +(i_{dV}i_{\flat(Y)}{T_k})^{hom},$
\end{enumerate}
where $\omega_{T_k}=\Psi_{T_k}^*(\omega_Q)$ and  $R_{T_k}$ is the $(k+2,0)$-tensor given by 
\begin{equation}
R_{T_k}(\alpha_0,\alpha_1,\dots,\alpha_{k+1})=\sum_{i=1}^{k+1} (\nabla_{\sharp\alpha_i}{T_k})(\alpha_1,\dots,\alpha_{i-1} ,\alpha_{0},\alpha_{i+1},\dots, \alpha_{k+1})-(\nabla_{\sharp\alpha_{0}}{T_k})(\alpha_1, \dots ,\alpha_{k+1}).
\end{equation}
Here $\nabla$ denotes the Levi-Civita connection of $(Q,g).$ 
\begin{proof}
Let $(q^i)$ be coordinates on $Q$ and $(q^i,p_i)$ the corresponding coordinates  on $T^*Q.$  Then, if the local expression of $T$  is given by $T_k(dq^{i_1},\dots, dq^{i_{k+1}})=T^{i_1\dots i_{k+1}}$  with $T^{i_1\dots i_{k+1}}$ local functions on $Q,$ the corresponding bundle map is given by 
$$\Psi_{T_k}(q^i,p_i)=(q^i,\frac{1}{k!}g_{ij}T^{i_1\dots i_kj}\prod_{s=1}^kp_{i_s})=(q^i, \frac{1}{k!}\widetilde{T}_i^{i_1\dots i_kj}\prod_{s=1}^kp_{i_s}),$$
where $\widetilde{T}_i^{i_1\dots i_{k}}=g_{ij}T^{i_1\dots i_{k}j}$ and $\displaystyle\prod_{s=1}^kp_{i_s}$ denotes the homogeneous polynomial of degree $k$ 
$$\prod_{s=1}^kp_{i_s}=p_{i_1}\dots p_{i_k}.$$
The functions  $\widetilde{T}_i^{i_1\dots i_k}$ are the coefficients of the $(k,1)$-tensor  $\widetilde{T_k}$ deduced from the $(k+1,0)$-tensor $T_k$ as in (\ref{Tt})
Therefore, 
\begin{equation}
\label{wT}
\omega_{T_k}=\displaystyle\frac{1}{k!} \partial_{q^j} (\widetilde{T}_i^{i_1\dots i_k})\prod_{s=1}^kp_{i_s}dq^i\wedge dq^j + \frac{1}{(k-1)!}\widetilde{T}_i^{i_1\dots i_k}\prod_{s=2}^kp_{i_s} dq^i\wedge dp_{i_1}.
\end{equation}

$(a)$ First we will prove that 
     \begin{equation}\label{10}
     \omega_{T_k}(X_H,X_{f\circ \pi_Q})=k(i_{df}{T_k})^{hom}.
     \end{equation}
Note that we have supposed that ${T_k}$ is symmetric and therefore $i_{df}{T_k}$ is symmetric as well.

Now,  using (\ref{hamiltonian}) and (\ref{wT}), we have that 
\begin{equation}\label{2}
\langle i_{X_H}\omega_{T_k}, \partial_{p_j}\rangle=\frac{1}{(k-1)!}  T^{i_1i_2\dots i_{k}j}\prod_{s=1}^kp_{i_s}.
\end{equation}
So, from (\ref{26'-}) and since $X_{f\circ \Pi_Q}=\partial_{q^j}f\partial_{p_j},$  we deduce (\ref{10}). In addition, using item $(a)$ in Lemma \ref{lemma1}, we have that 
$$\omega_{T_k}(X_H,X_{f\circ \pi_Q})=k(i_{df}{T_k})^{hom}=kX_{f\circ \pi_Q}({T_k}^{hom}).$$ 

$(b)$ From (\ref{hamiltonian}) and (\ref{wT}),
\begin{equation*}
\begin{array}{rcl}
\langle i_{X_H}\omega_{T_k}, \partial_{q^j}\rangle &=& \displaystyle\frac{1}{k!}\left(g^{mi}p_m\left(\partial_{q^j}\widetilde{T}_i^{i_1 \dots i_k}-\partial_{q^i}\widetilde{T}_j^{i_1\dots i_kl}\right)-\displaystyle\frac{1}{(k-1)!}\widetilde{T}_j^{ri_2\dots i_k}g^{si_{k+1}}\Gamma_{sr}^{i_1}\displaystyle\prod_{s=1}^{k+1}p_{i_s} \right. \\
&+&\left. \displaystyle\frac{1}{(k-1)!}\widetilde{T}_j^{i_1\dots i_k}\partial_{q^{i_1}}V\prod_{s=2}^kp_{i_s}\right)
\end{array}
\end{equation*}
and therefore, using (\ref{24'}) and (\ref{2}), we deduce that 
$$\begin{array}{rcl}\omega_{T_k}({X_H}, X_{Y^\ell})&=& \left(\displaystyle\frac{1}{k!}Y^jg^{i_{k+1}r} \left(\partial_{ q^j}\widetilde{T}_r^{i_1\dots i_k}-\partial_{q^r}\widetilde{T}_j^{i_1\dots i_k}- k\widetilde{T}^{li_2\dots i_k}_j\Gamma_{rl}^{i_{1}}\right)\right.
\\[12pt]&&-\left.\displaystyle\frac{1}{(k-1)!}T^{i_1\dots i_{k}j} \partial_{q^j}Y^{i_{k+1}}\right)\displaystyle\prod_{s=1}^{k+1}p_{i_s} +\displaystyle\frac{1}{(k-1)!}Y^j\widetilde{T}_j^{i_1\dots i_k}\partial_{ q^{i_1}}V\prod_{s=2}^kp_{i_s}.\end{array}$$
Now, 
$$\langle i_{dV}\widetilde{T_k},Y\rangle^{hom}=\frac{1}{(k-1)!}Y^j\widetilde{T}_j^{i_1\dots i_k}\partial_{q^{i_1}}V\prod_{s=2}^kp_{i_s}.$$

On the other hand, 
$$\langle (\nabla_{\sharp dq^{i_{k+1}}}\widetilde{T_k})(dq^{i_1},\dots, dq^{i_k}),Y\rangle=
Y^jg^{i_{k+1}r}(\partial_{q^r}\widetilde{T}_j^{i_1\dots i_k}-\widetilde{T}^{i_1\dots i_k}_l\Gamma_{rj}^l+\Gamma^{i_m}_{rl}\widetilde{T}_j^{li_1\dots i_{m-1}i_{m+1}\dots i_k})$$

$$\langle({\mathcal L}_Y\widetilde{T_k})
(dq^{i_1},\dots ,dq^{i_k}),\sharp dq^{i_{k+1}}\rangle=g^{i_{k+1}r}(Y^j\partial_{q^j}\widetilde{T}_{r}^{i_1\dots i_k}+\widetilde{T}_{j}^{i_1\dots i_k}\partial_{q^{r}}Y^{j}-\partial_{q^{l}}Y^{i_m}\widetilde{T}_r^{li_1\dots i_{m-1}i_{m+1}\dots  i_k}).$$

$$\langle \widetilde{T_k}(dq^{i_1},\dots, dq^{i_k}), \nabla_{\sharp dq^{i_{k+1}}} Y\rangle=\widetilde{T}^{i_1\dots i_{k}}_ig^{{i_{k+1}}r}(\partial_{q^r}Y^i+Y^j\Gamma_{rj}^i).$$
Therefore, 
\begin{equation}\label{HY}\begin{array}{rcl}\omega_{T_k}({X_H}, X_{Y^\ell})&=&\displaystyle\frac{1}{k!}\left(-\langle (\nabla_{\sharp dq^{i_{k+1}}}(\widetilde{T_k})(dq^{i_1},\dots, dq^{i_k}),Y\rangle +\langle({\mathcal L}_Y\widetilde{T_k})
(dq^{i_1},\dots ,dq^{i_k}),\sharp dq^{i_{k+1}}\rangle \right.\\[5pt]&&\left.-\langle \widetilde{T_k}(dq^{i_1},\dots, dq^{i_k}), \nabla_{\sharp dq^{i_{k+1}}} Y\rangle\right )\displaystyle\prod_{s=1}^{k+1}p_{i_s} + \langle i_{dV}\widetilde{T_k},Y\rangle^{hom}
\end{array}\end{equation}

Moreover, 
$$
\begin{array}{rcl}
({\mathcal L}_YT_k)(\alpha_1,\dots \alpha_{k+1})&=&Y(\langle \widetilde{T_k}(\alpha_1,\dots, \alpha_k), \sharp \alpha_{k+1}\rangle -\displaystyle\sum_{i=1}^k\langle \widetilde{T_k}(\alpha_1,\dots, {\mathcal L}_Y\alpha_i,\dots, \alpha_k),\sharp \alpha_{k+1}\rangle-T_k(\alpha_1,\dots, \alpha_k,{\mathcal L}_Y\alpha_{k+1})\\&=& \langle {\mathcal L}_Y\widetilde{T}(\alpha_1,\dots, \alpha_k),\sharp \alpha_{k+1}\rangle-\langle {\mathcal L}_Y(\widetilde{T_k}(\alpha_1,\dots, \alpha_k)),\sharp \alpha_{k+1}\rangle + Y(\langle \widetilde{T_k}(\alpha_1,\dots, \alpha_k),\sharp \alpha_{k+1}\rangle)\\[5pt]&&-T(\alpha_1,\dots, \alpha_k,{\mathcal L}_Y\alpha_{k+1})\\[5pt] &=& \langle {\mathcal L}_Y\widetilde{T_k}(\alpha_1,\dots, \alpha_k),\sharp \alpha_{k+1}\rangle +\langle \widetilde{T_k}(\alpha_1,\dots, \alpha_k),{\mathcal L}_Y\sharp \alpha_{k+1}\rangle -T_k(\alpha_1,\dots, \alpha_k,{\mathcal L}_Y\alpha_{k+1}),
\end{array}$$
with $\alpha_i\in \Omega^1(Q).$ Here we have used that 
$$\langle {\mathcal L}_Y\alpha,Z\rangle + \langle \alpha,{\mathcal L}_YZ\rangle=Y\langle \alpha,Z\rangle, \;\;  \mbox{ for all }Y,Z\in {\mathfrak X}(Q)\mbox{ and } \alpha\in \Omega^1(Q).$$
Thus, 
$\begin{array}{rcl}
\langle {\mathcal L}_Y\widetilde{T_k}(\alpha_1,\dots, \alpha_k),\sharp \alpha_{k+1}\rangle- \langle \widetilde{T_k}(\alpha_1,\dots, \alpha_k),\nabla_{\sharp\alpha_{k+1}}Y\rangle&=&({\mathcal L}_YT_k)(\alpha_1,\dots \alpha_{k+1})-\langle \widetilde{T_k}(\alpha_1,\dots, \alpha_k),{\mathcal L}_Y\sharp \alpha_{k+1}\rangle \\[5pt]&&\kern-35pt+T_k(\alpha_1,\dots, \alpha_k,{\mathcal L}_Y\alpha_{k+1})-\langle\widetilde{T_k}(\alpha_1,\dots, \alpha_k),\nabla_{\sharp\alpha_{k+1}}Y\rangle
\\[5pt] &\kern-50pt =&\kern-30pt
({\mathcal L}_YT_k)(\alpha_1,\dots, \alpha_{k+1})
+{T_k}(\alpha_1,\dots, \alpha_k,{\mathcal L}_Y \alpha_{k+1}-\nabla_Y {\alpha_{k+1}})
\end{array}$

Since ${\mathcal L}_Y-\nabla_Y:\Omega^1(Q)\to \Omega^1(Q)$  is a $(1,1)$-tensor, $\langle {\mathcal L}_Y\widetilde{T_k}(\alpha_1,\dots, \alpha_k),\sharp \alpha_{k+1}\rangle- \langle \widetilde{T_k}(\alpha_1,\dots, \alpha_k),\nabla_{\sharp\alpha_{k+1}}Y\rangle$ defines a $(k+1,0)$-tensor which is not symmetric.  Its symmetrized form is 
$$({\mathcal L}_YT_k)(\alpha_1,\dots \alpha_{k+1})
+\frac{1}{k+1}\sum_{i=1}^{k+1}{T_k}(\alpha_1,\dots, {\mathcal L}_Y \alpha_{i}-\nabla_Y \alpha_{i},\dots, \alpha_{k+1}),$$
that is, 
$$\frac{1}{k+1}(k({\mathcal L}_YT_k)+\nabla_YT_k)(\alpha_1,\dots, \alpha_{k+1}).$$

On the other hand,
\begin{equation}\label{RR1}(\nabla_{\sharp\alpha_{k+1}}{T_k})(\alpha_{1},\dots, \alpha_{k},\flat(Y))
=\langle ({\nabla}_{\sharp\alpha_{k+1}}\widetilde{T_k})(\alpha_{1},\dots, \alpha_{k}),Y\rangle.\end{equation}
In fact, 
$$
\begin{array}{rcl}
(\nabla_{\sharp\alpha_{k+1}}{T_k})(\alpha_{1},\dots, \alpha_{k},\flat(Y))&=&\sharp\alpha_{k+1}(T_k(\alpha_{1},\dots, \alpha_{k},\flat(Y)))-\displaystyle\sum_{i=1}^{k}{T_k}(\alpha_{1},\dots, \nabla_{\sharp\alpha_{k+1}}\alpha_i,\dots, \alpha_{k},\flat(Y))\\[5pt]&&-{T_k}(\alpha_{1},\dots, \alpha_{k},\nabla_{\sharp\alpha_{k+1}},\flat(Y))\\[5pt]&=&\sharp\alpha_{k+1}\langle \widetilde{T_k}(\alpha_{1},\dots, \alpha_{k}),Y\rangle-\displaystyle\sum_{i=1}^{k}\langle\widetilde{T}(\alpha_{1},\dots, \nabla_{\sharp\alpha_{k+1}}\alpha_i,\dots \alpha_{k}),Y\rangle\\[5pt]&&-\langle\widetilde{T}(\alpha_{1},\dots, \alpha_{k}),\nabla_{\sharp\alpha_{k+1}}Y\rangle\\[5pt]&=&
\langle(\nabla_{\sharp\alpha_{k+1}}\widetilde{T_k})(\alpha_{1},\dots, \alpha_{k}),Y\rangle-\langle\nabla_{\sharp\alpha_{k+1}}(\widetilde{T_k}(\alpha_{1},\dots, \alpha_{k})),Y\rangle \\[5pt]&&
+\sharp\alpha_{k+1}\langle \widetilde{T_k}(\alpha_{1},\dots, \alpha_{k}),Y\rangle -\langle\widetilde{T_k}(\alpha_{1},\dots, \alpha_{k}),\nabla_{\sharp\alpha_{k+1}}Y\rangle\\[5pt]&=&
\langle(\nabla_{\sharp\alpha_{k+1}}\widetilde{T_k})(\alpha_{1},\dots, \alpha_{k}),Y\rangle.
\end{array}$$

Here we have used that 
$$\langle \nabla_Y\alpha,Z\rangle + \langle \alpha,\nabla_Y Z\rangle=Y\langle \alpha,Z\rangle, \;\;  \mbox{ for all }Y,Z\in {\mathfrak X}(Q)\mbox{ and } \alpha\in \Omega^1(Q).$$

$(\nabla_{\sharp\alpha_{k+1}}{T_k})(\alpha_{1},\dots, \alpha_{k},\flat(Y))$ defines a $(k+1,0)$-tensor but it is not symmetric. Its symmetrized tensor is  
$$\frac{1}{k+1}\sum_{i=1}^{k+1} (\nabla_{\sharp\alpha_{i}}{T_k})(\alpha_{1},\dots,\alpha_{i-1},\flat(Y),\alpha_{i+1},\dots, \alpha_{k+1})$$

So, the symmetrized form of the $(k+1,0)$-tensor 
$$\displaystyle\frac{1}{k!}\left(-\langle (\nabla_{\sharp \alpha_{k+1}}(\widetilde{T_k})(\alpha_1,\dots, \alpha_k),Y\rangle +\langle({\mathcal  L}_Y\widetilde{T_k})(\alpha_1,\dots ,\alpha_k),\sharp \alpha_{k+1}\rangle - \langle \widetilde{T_k}(\alpha_1,\dots, \alpha_k), \nabla_{\sharp \alpha_{k+1}} Y\rangle\right)$$
is 
$$\frac{1}{(k+1)!}\left(-\sum_{i=1}^{k+1} (\nabla_{\sharp\alpha_{i}}{T_k})(\alpha_{1},\dots,\alpha_{i-1},\flat(Y),\alpha_{i+1},\dots, \alpha_{k+1}) + (k({\mathcal L}_YT_k)+\nabla_YT_k)(\alpha_1,\dots, \alpha_{k+1})\right)$$

Finally, using (\ref{HY}),  we conclude that 
$$\begin{array}{rcl}\omega_{T_k}({X_H}, X_{Y^\ell})&=&-(i_{\flat(Y)}R_{T_k})^{hom} + k({\mathcal L}_YT_k)^{hom} +  (i_{\flat(Y)}i_{dV}{T_k})^{hom}.
\end{array}$$
\end{proof}

\bibliography{refs} 
\bibliographystyle{unsrt} 

\end{document}